\newtheorem{theorem}{Theorem}
\newtheorem*{theorem*}{Theorem}
\newtheorem{lemma}{Lemma}
\theoremstyle{definition}
\newtheorem{claim}{Claim}
\newcommand{\nats}[0]{ \mathbb{N} }
\newcommand{\epsi}[0]{ \varepsilon }
\newcommand{\reals}[0]{ \mathbb{R}^+ }
\newcommand{\ff}[1]{ f \left( #1 \right) }
\newcommand{\fracc}[2]{ #1 / #2 }
\DeclareMathOperator*{\argmax}{arg\,max}
\newcommand{\opt}{\textsc{OPT}\xspace}
\newcommand{\oh}[1]{O\left( #1 \right)}
\newcommand{\ratio}{\left(\fracc{1}{(4c)} - \epsi \right)}
\newcommand{\ratioOne}{\left(\fracc{1}{4} - \epsi \right)}
\newcommand{\altratio}{\left( \frac{1}{1 + c + 1/(k^3 -1)} \right)\left( 1 - 1/e - (2c)/(ke) - c^2/(k^2e) \right)}
\newcommand{\algOne}{$\textsc{QuickStream}_c$\xspace}
\newcommand{\algOneone}{$\textsc{QuickStream}_1$\xspace}
\newcommand{\algTwo}{\textsc{BoostRatio}\xspace}
\newcommand{\uni}{\mathcal{N}}
\newcommand{\mon}{\textsc{SMCC}\xspace}
\newcommand{\ppass}{\textsc{P-Pass}\xspace}
\newcommand{\sstream}{\textsc{SieveStream++}\xspace}
\newcommand{\ck}{\textsc{C\&K}\xspace}
\newcommand{\greedy}{\textsc{Greedy}\xspace}
\newcommand{\lazy}{\textsc{LTL}\xspace}
\newcommand{\qs}{$\textsc{QuickStream}$\xspace}
\newcommand{\qss}{$\textsc{QuickSingleton}$\xspace}
\newcommand{\altqs}{$\textsc{QuickStreamLargeK}$\xspace}
\newcommand{\qsbr}{$\textsc{QS+BR}$\xspace}
\newcommand{\qsmem}{\oh{ ck \log(k) \log( 1 / \epsi )}}
\newcommand{\qsquery}{\lceil n / c \rceil + c}
\begin{document}
\title{Quick Streaming Algorithms for Maximization of Monotone Submodular Functions in Linear Time}
\author{Alan Kuhnle\thanks{Florida State University. Correspondence to \texttt{kuhnle@cs.fsu.edu}.}}
\date{September 10, 2020}
\maketitle
\begin{abstract}
We consider the problem of monotone, submodular maximization over a ground set of size $n$ subject to cardinality constraint $k$. For this problem, we introduce the first deterministic algorithms with linear time complexity; these algorithms are streaming algorithms. Our single-pass algorithm obtains a constant ratio in $\lceil n / c \rceil + c$, for any $c \ge 1$. In addition, we propose a deterministic, multi-pass streaming algorithm with a constant number of passes that achieves nearly the optimal ratio with linear query and time complexities. We prove a lower bound that implies no constant-factor approximation exists using $o(n)$ queries, even if queries to infeasible sets are allowed. An empirical analysis demonstrates that our algorithms require fewer queries (often substantially less than $n$) yet still achieve better objective value than the current state-of-the-art algorithms, including single-pass, multi-pass, and non-streaming algorithms.
\end{abstract}
\section{Introduction} \label{sec:intro}
\begin{table*}[t] \caption{State-of-the-art algorithms for \mon in terms of time complexity.  } \label{table:cmp}
\begin{center} \scriptsize
\begin{tabular}{llllll} \toprule 
Reference         & Passes & Ratio & Memory & Queries & Time \\
  \midrule
  \lazy \citep{Mirzasoleiman2014} & $k$ & $1 - 1/e - \epsi$ & $O(n)$ & $n \log( 1 / \epsi)$ & $O(n)$ \\
  \ppass \citep{Jakub2018} & $O( 1 / \epsi )$ & $1 - 1/e - \epsi$ & $O( k \log (k) / \epsi )$ & $O( n \log (k) / \epsi^2 ) $ & $O(n \log k)$ \\ 
  \sstream \citep{Kazemi2019} & 1 & $1/2 - \epsi$ & $O( k / \epsi )$ & $O(n \log(k) / \epsi )$ & $O(n \log k)$ \\
  \ck \citep{Chakrabarti2015} & 1 & 1/4 & $O(k)$ & $2n$ & $O(n \log k)$  \\ \midrule
  $\qs_c$, $c \ge 1$ (Theorem \ref{thm:online})        & 1 &$1/(4c) - \epsi$ & $\qsmem$ & $\lceil n / c \rceil + c$ & $O(n)$ \\
  \qsbr (Theorem \ref{thm:optimal-det})          & $O(1/ \epsi)$ &$1 - 1/e - \epsi$ & $O(k\log (k))$ & ${O(n/{\epsi})}$ & $O(n)$ \\ \bottomrule
\end{tabular}
\end{center}
\end{table*}k
A nonnegative, set function $f:2^{\mathcal U} \to \reals$, where ground set $\mathcal U$ is of 
size $n$, is \textit{submodular}
if for all $S \subseteq T \subseteq \mathcal U$, $u \in \mathcal U \setminus T$, 
$\ff{ T \cup \{ u \} } - \ff{ T } \le \ff{ S \cup \{u \} } - \ff{S}$
and
\textit{monotone} if $f(A) \le f(B)$ if $A \subseteq B$.
Intuitively,
submodularity captures a natural diminishing returns property that arises in many
machine learning applications, such as
viral marketing \citep{Kempe2003}, 
network monitoring \citep{Leskovec2007},
sensor placement \citep{Krause2007},
video summarization \citep{Mirzasoleiman2018}, 
and MAP Inference
for Determinantal Point Processes \citep{Gillenwater2012}.

A well-studied NP-hard optimization problem
in this context is submodular maximization subject to a cardinality constraint (\mon):
$\argmax_{ |S| \le k } f(S),$
where the cardinality
constraint $k$ is an input parameter and the function $f$ is submodular and monotone.
A simple greedy procedure \citep{Nemhauser1978} achieves approximation
ratio of $1 - 1/e \approx 0.632$ for SMCC in $O( kn )$ time; this ratio
is optimal under the value query model \citep{Nemhauser1978a}.
In the value query model, the function $f$ is provided to an algorithm as a value oracle,
which when queried with set $S$ returns $f(S)$ in a single
operation that requires $O(1)$ time. In this work, the time
complexity of an algorithm is measured in terms of the number
of arithmetic operations and number of oracle queries.

For $k = \Omega (n)$,
the standard greedy algorithm has $\Omega( n^2 )$ time complexity,
which is prohibitive on modern instance sizes. Further, loading
the entire ground set into memory may be impossible. Therefore,
much effort has gone into the design of algorithms with lower time
complexity \citep{Badanidiyuru2014,Mirzasoleiman2014,Buchbinder2015a,
Kuhnle2019,Crawford2020}; and into streaming algorithms \citep{Gomes2010,Badanidiyuru2014a,Chakrabarti2015}.
In this context, a \emph{streaming algorithm}\footnote{Formally, this is the semi-streaming model since $k$ could be large relative to $n$. In this work, we will assume each element of the ground set requires $O(1)$ space.} accesses elements by one or more sequential passes 
through the ground set and stores at most $O(k \log(n))$ elements in memory.

Several randomized approximation algorithms
\citep{Mirzasoleiman2014,Buchbinder2015a,Fahrbach2018}
have been designed that require $O(n)$ time, independent of $k$.
However, the ratios of these algorithms hold only in expectation, 
which is undesirable for applications in which a good solution is required with high probability.
Furthermore, these algorithms are not streaming algorithms and
require the entire ground set to be loaded into memory. 
Indeed,
every deterministic or streaming algorithm with constant ratio  
that has been described in the literature
requires $\Omega( n \log k )$ time.
This statement remains true if ``deterministic'' is replaced
by ``with high probability'' (that is, probability that converges
to $1$ as $n \to \infty$).
Moreover, every deterministic or streaming algorithm
requires $\Omega(n \log k)$ queries to the value oracle,
except for the single-pass streaming algorithm 
of \citet{Chakrabarti2015}, which obtains a ratio of $1/4$ 
with $2n$ oracle queries and $O(n \log k)$ arithmetic
operations.




\paragraph{Contributions}
In this work, we propose
    the first deterministic, streaming algorithms for \mon that have linear time complexity 
    in the size $n$ of the ground set. The first algorithm is a single-pass streaming algorithm that obtains
    a constant ratio, and the second is a multi-pass streaming algorithm that obtains nearly the optimal ratio. Specifically:
\begin{itemize}
  \item We provide a linear-time, single-pass algorithm \qs (Section \ref{sec:qs} and Appendix \ref{apx:qs}), which achieves a constant ratio of while
    making at most $\lceil n/c \rceil + c$ queries to the value oracle for $f$, for any $c \ge 1$.
    This is the lowest query complexity\footnote{The query complexity of an algorithm is the total number of queries made to the value oracle for $f$ and is upper-bounded by the time complexity.} of any constant factor algorithm, which is
    important as the cost to evaluate the function $f$ may be expensive. 
    The following theorem summarizes the guarantees for \qs.
    \begin{theorem} \label{thm:online}
  Let $c \ge 1$ be an integer, and let $\epsi > 0$.
  There exists a deterministic, single-pass streaming algorithm that makes at most
  $\qsquery$ queries,
  has memory complexity $\qsmem$ has 
  approximation ratio at least $1 / (4c) - \epsi$ for \mon,
  and the ratio converges to $(1 - 1/e)/(c + 1)$ as $k \to \infty$.
  Further, the time complexity of the algorithm is $O(n)$.
\end{theorem}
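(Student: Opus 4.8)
The plan is to prove the theorem by analyzing the single-pass algorithm \algOne described in Section~\ref{sec:qs} (full pseudocode in Appendix~\ref{apx:qs}). Recall its structure: \algOne makes one pass over $\mathcal U$, grouping the incoming elements into consecutive blocks of size $c$ (with one possibly shorter final block); for each block it issues exactly one oracle query, namely the marginal gain of the whole block with respect to a running candidate set, and uses this value --- together with a geometric grid of guesses for $f(O)$ (with $O$ an optimal solution), maintained online from the largest singleton value seen so far --- to update a family of candidate solutions, each of size at most $k$. The proof then splits cleanly into (a) the complexity bounds and (b) the approximation ratio.

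\textbf{Complexity bounds.} The query bound $\qsquery$ is read off directly from the block structure: there are $\lceil n/c\rceil$ blocks, each costing a single query for the batch marginal, plus $O(c)$ additional queries charged to bootstrapping the threshold grid and to handling the short final block. The memory bound $\qsmem$ follows because each retained candidate stores $O(k)$ elements and there are $O(\log k\,\log(1/\epsi))$ live guesses in the grid (with the live block buffer contributing the remaining $c$ factor). For the time bound --- the one place we must beat the $O(n\log k)$ of prior streaming algorithms --- the crucial observation is that \algOne does only $O(1)$ arithmetic operations and $O(1)$ oracle queries \emph{per block}, touching only the current ``frontier'' candidate rather than scanning the whole grid on every element; the $O(\log k\,\log(1/\epsi))$-sized bookkeeping for the grid changes only when the largest singleton value increases, and each such change can be amortized as $O(1)$ against the blocks it affects, so the total remains $O(n)$. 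Summing over the $\lceil n/c\rceil$ blocks gives time $O(n)$.

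\textbf{Approximation ratio.} This is the heart of the argument and I expect it to be the main obstacle. Fix an optimal solution $O$ with $|O|\le k$. By construction the grid contains a guess $v$ with $(1-\epsi)f(O)\le v\le f(O)$; let $A$ be the candidate associated with the threshold derived from $v$, and write $A^{(B)}$ for the state of $A$ when block $B$ was processed. Consider two cases. (i) If $A$ filled up (reached cardinality $k$ and was frozen): since every block admitted to $A$ cleared a threshold proportional to the current residual $v-f(A)$ divided by the remaining budget, a telescoping argument in the style of the classical greedy analysis --- the residual shrinks by a factor $\bigl(1-|B|/k\bigr)$ at each admitted block --- yields $\ff{A}\ge\bigl(1-(1-1/k)^{k}\bigr)\cdot\Theta(v)$, which tends to $(1-1/e)$ times the appropriate fraction of $f(O)$ as $k\to\infty$, and carrying the block size $c$ through the recursion is exactly what produces the $c+1$ in the denominator. (ii) If $A$ was never frozen: then every block $B$ with $B\cap O\ne\emptyset$ was rejected, so $\ff{A^{(B)}\cup B}-\ff{A^{(B)}}<|B|\tau$ for the corresponding threshold $\tau$; since $A^{(B)}\subseteq A$, submodularity gives $\ff{A\cup e}-\ff{A}<|B|\tau$ for every $e\in(B\cap O)\setminus A$, and summing over the at most $k$ elements of $O\setminus A$ together with monotonicity bounds $f(O)-\ff{A}\le\ff{A\cup O}-\ff{A}<ck\tau$, i.e.\ $\ff{A}\ge f(O)-ck\tau$. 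Balancing the two cases by the choice of $\tau$, and then absorbing the $(1+\epsi)$-granularity of the grid and the block-batching slack, gives the stated bound $1/(4c)-\epsi$ for general $k$ and its improvement to $(1-1/e)/(c+1)$ as $k\to\infty$. I expect the delicate point to be case (ii) when the threshold is allowed to track the residual rather than being fixed, since then the quantities $v-\ff{A^{(B)}}$ and $k-|A^{(B)}|$ both change monotonically and the rejection inequalities must be combined carefully; a fallback is to run a fixed-threshold variant for this case at the cost of the weaker general constant.

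\textbf{Conclusion.} Since every candidate in the grid has size at most $k$, it is feasible, so outputting the best candidate gives a feasible solution whose value is at least $\ff{A}$ for the good guess $v$; combined with the bounds above this establishes the ratio, and the complexity claims were shown in part (a). This proves the theorem.
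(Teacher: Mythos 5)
Your proposal analyzes a different algorithm from the one the paper actually uses, and the algorithm you describe cannot meet the stated query bound. You posit a sieve-style scheme: a geometric grid of guesses for $\ff{O}$ bootstrapped from the maximum singleton, a \emph{family} of feasible candidates of size at most $k$, and thresholds proportional to the residual $v - \ff{A}$. Two concrete problems. First, maintaining the maximum singleton value already costs on the order of $n$ queries (or forces you back to per-element evaluation), and distinct live guesses carry distinct candidate sets, so a single batch marginal-gain query per block cannot update all of them --- the marginal of a block with respect to one candidate says nothing quantitative about its marginal with respect to another. This is exactly why \sstream needs $O(n \log(k)/\epsi)$ queries, and your amortization sketch (``touching only the current frontier candidate'') does not repair it: if only one candidate is ever queried per block, the other grid entries are never correctly maintained and case (ii) of your ratio analysis collapses. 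Second, the paper's own lower-bound discussion (Section \ref{apx:thm2}) stresses that beating $n$ queries requires querying \emph{infeasible} sets; an architecture built around feasible candidates of size at most $k$ plus a guess grid is precisely the prior-work design the theorem is meant to improve upon.

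The paper's actual route avoids guessing $\opt$ entirely. The algorithm \algOne keeps a \emph{single}, deliberately infeasible set $A$ of size $O(ck\log k)$, admits a block $C$ iff $\ff{A\cup C}-\ff{A}\ge \ff{A}/k$ (a self-normalizing threshold referencing only the current solution value), periodically deletes the oldest half of $A$, and at the end extracts the last $ck$ elements added, partitioned into $c$ feasible pieces. The analysis shows: $\ff{A}$ never decreases despite deletions (Lemma \ref{lemm:increasing}); deletions cost only a $1+1/(k^\ell-1)$ factor (Lemma \ref{lemm:aplus}); the rejection rule forces $\opt \le (2 + o(1))\ff{A_{n+1}}$ (Lemma \ref{lemm:Agood}); and the last $k$ blocks retain half of $\ff{A_{n+1}}$ (Lemma \ref{lemm:Aprime}), giving $1/(4c)-\epsi$ after reducing $c>1$ to $c=1$ via the block-level function $g$. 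The remaining claims of Theorem \ref{thm:online} come from two further variants: $\qss_c$ handles $k=1$, and $\altqs_c$ yields the ratio converging to $(1-1/e)/(c+1)$ for large $k$. None of this machinery appears in your proposal, so the approximation argument you outline does not establish the theorem.
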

We also show a lower bound of $\Omega( n / k )$ 
on the time complexity to obtain a constant ratio (Section \ref{apx:thm2}). 
  \item We propose a multi-pass algorithm \qsbr (Section \ref{sec:brlarge}), which achieves nearly the optimal ratio $1 - 1/e - \epsi$ 
    in a constant number of passes and linear time complexity. In addition, this 
    algorithm is the first deterministic algorithm for SMCC to
    obtain nearly the optimal ratio with a linear query complexity. 
    \begin{theorem} \label{thm:optimal-det}
  There exists a deterministic, multi-pass streaming algorithm for
  \mon that achieves approximation ratio $1 - 1/e - \epsi$,
  makes $O( n / \epsi )$ oracle queries, requires
  $O( 1 / \epsi )$ passes over the ground set, and requires
  $O( k \log k )$ memory. Further, the time complexity of the
  algorithm is $O(n)$.
\end{theorem}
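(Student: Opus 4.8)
The plan is to obtain \qsbr by composing the single-pass algorithm \qs of Theorem~\ref{thm:online} with a multi-pass \emph{boosting} procedure \algTwo that turns a constant-factor solution into a nearly-optimal one. In Phase~1, run \algOneone (the $c=1$ instance of \qs) in one pass and $O(n)$ time to obtain a value $m$ — for instance the $f$-value of the set it returns — with $m \le f(\opt) \le \gamma m$ for a fixed constant $\gamma$ (one may take $\gamma = 4 + O(\epsi)$). In Phase~2, use $m$ to anchor a decreasing-threshold greedy: maintain $S$, initially empty, and scan the stream once for each threshold $\tau$ in the geometric sequence $\gamma m,\ \gamma m(1-\epsi),\ \gamma m (1-\epsi)^2,\ \ldots$ down to $\epsi m / k$, adding an arriving element $e$ to $S$ whenever $|S| < k$ and $f(S\cup\{e\}) - f(S) \ge \tau$. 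To stay within the query and pass budgets, after an initial pass one keeps in memory only a pool of $O(k\log k)$ candidates — the elements whose singleton value lies in the relevant range, bucketed on a $(1-\epsi)$ scale with $O(k)$ retained per bucket — and performs the bulk of the threshold refinement on this pool in memory, so that each stream pass costs only $O(1)$ oracle queries per element.

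The ratio is established by the standard decreasing-threshold analysis. When $e$ is added at threshold $\tau$, every $o \in \opt \setminus S$ has $f(S\cup\{o\}) - f(S) < \tau/(1-\epsi)$, since otherwise $o$ would already have been taken at the preceding, larger threshold; hence $f(S\cup\{e\}) - f(S) \ge (1-\epsi)\bigl(f(S\cup\{o\}) - f(S)\bigr)$. If $k$ elements are added, summing this inequality, bounding $f(\opt) - f(S) \le \sum_{o\in\opt\setminus S}\bigl(f(S\cup\{o\})-f(S)\bigr)$ by submodularity and monotonicity, and telescoping the resulting recurrence yields $f(S) \ge (1-1/e)(1-\epsi)f(\opt)$. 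If fewer than $k$ are added, then after the last threshold every remaining marginal is below $\epsi m/k \le \epsi f(\opt)/k$, so $f(\opt) - f(S) \le \epsi f(\opt)$. Rescaling $\epsi$ by an absolute constant gives ratio $1 - 1/e - \epsi$ in both cases.

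For the resource bounds: memory is $O(k\log k)$ for fixed $\epsi$ (Phase~1 uses $\qsmem$ with $c=1$, and the candidate pool is of the same order); Phase~1 makes $O(n)$ queries in one pass; in Phase~2 each stream pass costs $O(1)$ queries per element, and because the threshold range $[\epsi m/k,\ \gamma m]$ is pinned by the \qs estimate rather than chosen obliviously, only $O(1/\epsi)$ stream passes are needed (the remaining threshold levels being resolved on the in-memory pool), for $O(n/\epsi)$ total queries; and since each pass is $O(n)$ arithmetic operations plus $O(n)$ queries with $O(1/\epsi)$ passes, the time complexity is $O(n)$ for fixed $\epsi$.

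The main obstacle I anticipate is reconciling (i) the full $1 - 1/e - \epsi$ ratio, which appears to demand a $(1-\epsi)$-granular threshold schedule and hence $\Omega(\log(k)/\epsi)$ distinct thresholds, with (ii) an $O(1/\epsi)$ bound on passes and an $O(n/\epsi)$ bound on queries — both independent of $k$ up to the $\epsi$ factor. This is precisely where Phase~1 earns its keep: the estimate $m$ lets every pass use a single predetermined threshold, so queries remain $O(1)$ per element (as opposed to the $O(\log(k)/\epsi)$ per element incurred by running parallel sieves), and after the $O(k\log k)$ pool of possibly-useful elements has been extracted, the many fine threshold levels can be swept internally rather than via repeated passes over the ground set. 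Making the pool size, the per-bucket cap, and the interleaving of in-memory refinement with the few genuine stream passes all fit together without eroding the $1 - 1/e - \epsi$ guarantee is the delicate part of the proof.
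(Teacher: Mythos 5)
There is a genuine gap, and it sits exactly where you flagged it. Your Phase~2 sweeps thresholds over the full range $[\epsi m/k,\ \gamma m]$, which contains $\Theta(\log(k/\epsi)/\epsi)$ levels on a $(1-\epsi)$ scale; run naively this costs $\Theta(\log(k)/\epsi)$ passes and $\Theta(n\log(k)/\epsi)$ queries, violating the claimed bounds. The resolution the paper uses is not the in-memory candidate pool you sketch but a much simpler observation that makes the pool unnecessary: the constant-factor estimate $\Gamma=f(A')$ from \qs pins down $\opt/k$ to within a constant factor, so the threshold in \algTwo can \emph{start} at $\Gamma/(\alpha k)\ \ge\ (1-\epsi)\opt/k$ (rather than at $\gamma m\approx\opt$) and \emph{stop} at $(1-\epsi)\Gamma/(4k)\le \opt/(4k)$ (rather than at $\epsi m/k$). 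That window spans only a factor $O(1/\alpha)$, hence $O(\log(1/\alpha)/\epsi)=O(1/\epsi)$ passes with one query per element per pass, giving $O(n/\epsi)$ queries and $O(k)$ extra memory with no filtering or bucketing at all. The analysis splits into two cases: if $|A|=k$ at termination, the standard per-addition inequality $\ff{A_{i+1}}-\ff{A_i}\ge \frac{1-\epsi}{k}(\opt-\ff{A_i})$ telescopes to $1-e^{-1+\epsi}$; if $|A|<k$, every residual marginal is below $\opt/(4k)$, so $f(A)\ge 3\opt/4$, and since $3/4>1-1/e$ the loose stopping threshold costs nothing. Your schedule's bottom endpoint $\epsi m/k$ buys a $1-\epsi$ guarantee in the undersized case, which is far more than is needed and is precisely what forces the extra $\log k$ factor.

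The patch you propose to absorb those extra levels --- retaining $O(k\log k)$ candidates bucketed by singleton value with $O(k)$ per bucket and refining thresholds in memory --- is a genuinely different mechanism, and as written it is not established: capping each bucket at $O(k)$ elements can discard members of the optimal set whose marginal gain relative to the evolving $S$ (not their singleton value) is what the decreasing-threshold analysis charges against, and interleaving in-memory refinement with genuine stream passes changes the order of additions in a way your Claim-style inequality must be re-proved to tolerate. You correctly identify this as the delicate part, but the proof never closes it; with the restricted threshold window it simply does not arise.
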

  \item An empirical evaluation (Section \ref{sec:exp}) of our single-pass algorithm $\qs$
shows that if $\qs$ is supplemented with a linear-time post-processing procedure
(which does not compromise any of the theoretical guarantees of the algorithm),
it empirically 
exceeds the objective value of the state-of-the-art single-pass streaming algorithm
\sstream \citep{Kazemi2019} and the non-streaming \lazy algorithm, while using
fewer queries than either algorithm. Further, \qsbr obtains an even
greater objective value while remaining query efficient.
\end{itemize}
Table \ref{table:cmp} shows how our algorithms compare theoretically
to the current state-of-the-art algorithms for \mon. The source code used in the empirical evaluation is available at: \url{https://gitlab.com/kuhnle/linear-submodular-stream}.

\subsection{Related Work} \label{sec:rw}
The literature studying \mon is vast, so we only discuss algorithms for
\mon with monotone objective and cardinality constraint in this section. 
Streaming algorithms for more generalized constraints and submodular
but not necessarily monotone functions include the works of
\citet{Chekuri2015}, \citet{Mirzasoleiman2016}, \citet{Mirzasoleiman2018},
and \citet{Feldman2018}, among others.
\paragraph{Fast Approximation Algorithms} 
The stochastic greedy algorithm \lazy of \citet{Mirzasoleiman2014} obtains
a ratio of $1 - 1/e - \epsi$ in $O(n)$ time, and thus has nearly optimal ratio
and time complexity. However, 
its ratio holds only in expectation: \lazy returns a poor solution with constant probability if $k = O(1)$. We refer the reader to \citet{Hassidim2017} for discussion and further analysis of the ratio of \lazy; also, in Section \ref{sec:exp},
we empirically explore the behavior of \lazy for large values of $\epsi$.
In addition to \lazy, two other randomized approximation algorithms
with linear query and time complexities have been developed. 
The algorithm 
of \citet{Buchbinder2015a} achieves ratio $1/e - \epsi$
in $O(n \log (1 / \epsi) / \epsi^2)$ time.
Very recently, the randomized, parallelizable algorithm of \citet{Fahrbach2018}
obtains ratio $1 - 1/e - \epsi$ in expectation with
time complexity $O(n \log( 1 / \epsi ) / \epsi^3)$. 
In contrast to our algorithms,
none of these algorithms are streaming algorithms
or are deterministic. For some applications of SMCC, an approximation 
ratio that holds only in expectation (rather than deterministically or
with high probability) may be undesirable.

\paragraph{Single-Pass Streaming Algorithms}
\citet{Chakrabarti2015} provided the first single-pass streaming algorithm for
\mon; they designed a $(1/4)$-approximation with one pass, $2n$ total 
queries, and $O(k)$ memory. However, this algorithm
requires time complexity of $\Omega(n \log k)$.
\citet{Badanidiyuru2014a} improved the ratio for a single-pass algorithm
to $1/2 - \epsi$ in
$O( k \log(k) / \epsi )$ memory, and $O( n\log(k) / \epsi )$ total queries and time.
\citet{Kazemi2019} have provided the single
pass $1/2 - \epsi$ approximation
\sstream, which improves the
algorithm of \citet{Badanidiyuru2014a} to have
memory complexity of $O(k / \epsi)$ as indicated in Table \ref{table:cmp}. 
The current state-of-the-art, single-pass algorithm is
\sstream, which is empirically compared to our algorithms
in Section \ref{sec:exp}.
Finally,
\citet{Feldman2020} recently showed that any one-pass algorithm
with approximation guarantee of $1/2 + \epsi$ must essentially store
all elements of the stream. 
In contrast to our single-pass algorithm, none of these algorithms have linear
time complexity. Further, they require more oracle queries by at least a constant
factor.

\paragraph{Multi-Pass Streaming Algorithms}
The first multi-pass streaming algorithm for \mon 
has been given by \citet{Gomes2010}, which obtains
value $\opt / 2 - k \epsi$ using $O(k)$ memory and 
$O( B / \epsi )$ passes, where $f$ is upper bounded
by $B$. 
\citet{Jakub2018} designed a multi-pass algorithm \ppass that obtains
ratio $1 - 1/e - \epsi$ in $O( 1 / \epsi )$ passes, 
$O( k \log (k) / \epsi )$ memory,
$O(n \log (k) / \epsi^2 ) $ time. This is a
generalization of the multi-pass algorithm
of \citet{McGregor2019} for the maximum coverage
problem. 
The current state-of-the-art, multi-pass algorithm is
\ppass, which is empirically compared to our algorithms
in Section \ref{sec:exp}. In contrast to our multi-pass algorithm,
no multi-pass algorithm has linear time complexity; further,
our algorithm makes fewer passes than \ppass to achieve the same ratio
of $1 - 1/e - \epsi$.

\section{The $\qs_c$ Algorithm} \label{sec:qs}
The algorithm \algOne 
is a single-pass, deterministic streaming algorithm.
The parameter $c$ is the number of elements buffered before the
algorithm processes them together; this parameter determines the approximation
ratio, query complexity, and memory complexity of the algorithm:
respectively, $1/(4c)$, $\lceil n / c \rceil + c$, and $\qsmem$.
Notably, this algorithm is the first deterministic algorithm for SMCC
to obtain linear time complexity.
To handle the case that $k = 1$ and
obtain better ratios if $k \ge 8c / e$, we provide two
related algorithms in Appendix \ref{apx:qs}.



\begin{algorithm}[t]
   \caption{For each $c \ge 1$, a single-pass algorithm with approximation ratio $\ratio$ if $k \ge 2$, query complexity $\lceil n / c \rceil + c$, and memory complexity $\qsmem$.} \label{alg:quickstream}
   \begin{algorithmic}[1]
     \Procedure{\algOne}{$f, k, \epsi$}
     \State \textbf{Input:} oracle $f$, cardinality constraint $k$, $\epsi > 0$
     \State $A \gets \emptyset$, $A' \gets \emptyset$, $C \gets \emptyset$, $\ell \gets \lceil \log_2 ( 1 / (4\epsi) ) \rceil  + 3$
     \For{ element $e$ received }
     \State $C \gets C \cup \{ e \}$
     \If{$|C| = c$ or stream has ended}
     \If{$f( A \cup C ) - f(A) \ge f(A) / k$}\label{line:add}
     \State $A \gets A \cup C$
     \EndIf
     \If{$|A| > 2c \ell (k + 1) \log_2 (k)$}
     \State $A \gets \{ c\ell (k + 1)\log_2 (k)$ elements most recently added to $A \}$\label{line:delete-A}
     \EndIf
     \State $C \gets \emptyset$
     \EndIf
     \EndFor
     \State $A' \gets \{ ck \text{ elements most recently added to } A \}$. \label{line:secondtolast}
     \State Partition $A'$ arbitrarily into at most $c$ sets of size at most $k$. Return the set of the partition with highest $f$ value.\label{line:term}
     \EndProcedure
\end{algorithmic}
\end{algorithm}

The algorithm $\qs_c$ maintains a set $A$, initially empty. 
We refer to the sets of size at most $c$ of elements
processed together as \emph{blocks} of size $c$. When
a new block $C$ is received, the algorithm makes one query
of $f(A \cup C)$. If $f(A \cup C) - f(A) \ge f(A) / k$,
the block $C$ is added to $A$; otherwise, it is discarded.
If the size $|A|$ exceeds $2\ell c(k + 1) \log_2 k$, elements are
deleted from $A$. 
At the end of the stream,
the algorithm partitions the last $ck$ elements added to $A$
into $c$ pieces of size at most $k$ and return the one
with highest $f$ value. 
Pseudocode is given
in Alg. \ref{alg:quickstream}. 

At a high level, our algorithm
resembles a swapping algorithm
such as \citet{Chakrabarti2015} or
\citet{Buchbinder2015b}, which
replaces previously added elements with
better ones as they arrive. However, our
algorithm uses simply the order in which
elements were added to $A$ to compare elements;
which bypasses the need of a direct comparison of the value of
an incoming element with the other elements of $A$.
This indirect method of comparison allows us to obtain
an algorithm with linear time complexity. 

Below, we prove the following theorem.
\begin{theorem} \label{thm:qs}
  Let $c \ge 1$, $\epsi \ge 0$, and let $(f,k)$ be an instance of \mon
  with $k \ge 2$.
  The solution $S$ returned by $\qs_c$ satisifes 
  $f(S) \ge \ratio \opt,$
  where $\opt$ is the optimal solution value on this instance.
  Further, $\qs_c$ makes at most $\lceil n / c \rceil + c$ queries
  and has memory complexity $\qsmem$.
\end{theorem}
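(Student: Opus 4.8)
The plan is to prove the three claims separately, saving the ratio for last. The query and memory bounds are pure bookkeeping: each completed block of size $c$ (there are $\lceil n/c\rceil$ of them) causes exactly one evaluation $f(A\cup C)$ on line~\ref{line:add}, the terminal step on line~\ref{line:term} makes at most $c$ further evaluations, one per piece of the partition, and throughout the run $A$ holds at most $2c\ell(k+1)\log_2 k = \qsmem$ elements plus the $\le c$ buffered elements of $C$. For the ratio, fix an optimal set $O^*$, write $\opt=f(O^*)$, let $A_{\mathrm{fin}}$ be the value of $A$ at the end of the for-loop, call a completed block \emph{accepted} if it passes the test on line~\ref{line:add}, and partition the stream into \emph{epochs} delimited by executions of line~\ref{line:delete-A} (with a single epoch if no deletion ever fires). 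Two elementary facts drive everything. First, \emph{geometric growth}: within an epoch $A$ only grows and each acceptance multiplies $f(A)$ by at least $1+1/k$; since $(1+1/k)^{k+1}\ge 2$, a full epoch — which contains $\ell(k+1)\log_2 k$ acceptances — multiplies $f(A)$ by at least $2^{\ell\log_2 k}=k^{\ell}$. Second, \emph{small deletion loss}: at a deletion, the discarded set $D$ is a subset of $A$ as it stood when $D$'s most recently added element entered, which is at least $\ell(k+1)\log_2 k$ acceptances earlier, so by the first fact and monotonicity $f(D)\le f(A_{\mathrm{peak}})/k^{\ell}$ where $f(A_{\mathrm{peak}})$ is the maximum of $f(A)$ in that epoch, and subadditivity gives $f(A_{\text{after deletion}})\ge(1-k^{-\ell})f(A_{\mathrm{peak}})$. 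Combining the two across epochs, the per-epoch peaks are non-decreasing and grow by a factor $\ge k^{\ell}-1$; hence every epoch but the last has peak at most $(1+O(k^{-\ell}))f(A_{\mathrm{fin}})$, and epochs at least two before the last have peak $O(k^{-\ell})f(A_{\mathrm{fin}})$.

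Next I bound $f(O^*\cup A_{\mathrm{fin}})-f(A_{\mathrm{fin}})\le\sum_{o\in O^*}\big[f(A_{\mathrm{fin}}\cup\{o\})-f(A_{\mathrm{fin}})\big]$ by submodularity, classifying each $o\in O^*$. If $o\in A_{\mathrm{fin}}$ the marginal is $0$. If $o$ was accepted but later deleted, or rejected in an epoch at least two before the last, then $f(\{o\})$ — an upper bound on the marginal — is $O(k^{-\ell})f(A_{\mathrm{fin}})$ by the deletion-loss fact and the peak estimate. The only substantial case is a block containing $o$ that was rejected in the last or penultimate epoch: writing $A_\tau$ for $A$ at that moment and $W=A_\tau\setminus A_{\mathrm{fin}}$, the rejection gives $f(A_\tau\cup\{o\})-f(A_\tau)<f(A_\tau)/k$, and a short submodularity/monotonicity computation (using $f(A_\tau\setminus W)\ge f(A_\tau)-f(W)$) shows the marginal of $o$ to $A_{\mathrm{fin}}$ is at most $f(A_\tau)/k+f(W)$; here $f(A_\tau)\le(1+O(k^{-\ell}))f(A_{\mathrm{fin}})$ by the peak estimate, while $W$ is contained in the pre-penultimate-epoch part of $A_\tau$ together with the first block accepted in the penultimate epoch, both of which the geometric-growth fact forces to have $f$-value $O(k^{-\ell})f(A_{\mathrm{fin}})$. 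Summing the $\le k$ contributions yields $\opt\le f(O^*\cup A_{\mathrm{fin}})\le\big(2+O(k^{-(\ell-1)})\big)f(A_{\mathrm{fin}})$, i.e. $f(A_{\mathrm{fin}})\ge\tfrac12\big(1-O(k^{-(\ell-1)})\big)\opt$.

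It remains to pass from $A_{\mathrm{fin}}$ to the returned set. Let $A'$ be the last $ck$ accepted elements and $R=A_{\mathrm{fin}}\setminus A'$. When the first block of $A'$ was accepted, $A$ already contained $R$, so $f(A)\ge f(R)$ then; the subsequent $k$ acceptances, interleaved with at most one deletion (a factor $1+O(k^{-\ell})$), multiply $f(A)$ by at least $(1+1/k)^k\ge 2$, so $f(A_{\mathrm{fin}})\ge(2-O(k^{-\ell}))f(R)$, and subadditivity gives $f(A')\ge f(A_{\mathrm{fin}})-f(R)\ge\tfrac12(1-O(k^{-\ell}))f(A_{\mathrm{fin}})$. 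Partitioning $A'$ into at most $c$ pieces (line~\ref{line:secondtolast}--\ref{line:term}) and applying subadditivity once more, the best piece $S$ satisfies $f(S)\ge f(A')/c\ge\tfrac1{4c}\big(1-O(k^{-(\ell-1)})\big)\opt$. Plugging in $\ell=\lceil\log_2(1/(4\epsi))\rceil+3$, so that $k^{\ell-1}\ge 2^{\ell-1}\ge 1/\epsi$, makes the correction at most $\epsi$ and gives $f(S)\ge\ratio\opt$. (The $k\to\infty$ limit claimed in Theorem~\ref{thm:online} would then come from redoing this calculation keeping the $(1+1/k)^k$ factors exact rather than lower-bounding them by $2$.)

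The main obstacle is precisely the interaction between the periodic deletions and the threshold/greedy rejection inequality: a rejected optimal element is only certified useless against the set $A_\tau$ present at that moment, not against $A_{\mathrm{fin}}$, and submodularity is useless for transferring this once part of $A_\tau$ has been deleted. Choosing the deletion threshold to be $\Theta(\ell k\log k)$ — large enough that a full epoch multiplies $f$ by $k^{\ell}$ — is what forces every deleted set, and every set that was ``too early'' when it rejected an optimal element, to carry only a $k^{-\Omega(\ell)}$ fraction of $f(A_{\mathrm{fin}})$; the delicate part of the write-up is the bookkeeping that pins down exactly which elements of $A_\tau$ (and, later, of the prefix before $A'$) fail to survive to $A_{\mathrm{fin}}$ and verifies that their union has negligible value.
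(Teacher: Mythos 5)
Your proof is correct in its essentials but routes the central step differently from the paper. Where you classify each optimal element by the epoch in which its block was rejected (or deleted) and transfer the rejection certificate from $A_\tau$ to $A_{\mathrm{fin}}$ via the subadditivity bound $f(A_\tau\setminus W)\ge f(A_\tau)-f(W)$ with $W=A_\tau\setminus A_{\mathrm{fin}}$, the paper avoids the epoch bookkeeping entirely: it introduces $A^*=\bigcup_i A_i$, notes that every $A_{i(o)}$ is a \emph{subset} of $A^*$ so that submodularity bounds each marginal $f(A^*+o)-f(A^*)$ by $f(A_{i(o)})/k\le f(A_{n+1})/k$ directly (Lemma \ref{lemm:Agood}), and then pays a single factor $1+1/(k^\ell-1)$ to pass from $f(A^*)$ to $f(A_{n+1})$ via a geometric series over the deleted sets (Lemma \ref{lemm:aplus}). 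Both arguments rest on the same two engines — monotone nondecrease of $f(A)$ across deletions (Lemma \ref{lemm:increasing}) and geometric decay of the deleted sets' values — but the $A^*$ device lets the paper treat all rejected optimal elements uniformly, whereas your version needs the two-epoch case split to keep the $f(W)$ error terms from accumulating; your split does close correctly since $W$ sits inside the last deleted set and $k\cdot k^{-\ell}$ is absorbed by the choice of $\ell$. Two smaller divergences: for the passage from $A_{\mathrm{fin}}$ to $A'$ you argue multiplicatively ($k$ acceptances after $R$ is complete give a factor $(1+1/k)^k\ge 2$, minus an $O(k^{-(\ell-1)})$ correction for the one interleaved deletion), while the paper's Lemma \ref{lemm:Aprime} telescopes the marginals of the last $k$ additions against $A_{n+1}\setminus A'$ to get the clean bound $f(A')\ge f(A_{n+1}\setminus A')$ with no correction term; and you handle general $c$ by working with blocks throughout, while the paper formally reduces to $c=1$ via the auxiliary function $g(S)=f(\bigcup_{C\in S}C)$ before paying the factor $c$ for the final partition. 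These are cosmetic; the only place your write-up is looser than it should be is the final constant-chasing (turning the various $O(k^{-(\ell-1)})$ corrections into exactly $\epsi$), which is at the same level of detail as the paper's own last line.
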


We remark that using the the value $f(A)$ of a potentially infeasible set $A$ 
is an important feature of our algorithm; 
the use of infeasible sets is necessary to obtain a constant ratio
with fewer than $n$ queries to the oracle. 

\begin{proof}[Proof of Theorem \ref{thm:qs}]
The query complexity, time complexity, and memory complexity of $\qs_c$ are clear
from the limit on the size of $A$, the choice of $\ell$, and the
fact that one query is required every $c$ elements together with
$c$ queries at the termination of the stream. The rest of the proof
establishes the approximation ratio of $\qs_c$.

First, we argue it is sufficient to prove the ratio in the case $c = 1$.
Let $\uni = \{C_1, \ldots, C_m \}$,
where each $C_i$ is the $i$-th block of at most $c$ elements of $\mathcal U$
considered for addition to $A$ on line \ref{line:add}.
Define monotone, submodular function $g:2^\uni \to \reals$ by 
$g( S ) = f( \bigcup_{ C \in S } C )$.
Observe that if we omit lines \ref{line:secondtolast} and \ref{line:term}, 
the behavior of \algOne on instance $(f,k)$ is equivalent to \algOneone
run on instance $(g, k)$ of \mon; further, $\argmax_{|S| \le k} g(S) \ge \argmax_{|S| \le k} f(S)$.
Let $S$ be the solution returned by \algOneone on instance $(g,k)$. 
Then the value of $A'$ at termination of \algOne is $A' = \bigcup_{C \in S} C$. Let $\{D_1, \ldots, D_c\}$ be the partition of $A$ on line \ref{line:term} of Alg. \ref{alg:quickstream}.
Then by submodularity of $f$
$$g( S ) = f( A' ) \le \sum_{i = 1}^c f(D_i) \le c \max_{1 \le i \le c} f(D_i).$$
Since \algOne returns $\argmax_{1 \le i \le c} f(D_i)$,
it suffices to show that \algOneone has approximation ratio $\ratioOne$.

For the rest
of the proof, we let $c = 1$. 
We require the following claim, which follows directly
from the inequality $\log x \ge 1 - 1/ x$ for $x > 0$.
\begin{claim} \label{claim:elementary}
For $y \ge 1$, if $i \ge (k + 1) \log y$, then $(1 + 1/k)^i \ge y$.
\end{claim}
Throughout the proof,
let $A_i$ denote the value of $A$ at the beginning
of the $i$-th
iteration of \textbf{for} loop; let $A_{n + 1}$ be the value of $A$
after the \textbf{for} loop completes. Also, let $A^* = \bigcup_{1 \le i \le n + 1} A_i$, and
let $e_i$ denote the element received at the beginning of iteration $i$. We refer
to line numbers of the pseudocode Alg. \ref{alg:quickstream}.
First, we show the value of $f(A)$ does not decrease
between iterations of the \textbf{for} loop, despite the possibility of 
deletions from $A$. 
\begin{lemma} \label{lemm:increasing}
  For any $1 \le i \le n$, it holds that $f(A_{i}) \le f(A_{i + 1})$.
\end{lemma}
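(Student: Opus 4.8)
The plan is to argue case by case on what iteration $i$ does. If $e_i$ is not added to $A$ then $A_{i+1}=A_i$; if $e_i$ is added but the deletion on line~\ref{line:delete-A} is not triggered, then $A_{i+1}=A_i\cup\{e_i\}\supseteq A_i$; in either case monotonicity of $f$ gives $f(A_{i+1})\ge f(A_i)$ immediately. (If $\epsi=0$ the deletion line never fires, so this finishes the proof; assume $\epsi>0$ from now on.) The entire content of the lemma is therefore the case where, at iteration $i$, $e_i$ is added to $A$ \emph{and} the deletion on line~\ref{line:delete-A} occurs.

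Write $x=\ell(k+1)\log_2 k$. In this case I would first pin down the combinatorics of the deletion: since $|A|$ grows by at most one per iteration (recall $c=1$ here) and the deletion fires exactly when $|A|$ first exceeds $2x$, we must have $|A_i|=\lfloor 2x\rfloor$, and after inserting $e_i$ the algorithm keeps the $\lfloor x\rfloor$ most-recently-added elements, i.e.\ it discards the $\lfloor 2x\rfloor-\lfloor x\rfloor+1$ oldest elements of $A_i$. Write $A_i = P\sqcup Q$ accordingly, where $P$ is the discarded (oldest) part and $Q=A_i\cap A_{i+1}$ is the retained part, so that $A_{i+1}=Q\cup\{e_i\}$ and $|Q|=\lfloor x\rfloor-1$.

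The crux is the bound $f(P)\le f(A_i)/k$. To get it, I would trace $A$ backwards in time: if $e'$ denotes the newest element of $P$, then right after $e'$ was added the set $A$ was exactly $P$, and between that iteration and iteration $i$ no deletion occurs (the same size-tracking shows the next deletion after that point is the one at iteration $i$) while at least the $\lfloor x\rfloor-1$ elements of $Q$ are successfully added. Since each successful addition on line~\ref{line:add} multiplies $f(A)$ by at least $1+1/k$, this yields $f(A_i)\ge(1+1/k)^{\lfloor x\rfloor-1}f(P)$; and because $\lfloor x\rfloor-1\ge(k+1)\ln k$ — which is where the choice $\ell\ge 3$ is needed — Claim~\ref{claim:elementary} gives $(1+1/k)^{\lfloor x\rfloor-1}\ge k$, so $f(P)\le f(A_i)/k$.

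To finish, I would combine this with submodularity. Subadditivity of $f$ gives $f(Q)\ge f(A_i)-f(P)\ge (1-1/k)f(A_i)$, and since $Q\subseteq A_i$, submodularity together with the fact that $e_i$ passed the test on line~\ref{line:add} gives
\[
f(A_{i+1}) = f(Q\cup\{e_i\}) \ge f(Q) + \big(f(A_i\cup\{e_i\}) - f(A_i)\big) \ge f(Q) + f(A_i)/k \ge f(A_i).
\]
I expect the only real obstacle to be the bookkeeping in the deletion case: checking cleanly that the discarded set $P$ is a previous value of $A$ and that no deletion intervenes between then and iteration $i$, while tracking the floors that appear because $x=\ell(k+1)\log_2 k$ need not be an integer. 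Everything else is the short computation above together with Claim~\ref{claim:elementary}.
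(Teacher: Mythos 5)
Your proof is correct and follows essentially the same route as the paper's: identify the discarded set as an earlier value of $A$, use the multiplicative $(1+1/k)$ growth of $f(A)$ between that iteration and iteration $i$ (via Claim~\ref{claim:elementary}) to show the discarded part carries only a small fraction of $f(A_i)$, and then offset that loss with the gain guaranteed by the test on line~\ref{line:add}. The only differences are cosmetic: you settle for the weaker (but sufficient) bound $f(P)\le f(A_i)/k$ where the paper derives $f(A_j)\le f(A_i)/(k^{\ell-1}-1)$, and you are somewhat more explicit about the size bookkeeping around the deletion.
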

\begin{proof}
  If no deletion is made during iteration $i$
  of the \textbf{for} loop, then any change
  in $f(A)$ is clearly nonnegative. So suppose 
  deletion of set $B$ from $A$ occurs on line \ref{line:delete-A} of Alg. \ref{alg:quickstream} during
  this iteration. Observe that $A_{i + 1} = (A_i \setminus B) \cup \{ e_i \}$,
  because the deletion is triggered by the addition of $e_i$ to $A_i$.
  In addition, at some iteration 
  $j < i$ of the \textbf{for} loop, it holds that $A_j = B$. 
  From the beginning of iteration $j$ to the beginning of iteration $i$,
  there have been $\ell(k + 1) \log_2 (k) - 1 \ge (\ell - 1)(k + 1) \log_2 (k)$ additions and no deletions to $A$, which add
  precisely the elements 
  in $(A_i \setminus A_j)$.

  It holds that
  \begin{align*}
    \ff{ A_i \setminus A_j } &\overset{(a)}{\ge} \ff{ A_i } - \ff{ A_j } \\ 
    &\overset{(b)}{\ge} \left( 1 + \frac{1}{k} \right)^{(\ell - 1)(k + 1) \log k} \cdot f( A_j ) - f (A_j) \\ &\overset{(c)}{\ge} (k^{\ell - 1} - 1 ) f( A_j ),
  \end{align*}
  where inequality (a) follows from submodularity and nonnegativity of $f$,
  inequality (b) follows from the fact that each addition from $A_j$ to $A_i$ increases
  the value of $f(A)$ by a factor of at least $(1 + 1/k)$, and inequality (c)
  follows from Claim
  \ref{claim:elementary}.
  Therefore  %
  \begin{align}
    f(A_i) &\le \ff{A_i \setminus A_j} + \ff{ A_j } \nonumber \\ &\le \left( 1 + \frac{1}{k^{\ell - 1} - 1} \right) \ff{A_i \setminus A_j}. \label{ineq:1}
  \end{align}
  Next,
  \begin{align} \label{ineq:2}
    &\ff{ (A_i \setminus A_j) \cup \{ e_i \} } - \ff{ A_i \setminus A_j } \nonumber \\ 
    &\overset{(d)}{\ge} \ff{ A_i \cup \{ e_i \} } - \ff{ A_i } \nonumber \\ 
    &\overset{(e)}{\ge} \ff{A_i} / k  \ge \ff{A_i \setminus A_j } / k, \end{align}
  where inequality (d) follows from submodularity, and inequality (e) is by
  the condition to add $e_i$ to $A_i$ on line \ref{line:add}.
  Finally, using Inequalities (\ref{ineq:1}) and (\ref{ineq:2}) as indicated below, we have
  \begin{align*}
    \ff{ A_{i + 1}} &= \ff{ A_i \setminus A_j \cup \{ e_i \} } \\ 
    &\overset{\text{By (\ref{ineq:2})}}{\ge} \left( 1 + \frac{1}{k} \right) \ff{ A_i \setminus A_j } \\ 
    &\overset{\text{By (\ref{ineq:1})}}{\ge} \frac{1 + \frac{1}{k}}{1 + \frac{1}{k^{\ell - 1} - 1}} \cdot f(A_i) \ge f( A_i ), 
  \end{align*}
  where the last inequality follows since $k \ge 2$ and $\ell \ge 3$.
\end{proof}
Next, we bound the total value of $f( A )$ lost from
deletion throughout the run of the algorithm.
\begin{lemma} \label{lemm:aplus}
  $\ff{A^*} \le \left( 1 + \frac{1}{k^\ell - 1} \right)\ff{ A_{n + 1} }. $
\end{lemma}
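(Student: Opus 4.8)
The plan is to bound the total value lost to deletions by tracking each deletion event and showing the geometric growth of $f(A)$ makes the cumulative loss a vanishing fraction of the final value. First I would set up notation: let the deletions occur at iterations $i_1 < i_2 < \dots < i_t$ (the stream completes at iteration $n$), and let $B_s$ denote the set deleted from $A$ at iteration $i_s$. Then $A^* = A_{n+1} \cup \bigcup_{s=1}^t B_s$, so by submodularity and nonnegativity, $f(A^*) \le f(A_{n+1}) + \sum_{s=1}^t f(B_s)$. The crux is to show $\sum_{s=1}^t f(B_s) \le \frac{1}{k^\ell - 1} f(A_{n+1})$.

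The key observation driving the bound is that when deletion $s$ occurs, the set $B_s$ deleted is exactly a set $A_j$ for some earlier iteration $j$ (the same "old half" identified in the proof of Lemma~\ref{lemm:increasing}), and at that point $|A| > 2\ell(k+1)\log_2 k$ with $|B_s| = \ell(k+1)\log_2 k$ elements deleted; the surviving portion $A_{i_s+1} \setminus B_s$ has at least $\ell(k+1)\log_2 k$ additions layered on top of $B_s$, each multiplying $f$ by a factor $\ge (1 + 1/k)$. Hence by Claim~\ref{claim:elementary}, and submodularity/nonnegativity as in the earlier lemma,
\[
  f(B_s) \le \frac{1}{k^\ell - 1}\, f(A_{i_s + 1}),
\]
roughly speaking: $f(A_{i_s+1}) \ge (1+1/k)^{\ell(k+1)\log_2 k} f(B_s) \ge k^\ell f(B_s)$, whence $f(B_s) \le \frac{1}{k^\ell}\,f(A_{i_s+1}) \le \frac{1}{k^\ell - 1}(f(A_{i_s+1}) - f(B_s))$; and $f(A_{i_s+1}) - f(B_s) \le f(A_{i_s+1} \setminus B_s)$ by submodularity. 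So each $f(B_s)$ is controlled by a $\frac{1}{k^\ell-1}$ fraction of the $f$-value of the portion of $A$ that survives deletion $s$.

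The second ingredient is that these surviving portions are (essentially) disjoint in a way that lets the bounds telescope, or more directly: since $f(A)$ is non-decreasing across iterations (Lemma~\ref{lemm:increasing}) and moreover each deletion frees up room so that all subsequent deletions correspond to element blocks added strictly after iteration $i_s$, the surviving sets $A_{i_s+1}\setminus B_s$ for distinct $s$ have f-values that sum to at most $f(A_{n+1})$ — or, if that disjointness is not clean, I would instead argue by a charging scheme where the additions between consecutive deletions each get charged once, giving $\sum_s f(B_s) \le \frac{1}{k^\ell - 1}\sum_s \big(f(A_{i_s+1}\setminus B_s)\big) \le \frac{1}{k^\ell-1} f(A_{n+1})$, using monotonicity of $f(A)$ to bound the last surviving portion by $f(A_{n+1})$ and the fact that each block of $\ell(k+1)\log_2 k$ additions is "used" to justify at most one deletion.

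I expect the main obstacle to be precisely this disjointness/charging argument: establishing that the $f$-values $f(A_{i_s+1}\setminus B_s)$ over all deletions $s$ do not overcount, so that their sum is genuinely bounded by $f(A_{n+1})$ rather than $t \cdot f(A_{n+1})$. The cleanest route is probably to note that between deletion $s$ and deletion $s+1$ there are at least $\ell(k+1)\log_2 k$ new additions (otherwise $|A|$ could not regrow from $\approx \ell(k+1)\log_2 k$ back past $2\ell(k+1)\log_2 k$), so the "layers" used to pay for successive deletions are disjoint blocks of additions; combined with submodularity this should give the telescoping bound. If a direct telescoping is awkward, an induction on the number of deletions $t$ — peeling off the last deletion and applying the single-deletion estimate together with the inductive hypothesis on the prefix of the run — would also work and may be cleaner to write.
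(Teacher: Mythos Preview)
Your decomposition and single-deletion estimate are on the right track, but the ``disjointness/charging'' route you propose for summing the bounds has a genuine gap---precisely the one you flagged as the likely obstacle. The surviving set after deletion $s$, namely $A_{i_s+1}$, is \emph{not} a subset of $A_{n+1}$: for $s < t$ it is contained entirely in $B_{s+1}$ and is therefore itself deleted at the next deletion event. So even though the sets $A_{i_1+1}, \ldots, A_{i_t+1}$ are pairwise disjoint, submodularity runs the wrong way ($\sum_s f(A_{i_s+1}) \ge f\bigl(\bigcup_s A_{i_s+1}\bigr)$), and there is no one-shot charging that bounds $\sum_s f(A_{i_s+1})$ by a single copy of $f(A_{n+1})$.

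The paper sidesteps this by \emph{iterating} the single-step bound rather than summing it in parallel. The observation you are missing is exactly that $A_{i_s+1} \subseteq B_{s+1}$: the portion surviving deletion $s$ becomes the oldest block and is absorbed into the next deleted set. Combined with your estimate and monotonicity this gives a geometric comparison between \emph{consecutive deleted blocks}, $f(B_s) \le k^{-\ell} f(B_{s+1})$. Chaining up to the final state (using Lemma~\ref{lemm:increasing} to compare $f(A_{i_t+1})$ with $f(A_{n+1})$) yields $f(B_s) \le k^{-\ell(t-s+1)} f(A_{n+1})$, and then $\sum_{s=1}^t f(B_s) \le f(A_{n+1}) \sum_{j \ge 1} k^{-\ell j} = f(A_{n+1})/(k^\ell - 1)$, which is exactly the claim. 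Your induction-on-$t$ fallback would also arrive here and is essentially an unrolling of the same geometric series; the charging argument as stated, however, does not close.
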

\begin{proof}
  Observe that $A^* \setminus A_{n + 1}$ may be written as the
  union of pairwise disjoint sets, each of which is size
  $\ell(k+1) \log_2(k) + 1$ and was deleted on line \ref{line:delete-A} of Alg. \ref{alg:quickstream}.
  Suppose there were $m$ sets deleted from $A$; write 
  $A^* \setminus A_{n + 1} = \{ B^i : 1 \le i \le m \}$, where
  each $B^i$ is deleted on line \ref{line:delete-A}, ordered such that $i < j$
  implies $B^i$ was deleted after $B^j$ (the reverse order in which they were deleted); finally, let $B^0 = A_{n + 1}$.
  \begin{claim} \label{claim:deletion}
    Let $0 \le i \le m$. Then $\ff{B^i} \ge k^{\ell} \ff{B^{i+1}}$.
  \end{claim}
  \begin{proof}
    Let $B^i$, $B^{i + 1} \in \mathcal B$. 
    There are at least $\ell(k + 1) \log k + 1$ elements added to $A$ and exactly one deletion
    event during the period between starting when $A = B^{i + 1}$ until $A = B^i$. 
    Moreover, each addition except possibly one (corresponding to the deletion event)
    increases $f(A)$ by a factor
    of at least $1 + 1/k$. Hence, by Lemma \ref{lemm:increasing} and Claim \ref{claim:elementary},
  $\ff{B^i} \ge k^{\ell} \ff{B^{i+1}}$.
  \end{proof}
  By Claim \ref{claim:deletion}, for any $0 \le i \le m$,
  $\ff{ A_{n + 1} } \ge k^{\ell i} \ff{ B^i }$.
  Thus, 
  \begin{align*}
    \ff{ A^* } &\le \ff{ A^* \setminus A_{n + 1} } + \ff{ A_{n + 1} } \\ &\overset{(a)}{\le} \sum_{i = 0}^m \ff{ B^i } \\ 
    &\overset{(b)}{\le} \ff{ A_{n + 1} } \sum_{i = 0}^\infty k^{-\ell i} \\ 
                                                                     &\overset{(c)}{=}  \ff{ A_{n + 1} } \left( \frac{1}{1 - k^{-\ell }} \right), 
  \end{align*} 
where inequality (a) follows from submodularity and nonnegativity of $f$,
  inequality (b) follows Claim \ref{claim:deletion}, and inequality (c) follows
  from the sum of a geometric series.
\end{proof}
Next, we bound the value of $\opt$ in terms of $\ff{ A_{n + 1} }$. 
\begin{lemma} \label{lemm:Agood}
  $\left(2 + \frac{1}{k^\ell - 1} \right)\ff{ A_{n + 1} } \ge \opt$.
\end{lemma}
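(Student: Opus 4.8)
The plan is to compare $\opt$ against $f(A^*)$ and then invoke Lemma \ref{lemm:aplus} to pass from $f(A^*)$ to $f(A_{n+1})$. Let $O$ be an optimal solution, so $|O| \le k$ and $f(O) = \opt$. The key observation is that for every element $o \in O$, at the moment $o$ was considered for addition to $A$ (recall $c = 1$, so blocks are singletons), \emph{either} $o$ was added to $A$, \emph{or} the addition condition on line \ref{line:add} failed, meaning $f(A \cup \{o\}) - f(A) < f(A)/k$ for the current value of $A$ at that time. In the first case $o \in A^*$; in the second case, by submodularity, the marginal gain of $o$ with respect to the final set $A^* \supseteq A$ is even smaller, so $f(A^* \cup \{o\}) - f(A^*) < f(A^*)/k$ (using monotonicity of the marginal as the base set grows). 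Hence in both cases the marginal $f(A^* \cup \{o\}) - f(A^*)$ is bounded: it is $0$ if $o \in A^*$, and less than $f(A^*)/k$ otherwise.

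From here the plan is the standard greedy-style telescoping argument. By submodularity,
\begin{align*}
  \opt = f(O) &\le f(O \cup A^*) \le f(A^*) + \sum_{o \in O \setminus A^*} \left( f(A^* \cup \{o\}) - f(A^*) \right) \\
              &< f(A^*) + |O \setminus A^*| \cdot \frac{f(A^*)}{k} \le f(A^*) + k \cdot \frac{f(A^*)}{k} = 2 f(A^*),
\end{align*}
where the first inequality is monotonicity and the second is submodularity applied to the chain $A^* \subseteq A^* \cup \{o_1\} \subseteq \cdots \subseteq A^* \cup O$. Then applying Lemma \ref{lemm:aplus}, $f(A^*) \le \left(1 + \frac{1}{k^\ell - 1}\right) f(A_{n+1})$, so
\[
  \opt < 2 f(A^*) \le 2\left(1 + \frac{1}{k^\ell - 1}\right) f(A_{n+1}) = \left(2 + \frac{2}{k^\ell - 1}\right) f(A_{n+1}),
\]
and since $\frac{2}{k^\ell - 1} \le \frac{1}{k^\ell - 1} \cdot 2$... here I should be a little careful: the statement claims the constant $2 + \frac{1}{k^\ell - 1}$, which is slightly tighter than what the crude bound above gives. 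To recover it, I would not bound $|O \setminus A^*|$ by $k$ but rather note $|O \setminus A^*| \le k$ while also accounting for the fact that if some $o \in O$ lies in $A^*$ then the sum has fewer than $k$ terms; more precisely, combining with monotonicity $f(A^*) \le f(A^*)$ trivially, one argues $\opt \le f(A^*) + (k-1)f(A^*)/k + (\text{one more marginal})$ — alternatively, absorb the discrepancy by observing $f(A_{n+1}) \le f(A^*)$ already and that the inequality in Lemma \ref{lemm:aplus} combined with $\opt \le 2f(A^*)$ can be rearranged directly into the claimed form using $2\left(\frac{1}{1-k^{-\ell}}\right) = 2 + \frac{2k^{-\ell}}{1 - k^{-\ell}} = 2 + \frac{2}{k^\ell - 1}$, and then noting the factor can be tightened because the last marginal added to reach $A_{n+1}$ was itself a $(1+1/k)$ increase.

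The main obstacle is precisely matching the constant $2 + \frac{1}{k^\ell - 1}$ rather than the looser $2 + \frac{2}{k^\ell - 1}$: this requires being careful about whether the strict inequality $< f(A^*)/k$ (strict, from line \ref{line:add}) and the exact count $|O \setminus A^*|$ can shave the extra factor, or whether one should instead apply the marginal-bound argument at the set $A_{n+1}$ directly for most elements and only pay the Lemma \ref{lemm:aplus} overhead on a single telescoped term. I expect the cleanest route is: show $\opt \le 2 f(A^*)$ exactly as above (the strictness covers any off-by-one), then plug in Lemma \ref{lemm:aplus} and simplify, accepting that $\frac{2}{k^\ell-1}$ may need to be rewritten; if the paper truly needs $2 + \frac{1}{k^\ell-1}$, the fix is to bound $|O\setminus A^*| \le k$ but replace one unit of that count using the observation that the final element added to $A$ contributed a genuine $f(A_{n+1})/(1+1/k)$ worth of value that is not lost to deletion, effectively buying back half of the geometric-series overhead.
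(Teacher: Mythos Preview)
Your overall strategy---bound $\opt$ against $f(A^*)$ using the failed addition condition plus submodularity, then invoke Lemma~\ref{lemm:aplus}---is exactly the paper's. The place where you lose the constant is the step where you bound the marginal of each rejected $o$ by $f(A^*)/k$. The paper does something slightly sharper: from the failed test at iteration $i(o)$ one has
\[
  f(A^* + o) - f(A^*) \;\le\; f(A_{i(o)} + o) - f(A_{i(o)}) \;<\; \frac{f(A_{i(o)})}{k},
\]
and then, instead of using $f(A_{i(o)}) \le f(A^*)$ (monotonicity of $f$), it uses Lemma~\ref{lemm:increasing} to get $f(A_{i(o)}) \le f(A_{n+1})$. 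Summing over $o \in O \setminus A^*$ now yields
\[
  f(O) - f(A^*) \;\le\; \sum_{o \in O \setminus A^*} \frac{f(A_{n+1})}{k} \;\le\; f(A_{n+1}),
\]
so that $\opt \le f(A^*) + f(A_{n+1})$. Applying Lemma~\ref{lemm:aplus} only to the $f(A^*)$ term gives
\[
  \opt \;\le\; \left(1 + \frac{1}{k^\ell - 1}\right) f(A_{n+1}) + f(A_{n+1}) \;=\; \left(2 + \frac{1}{k^\ell - 1}\right) f(A_{n+1}),
\]
which is exactly the claimed constant. Your route through $f(A^*)/k$ forces you to multiply the \emph{entire} bound $2f(A^*)$ by the Lemma~\ref{lemm:aplus} factor, producing $2 + \tfrac{2}{k^\ell-1}$; the speculative fixes you sketch (strictness of the inequality, shaving one term of the sum, ``buying back'' value from the last added element) do not recover the missing factor and are not needed. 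The one-line repair is simply to invoke Lemma~\ref{lemm:increasing} in place of monotonicity of $f$ when bounding $f(A_{i(o)})$.
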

\begin{proof}
  Let $O \subseteq \mathcal U$ be an optimal solution of size $k$
  to \mon; for each $o \in O$, let $i(o)$ be the iteration in which $o$
  was processed. Then
  \begin{align*}
    f(O) - \ff{ A^* } &\le \ff{ O \cup A^* } - \ff{ A^* } \\ 
    &\le \sum_{o \in O \setminus A^*}\ff{ A^* + o } - \ff{ A^* } \\ 
    &\le \sum_{o \in O \setminus A^*} \ff{ A_{i(o)} } / k \\ 
    &\le \sum_{o \in O \setminus A^*} \ff{ A_{n + 1} } / k \le \ff{ A_{n + 1} }, 
  \end{align*}
  by monotonicity and submodularity of $f$, the condition of Line \ref{line:add}, Lemma \ref{lemm:increasing}, and the size of $O$.
  From here, the result follows from Lemma \ref{lemm:aplus}.
\end{proof}
Recall that \algOneone returns the set $A'$, the last $k$ elements added to $A$. 
Lemma \ref{lemm:Aprime} shows that $2f( A' ) \ge \ff{A_{n + 1}}$.
\begin{lemma} \label{lemm:Aprime}
  $\ff{ A_{n + 1} } \le 2 \ff{A'}$.
\end{lemma}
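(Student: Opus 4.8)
The plan is to relate $f(A')$ to $f(A_{n+1})$ by recalling what $A'$ actually is. When $c=1$, line \ref{line:secondtolast} sets $A'$ to the last $k$ elements added to $A$. Now $A_{n+1}$ has size at most $2\ell(k+1)\log_2 k$ (the deletion threshold), and more importantly $A_{n+1}$ consists of consecutively-added elements: each deletion on line \ref{line:delete-A} keeps only the most recently added $\ell(k+1)\log_2 k$ elements, and subsequent additions are appended, so $A_{n+1}$ is a ``suffix'' of the addition-order sequence. Thus $A'$ is itself a suffix of $A_{n+1}$ (or equals $A_{n+1}$ if fewer than $k$ elements were ever added). Write $A_{n+1} = B \cup A'$ where $B$ is the set of elements of $A_{n+1}$ added before the last $k$; by submodularity and nonnegativity, $f(A_{n+1}) \le f(B) + f(A')$, so it suffices to show $f(B) \le f(A')$.

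The key step is to show that $f(B) \le f(A')$, i.e. that the ``older'' part of $A_{n+1}$ has no more value than the last $k$ elements added. I would argue this using the multiplicative growth: from the moment $A$ equals $B$ (right before the first of the last $k$ additions) until $A$ equals $A_{n+1}$, there are exactly $k$ additions and no deletions (since $|A_{n+1}|$ is below the deletion threshold, no deletion occurred in this window). Each such addition multiplies $f(A)$ by at least $1 + 1/k$, so $f(A_{n+1}) \ge (1 + 1/k)^k f(B) \ge f(B)$ — but I want a bound on $f(B)$ alone, not $f(A_{n+1})$. Instead, I would directly bound: by submodularity $f(A') = f((A_{n+1}\setminus B)) \ge f(A_{n+1}) - f(B)$, and $f(A_{n+1}) \ge (1+1/k)^k f(B) \ge 2 f(B)$ when... actually $(1+1/k)^k \ge 2$ holds for $k \ge 1$. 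Hence $f(A_{n+1}) - f(B) \ge f(B)$, giving $f(A') \ge f(B)$. Combined with $f(A_{n+1}) \le f(B) + f(A')$ this yields $f(A_{n+1}) \le 2 f(A')$.

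I would need to handle the edge case where fewer than $k$ elements were ever added to $A$, in which case $A' = A_{n+1}$ and the bound is trivial. I should also double-check the claim that no deletion occurs during the window of the last $k$ additions: since $A_{n+1}$ is the final value and $|A_{n+1}| \le 2\ell(k+1)\log_2 k$, and a deletion is triggered only when $|A|$ exceeds $2\ell(k+1)\log_2 k$, and after a deletion $|A| = \ell(k+1)\log_2 k < k$ is false for large $k$ — so I must instead argue more carefully: $B$ has size $|A_{n+1}| - k$; if a deletion had occurred in the last $k$ additions, then $A_{n+1}$ would consist only of the $\le k$ elements added after that deletion plus the $\ell(k+1)\log_2 k$ kept, which is consistent, so I cannot rule it out purely by size. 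The cleaner route: let $B'$ be the value of $A$ right before the last $k$ additions began; between then and $A_{n+1}$ there are $k$ additions. If a deletion occurred in that stretch, it removed $\ell(k+1)\log_2 k + 1 > k$ elements, impossible since only $k$ elements were added in between and after a deletion $A$ grows only by subsequent additions — but the deleted set could include pre-existing elements. The robust fix is simply: the last $k$ elements added to $A$ that survive to $A_{n+1}$ are exactly $A'$, and $f$ increased by a factor $\ge (1+1/k)$ at each of those $k$ additions (Lemma \ref{lemm:increasing} ensures no net decrease even if one of them triggered a deletion, but at most one did), so I expect the main obstacle to be bookkeeping around a possible single deletion event inside this window; I would absorb it exactly as in Claim \ref{claim:deletion}, losing at most one factor of $(1+1/k)$, and still conclude $f(A_{n+1}) \ge (1+1/k)^{k-1} f(B) \ge 2f(B)$ for $k \ge 2$ since $(1+1/k)^{k-1} \ge 2$ for $k \ge 2$... which is false for $k=2$ where it equals $1.5$. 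So I would instead note $B = A_{n+1} \setminus A'$ and $f(B) \le f(A_{n+1})$ trivially is too weak; the correct clean statement uses that $A'$ are the $k$ most recent so $f$ grew through all of them, and I would finalize by taking $f(A_{n+1}) \le f(B) + f(A')$ together with $f(B) \le f(A_{n+1})/(1+1/k)^{k-1} \le f(A_{n+1})$, yielding directly $f(A') \ge f(A_{n+1}) - f(B) \ge f(A_{n+1})(1 - (1+1/k)^{-(k-1)})$; for $k \ge 2$ this factor is $\ge 1/3$, not $1/2$ — so the honest plan is that the factor $2$ likely requires that \emph{all} $k$ additions contribute (no deletion in the window, which does hold because $|A_{n+1}| \le 2\ell(k+1)\log_2 k$ means the window of the last $k < \ell(k+1)\log_2 k$ additions cannot have spanned a deletion that left the $\ell(k+1)\log_2 k$-element remnant and then added $k$ more, as that would make $|A_{n+1}| = \ell(k+1)\log_2 k + k \le 2\ell(k+1)\log_2 k$, still consistent) — I therefore expect the real argument pins down that $B$ is itself always a set to which $A$ once equaled or a suffix thereof, and applies Lemma \ref{lemm:increasing} plus Claim \ref{claim:elementary} with $i = k$, $y = 2$ (valid since $k \ge (k+1)\log 2$ fails... $(k+1)\log 2 \approx 0.69(k+1)$, and $k \ge 0.69(k+1)$ holds for $k \ge 3$), so the main obstacle is this constant-chasing for small $k$, handled by the hypothesis $k \ge 2$ and possibly a direct check.
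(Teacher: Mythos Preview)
Your overall decomposition is the same as the paper's: set $B = A_{n+1}\setminus A'$, use submodularity and nonnegativity to get $f(A_{n+1}) \le f(B) + f(A')$, and then show $f(B)\le f(A')$. The gap is in how you try to prove $f(B)\le f(A')$ (equivalently $f(A_{n+1})\ge 2f(B)$). You attempt to track the multiplicative growth of $f(A)$ along the \emph{actual trajectory} of the algorithm, arguing that each of the last $k$ additions multiplies $f(A)$ by $1+1/k$. As you yourself notice, this breaks down: a deletion can occur during the window of the last $k$ additions (nothing forbids $p<k$ post--last-deletion additions), so $B$ need not ever have been a value of $A$, and even the weakened bound $(1+1/k)^{k-1}$ you reach does not give the factor~$2$ for $k=2$. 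Your proposal never resolves this; it ends in unresolved constant-chasing.

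The paper sidesteps the trajectory entirely by telescoping \emph{inside the fixed set} $A_{n+1}$. With $A'_i=\{a'_1,\dots,a'_i\}$ and $B_i = B\cup A'_i$, one has $B_0=B$ and $B_k=A_{n+1}$. The key observation is that every element of $B_{i-1}$ lies in $A_{n+1}$ and was added before $a'_i$; since elements of $A_{n+1}$ were never deleted, $B_{i-1}$ is a \emph{subset} of the set $A$ held by the algorithm at the moment $a'_i$ was added. Hence by submodularity, the Line~\ref{line:add} condition, and monotonicity,
\[
f(B_i)-f(B_{i-1}) \;\ge\; f(B_{i-1})/k \;\ge\; f(B)/k,
\]
which sums to $f(A_{n+1})-f(B)\ge f(B)$, i.e.\ $f(A')\ge f(A_{n+1})-f(B)\ge f(B)$. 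This argument is completely insensitive to deletions because it never asserts that any $B_i$ was a past value of $A$, only that it is contained in one. That containment is exactly the idea you are missing; once you have it, your route (via $f(A_{n+1})\ge (1+1/k)^k f(B)\ge 2f(B)$) also goes through cleanly.
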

\begin{proof}
  If $|A_{n + 1}| \le k$, $\ff{ A' } \ge \ff{A_{n + 1}}$ by monotonicity, and the lemma holds. 
  Therefore, suppose $|A_{n + 1}| > k$. Let $A' = \{a'_1, \ldots, a'_k \}$, in the order these elements
  were added to $A$. Let $A'_i = \{ a_1', \ldots, a'_i \}$,
  $A'_0 = \emptyset$. 
  Then
  \begin{align*}
    \ff{A'} &\ge \ff{A_{n + 1}} - \ff{A_{n + 1} \setminus A'} \\ 
            &= \sum_{i = 1}^k \left[ \ff{ (A_{n + 1} \setminus A') \cup A'_{i-1} + a'_i } \right. \\ &\qquad \left. - \ff{ (A_{n + 1} \setminus A') \cup A'_{i - 1} } \right] \\
    &\overset{(a)}{\ge} \sum_{i=1}^k \frac{\ff{(A_{n + 1} \setminus A') \cup A'_{i-1}}}{k} \\ 
    &\overset{(b)}{\ge} \sum_{i=1}^k \frac{\ff{A_{n + 1} \setminus A'}}{k} = \ff{A_{n + 1} \setminus A'}, 
  \end{align*}
  where inequality (a) is by the condition on Line \ref{line:add}, and
  inequality (b) is from monotonicity of $f$. 
  Thus $f(A_{n + 1}) \le \ff{A_{n + 1} \setminus A'} + \ff{A'} \le 2f(A'). \qedhere$
\end{proof}
Since $k \ge 2$,
Lemmas \ref{lemm:Agood} and \ref{lemm:Aprime} show that the set $A'$ of
\algOneone satisifes $f(A') \ge \left( \frac{1}{4 + 2/(k^\ell - 1)} \right) \opt$.
By the choice of $\ell$, $f(A') \ge \ratioOne \opt$. \end{proof}

\subsection{Post-Processing: $\qs_c$++ } \label{sec:qs-post}
In this section, we describe a simple post-processing procedure to improve the
objective value obtained by $\qs_c$. At the termination of the stream, 
$\qs_c$ stores a set $A$ of size $O(k \log k)$ from which the set $A'$
and solution are extracted, on which the worst-case approximation ratio
is proven in the previous section. However, the set $A$ may be regarded
as a filtered ground set of size $O(k \log k) \le n$, 
upon which any algorithm may be run to extract
a solution. As long as the post-processing algorithm has query and time complexity
and runtime $O(n)$, Theorem \ref{thm:qs} still holds for the resulting
single-pass streaming algorithm with post-processing. This modification
of $\qs$ is termed $\qs$++.

We remark that 
the condition of Line \ref{line:add} of $\qs_c$ may be changed to the following condition:
$\ff{A \cup C} - \ff{A} \ge \delta \ff{A} / k,$
for input parameter $\delta > 0$. In this case, it is not difficult to
 extend the analysis in the previous
section to show that the algorithm achieves ratio $[c(1 + \delta )(1 + 1/\delta)]^{-1},$
in memory $O( k \log k )$ and the same query complexity and runtime. This ratio is optimized
for $\delta = 1$, but when using post-processing with $\qs_c$++, smaller values of $\delta$
result in larger sets $A$, although still bounded in $O(k \log k) \le n$. 
We found 
in our empirical evaluation in Section \ref{sec:exp} that setting $\delta = c/10$ for $\qs_c$++ 
yields good empirical results.

\subsection{Lower Bound on Query and Time Complexity} \label{apx:thm2}
While it is clear that at least $n$ queries are required for any
constant factor if the algorithm is only allowed to query feasible sets (consider $k = 1$),
our algorithms bypass this restriction. 
Our next result is a lower bound on the number of queries (and hence also
the time complexity)
required to obtain a constant-factor approximation.
\begin{theorem} \label{thm:lower}
  Let $c \ge 2$ be an integer, and let $\epsi > 0$. Any (randomized) approximation algorithm
  for \mon with ratio $1 / c + \epsi$ for \mon with probability $\delta$
  requires at least $\lceil \delta n / (ck - 1) \rceil$ oracle queries and
  hence $\Omega (n / k)$ time. 
\end{theorem}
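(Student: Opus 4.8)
The plan is to prove the lower bound via Yao's minimax principle: I will exhibit a distribution $\mathcal D$ over instances of \mon and show that every \emph{deterministic} algorithm that makes fewer than $\lceil \delta n/(ck-1)\rceil$ value queries returns a feasible set of value at least $(1/c+\epsi)\opt$ with probability smaller than $\delta$ over $(f,k)\sim\mathcal D$. Yao's principle then promotes this to the same query lower bound for randomized algorithms succeeding with probability $\delta$, and since $c$ and $\delta$ are constants this is $\Omega(n/k)$ queries, whence $\Omega(n/k)$ time since every query costs at least one step.

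To build $\mathcal D$, partition the ground set into $m=\lfloor n/(ck-1)\rfloor$ blocks of size $ck-1$, choose one block $B^{\star}$ uniformly at random, and plant inside it an optimal $k$-subset. The function $f$ will be a (weighted) coverage function — hence automatically monotone and submodular — designed so that: (i) $\opt$ is attained by the planted $k$-subset of $B^{\star}$, and it is a factor $c$ larger than the best value obtainable using no element of $B^{\star}$, so that any feasible set containing fewer than $\Theta(\epsi k)$ elements of $B^{\star}$ has value below $(1/c+\epsi)\opt$; and (ii) the value $f(S)$ of a query $S$ reveals information about the \emph{identity} of $B^{\star}$ only when $S$ is concentrated on that one block, and each query can be concentrated on at most one block. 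Property (ii) is obtained by making the coverage \emph{saturate}: the elements outside $B^{\star}$ together cover essentially all of the "reward" coordinates, so that the value of any set that touches many blocks — in particular any large or infeasible query — is a fixed quantity independent of which block is special, and only a query lying inside (or heavily weighted toward) a single block can expose whether that block is $B^{\star}$.

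Granting the construction, the heart of the argument is an indistinguishability lemma: for every fixed query $S$, $\Pr_{(f,k)\sim\mathcal D}[\,f(S)\ne f_{\mathrm{base}}(S)\,]\le (ck-1)/n$, where $f_{\mathrm{base}}$ is the value $S$ receives whenever $B^{\star}$ is not the single block on which $S$ is concentrated. Hence a deterministic algorithm making $t<\lceil\delta n/(ck-1)\rceil$ queries sees, by a union bound, the base value on \emph{all} of its queries with probability more than $1-t(ck-1)/n>1-\delta$; on that event its entire transcript — and therefore its output — equals its behavior on the base instance, a single fixed feasible set, which by (i) fails to achieve ratio $1/c+\epsi$ for all but $O(1/\epsi)=o(m)$ choices of $B^{\star}$. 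Thus the algorithm succeeds with probability less than $\delta$, completing the reduction.

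The step I expect to be hardest is the construction itself, specifically reconciling monotone submodularity with property (ii) given that infeasible (arbitrarily large) queries are allowed. A careless planting lets the algorithm binary-search for $B^{\star}$ using queries that are unions of many blocks; ruling this out forces the saturation mechanism, i.e. the reward coverage must already be (essentially) complete from the elements of any sufficiently rich family of non-special blocks, while a feasible solution drawing only a few elements per block still misses it by a factor $c$. Getting the coverage universe to satisfy all of this at once — the factor-$c$ gap at the feasible scale, saturation at the many-blocks scale, and the "one block probed per query" property even for queries that touch several blocks partially — is the delicate part; the union bound and the Yao reduction built on top of it are then routine.
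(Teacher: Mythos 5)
Your high-level strategy is the right one and matches the paper's: hide a ``needle'' that boosts \opt by a factor of $c$, show each value query detects it with probability at most $(ck-1)/n$, union-bound over the queries, and neutralize large infeasible queries by saturating the function. However, the proof as written has a genuine gap exactly where you predict it: the construction itself is never produced. Everything downstream (the indistinguishability lemma, the union bound, the Yao reduction) is conditional on a weighted coverage function simultaneously satisfying the factor-$c$ gap at the feasible scale, saturation against unions of many blocks, and the ``one block probed per query'' property, and you explicitly flag that reconciling these is delicate and leave it unresolved. Without that object in hand, the argument is a plan rather than a proof. There is also a small unaccounted loss in your final bound: the algorithm's \emph{output} set can itself intersect $B^{\star}$ by luck even when the transcript matches the base instance, so the success probability is $t(ck-1)/n$ plus an $O(1/(\epsi m))$ term, which you wave at with ``$o(m)$'' but do not fold into the claimed $\lceil \delta n/(ck-1)\rceil$ threshold.

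The good news is that the construction is far simpler than the block-and-coverage machinery you are bracing for. Take $f(A)=\min\{|A|,ck\}$ on $\mathcal U_n$, pick a single element $a$ uniformly at random, and set $g(A)=ck$ if $a\in A$ and $g(A)=f(A)$ otherwise; both are monotone and submodular. Saturation is automatic: any query of size at least $ck$ returns $ck$ under both functions, so without loss of generality all queries have size at most $ck-1$, and such a query distinguishes $f$ from $g$ only if it contains $a$ --- probability at most $(ck-1)/n$ per query, with no need for a block partition or a planted $k$-subset. The factor-$c$ gap is immediate ($\opt=k$ for $f$ versus $g(\{a\})=ck$), and a returned set of size at most $k\le ck-1$ achieving ratio $1/c+\epsi$ on $g$ must contain $a$, so it can be charged as one additional query. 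This single-element planting delivers every property on your wish list at once; I would recommend replacing the sketched coverage construction with it and then your union-bound argument goes through essentially verbatim.
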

Theorem \ref{thm:lower} implies no constant-factor approximation
exists with $o(n)$ time in the value query model.
Another consequence of Theorem \ref{thm:lower} is that any 
algorithm with ratio $(1/2 + \epsi)$ with probability greater than $1 - 1/n$ 
requires at least $n$ queries. 

\begin{proof}
We prove the theorem for instances of SMCC with cardinaity constraint $k \ge 1$. Let $c \in \nats$, $c \ge 2$, and let $0 < \epsi < 1$.
Let $n \in \nats$, and let
$\mathcal U_n = \{ 0, 1, \ldots, n - 1 \}$. Define $f: 2^{\mathcal U_n} \to \reals$
by $f(A) = \min \{ |A|, ck \}$, for $A \subseteq \mathcal U_n$.
Next, we define a function $g$ that is hard to distinguish from $f$:
pick $a \in \mathcal U_n$ uniformly randomly. Let $g(A) = f(A)$ if
$a \not \in A$, and $g(A) = ck$ otherwise. Clearly, both $f$ and $g$
are monotone and submodular. 

Now, consider queries to $f$ and $g$ of a set $A \subseteq \mathcal U_n$. These
queries can only distinguish between $f$ and $g$ if $|A| \le ck - 1$ and $a \in A$;
in any other case, the values of $f(A)$ and $g(A)$ are equal. 
Consider a (possibly adaptive) sequence
of queries of sets $A_1,A_2,\ldots,A_m$. Without loss of generality,
we may assume $|A_i| \le ck - 1$ for each $i$, since the query
of any set of larger size yields no information about the
element $a$. Then the algorithm can correctly
distinguish $f$ from $g$ iff $a \in \bigcup A_i$, which happens
with probability at most $m(ck - 1)/n$, since $\left| \bigcup A_i \right| \le m(ck - 1)$.
Therefore, to distinguish between $f$ and $g$ with probability
at least $\delta$ requires at least $\lceil \delta n / (ck - 1) \rceil$ queries.

Since any approximation algorithm with ratio at least $1/c + \epsi$ with probability $\delta$
would distinguish between $f,g$ with probability $\delta$, since
the optimal solution with $f$ has value $k$, while $g(a) = ck$,
the theorem is proven. \qedhere
\end{proof}

\section{Multi-Pass Streaming Algorithm to Boost Constant Ratio to $1 - 1/e - \epsi$} \label{sec:brlarge}
\begin{algorithm}[t]
   \caption{A procedure to boost to from constant ratio $\alpha$ to ratio $1 - e^{-1 + \epsi}$ in $O(1/\epsi)$ passes, $1$ query per element per pass, and $O(k)$ memory. } \label{alg:boost}
   \begin{algorithmic}[1]
     \Procedure{\algTwo}{$f, k, \alpha, \Gamma, \epsi$}
     \State \textbf{Input:} evaluation oracle $f:2^{\mathcal N} \to \reals$, constraint $k$, constant $\alpha$, value $\Gamma$ such that $\Gamma \le \opt \le \Gamma / \alpha$, and $0 < \epsi < 1$.
     \State $\tau \gets \Gamma / (\alpha k)$, $A \gets \emptyset$.
     \While{ $\tau \ge (1 - \epsi )\Gamma / (4k)$ }
     \State $\tau \gets \tau ( 1 - \epsi )$
     \For{ $n \in \uni$ }\label{line:pass}
     \If{$f( A + n ) - f(A) \ge \tau$}\label{line:marge1}
     \State $A \gets A + n$\label{line:marge2}
     \EndIf
     \If{ $|A| = k$ }
     \State \textbf{return} $A$
     \EndIf
     \EndFor     
     \EndWhile
     \State \textbf{return} $A$
     \EndProcedure
\end{algorithmic}
\end{algorithm}
In this section, we describe $\algTwo$ (Alg. \ref{alg:boost}),
which given any $\alpha$-approximation $\mathcal A$ for \mon can
boost the ratio to $1 - e^{-1 + \epsi} \ge 1 - 1/e - \epsi$ using the output of $\mathcal A$. 
Theorem \ref{thm:br} is proven below.
\begin{theorem} \label{thm:br}
  Let $0 < \epsi < 1$. Suppose a deterministic 
  $\alpha$-approximation $\mathcal A$ exists for
  \mon. Then algorithm $\algTwo$ is a multi-pass streaming algorithm that when
  applied to the solution 
  of $\mathcal A$ yields a solution within factor $1 - e^{-1 + \epsi} \ge 1 - 1/e - \epsi$ 
  of optimal                    
  in at most $n ( \log( 4 / \alpha ) / \epsi + 1$ queries,
  $\log( 4 / \alpha ) / \epsi + 1$ passes, and $\oh{k}$ memory.
\end{theorem}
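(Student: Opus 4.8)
The plan is to recognize \algTwo as a threshold--greedy procedure run with a geometrically decaying marginal threshold and a fixed floor of order $\opt/k$, fed the estimate $\Gamma$ of $\opt$ obtained from $\mathcal A$. First I would set up the instance: run $\mathcal A$ to get a feasible set $T$ and put $\Gamma = \ff{T}$; since $T$ is feasible, $\Gamma \le \opt$, and since $\mathcal A$ is an $\alpha$--approximation, $\Gamma = \ff{T} \ge \alpha\opt$, so $\alpha\opt \le \Gamma \le \opt$ and $\Gamma$ is a legal argument to \algTwo. Note that the set $A$ only ever grows, never exceeds size $k$, and is the returned set, so feasibility is immediate and the memory is $\oh{k}$; the passes use thresholds $\tau_1, \tau_2, \ldots$ with $\tau_j = (1-\epsi)^j\Gamma/(\alpha k)$. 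The ratio analysis then splits on how \algTwo halts.

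Case 1 is when \algTwo returns because $|A| = k$. Enumerate the returned elements as $a_1, \ldots, a_k$ in order of addition and set $A_i = \{a_1, \ldots, a_i\}$ (with $A_0 = \emptyset$); let $O$ be an optimal set, $|O| \le k$. I would prove that $\ff{A_{i+1}} - \ff{A_i} \ge \tfrac{1-\epsi}{k}\left(\opt - \ff{A_i}\right)$ for every $i$. If $a_{i+1}$ is added during the first pass, this holds trivially because its marginal gain is at least $\tau_1 = (1-\epsi)\Gamma/(\alpha k) \ge (1-\epsi)\opt/k \ge \tfrac{1-\epsi}{k}(\opt - \ff{A_i})$. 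If $a_{i+1}$ is added at threshold $\tau_j$ with $j \ge 2$, then every $o \in O \setminus A_i$ was scanned during the previous (full) pass at threshold $\tau_{j-1} = \tau_j/(1-\epsi)$ and rejected there against a subset of $A_i$, so submodularity gives $\ff{A_i + o} - \ff{A_i} < \tau_j/(1-\epsi)$; summing over $O \setminus A_i$ and using monotonicity and submodularity, $\opt - \ff{A_i} \le \sum_{o \in O \setminus A_i}\left(\ff{A_i + o} - \ff{A_i}\right) < \tfrac{k}{1-\epsi}\tau_j \le \tfrac{k}{1-\epsi}\left(\ff{A_{i+1}} - \ff{A_i}\right)$, which rearranges to the desired inequality. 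The usual telescoping then gives $\opt - \ff{A} \le \left(1 - \tfrac{1-\epsi}{k}\right)^k\opt \le e^{-(1-\epsi)}\opt$, i.e. $\ff{A} \ge (1 - e^{-1+\epsi})\opt$.

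Case 2 is when \algTwo falls through to the final \textbf{return}; then $|A| < k$ the whole time and, by the negation of the loop condition, the last threshold used satisfies $\tau_{\mathrm{last}} < (1-\epsi)\Gamma/(4k) \le \opt/(4k)$. Every $o \in O \setminus A$ was scanned during the last pass and rejected, so by submodularity $\ff{A + o} - \ff{A} < \tau_{\mathrm{last}}$, hence $\opt - \ff{A} \le \sum_{o \in O \setminus A}\left(\ff{A + o} - \ff{A}\right) < k\cdot \opt/(4k) = \opt/4$, i.e. $\ff{A} > \tfrac34\opt \ge (1 - 1/e - \epsi)\opt$. In both cases $\ff{A} \ge (1 - e^{-1+\epsi})\opt$, and the elementary bound $e^{-1+\epsi} \le 1/e + \epsi$ for $0 < \epsi < 1$ finishes the ratio. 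For the resources, each pass is a single sequential scan making exactly one query per element (evaluating $\ff{A + n}$, with $\ff{A}$ cached), and the number of passes is the number of times $\tau$ is multiplied by $(1-\epsi)$ before dropping below $(1-\epsi)\Gamma/(4k)$; since $\Gamma/\alpha \ge \opt$ and $-\log(1-\epsi) \ge \epsi$, this is at most $\log(4/\alpha)/\epsi + 1$, which gives the query bound $n(\log(4/\alpha)/\epsi + 1)$ and memory $\oh{k}$.

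I expect the main obstacle to be the bookkeeping in Case 1 rather than anything conceptual: one must separate the first-pass subcase (where the floor on the threshold does the work directly) from the later passes, and in the later passes one must be careful that a rejected optimal element's marginal was tested against a subset of the current $A$, so that submodularity transfers the rejection bound to $\ff{A_i + o} - \ff{A_i}$. The role of $\mathcal A$ is confined to supplying $\Gamma$, and Case 2 is exactly where it matters that the threshold floor is $\Theta(\opt/k)$, not $\Theta(\epsi\opt/k)$: the untested mass is then at most $\opt/4$, comfortably below $1 - 1/e$.
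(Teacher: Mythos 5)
Your proposal is correct and follows essentially the same route as the paper's proof: the same case split on whether $|A|=k$ at termination, the same per-step recurrence $\ff{A_{i+1}}-\ff{A_i}\ge \frac{1-\epsi}{k}(\opt-\ff{A_i})$ proved by distinguishing the initial high threshold from later passes via the rejection-plus-submodularity argument, the same $\opt/4$ bound on the untested mass in the unsaturated case, and the same pass/query accounting. The only cosmetic difference is that you split the recurrence's subcases by pass index rather than by whether $\tau\ge(1-\epsi)\opt/k$, which changes nothing.
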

If the algorithm $\mathcal A$ is 
the algorithm provided by Theorem \ref{thm:online},
this establishes Theorem \ref{thm:optimal-det}.

\begin{figure*}[!ht]
  \subfigure[Objective, small $k$]{ \label{fig-sp:v-Google-smallk}
    \includegraphics[width=0.30\textwidth,height=0.15\textheight]{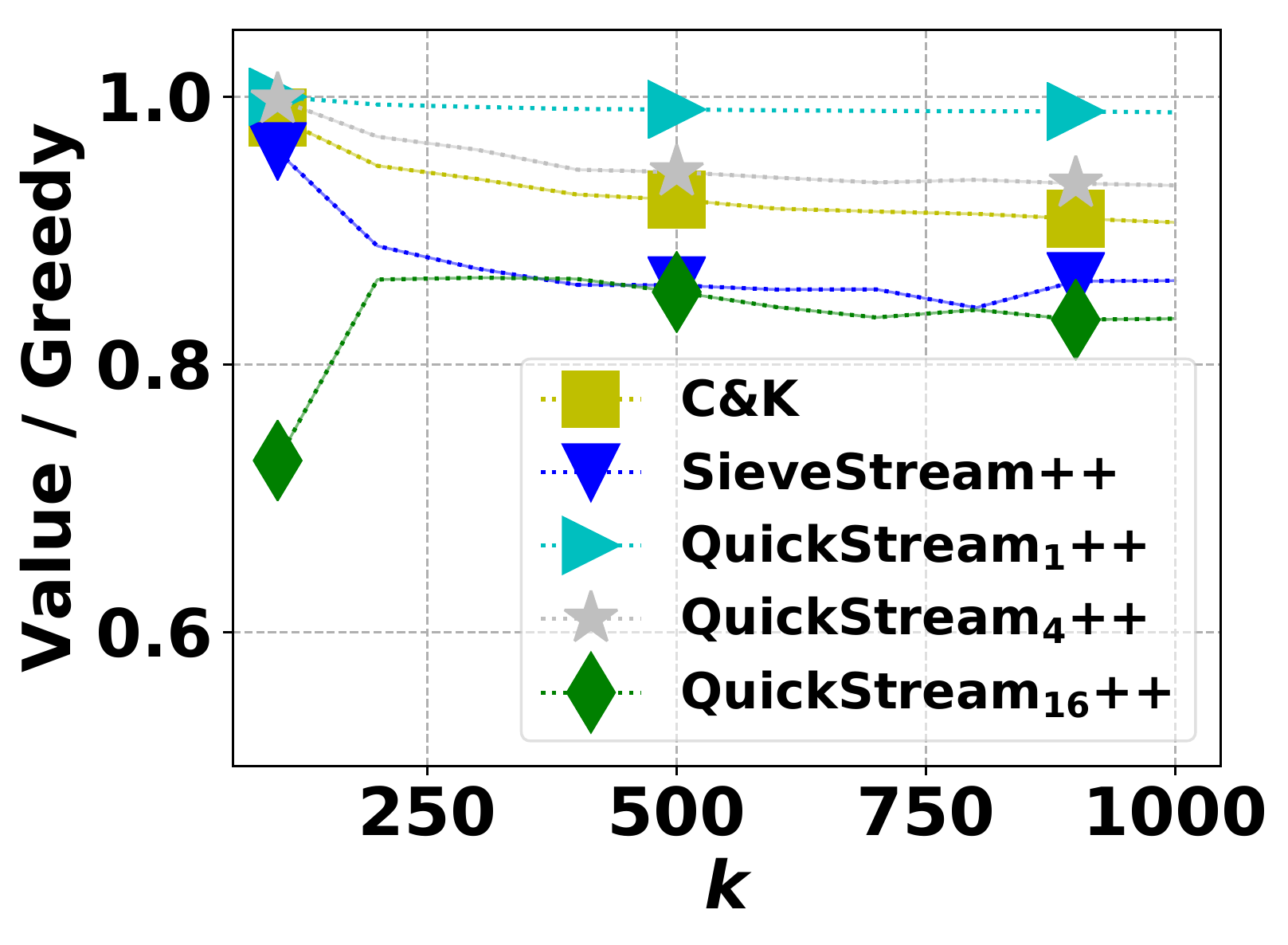}
  }
  \subfigure[Queries, small $k$]{ \label{fig-sp:q-Google-smallk}
    \includegraphics[width=0.30\textwidth,height=0.15\textheight]{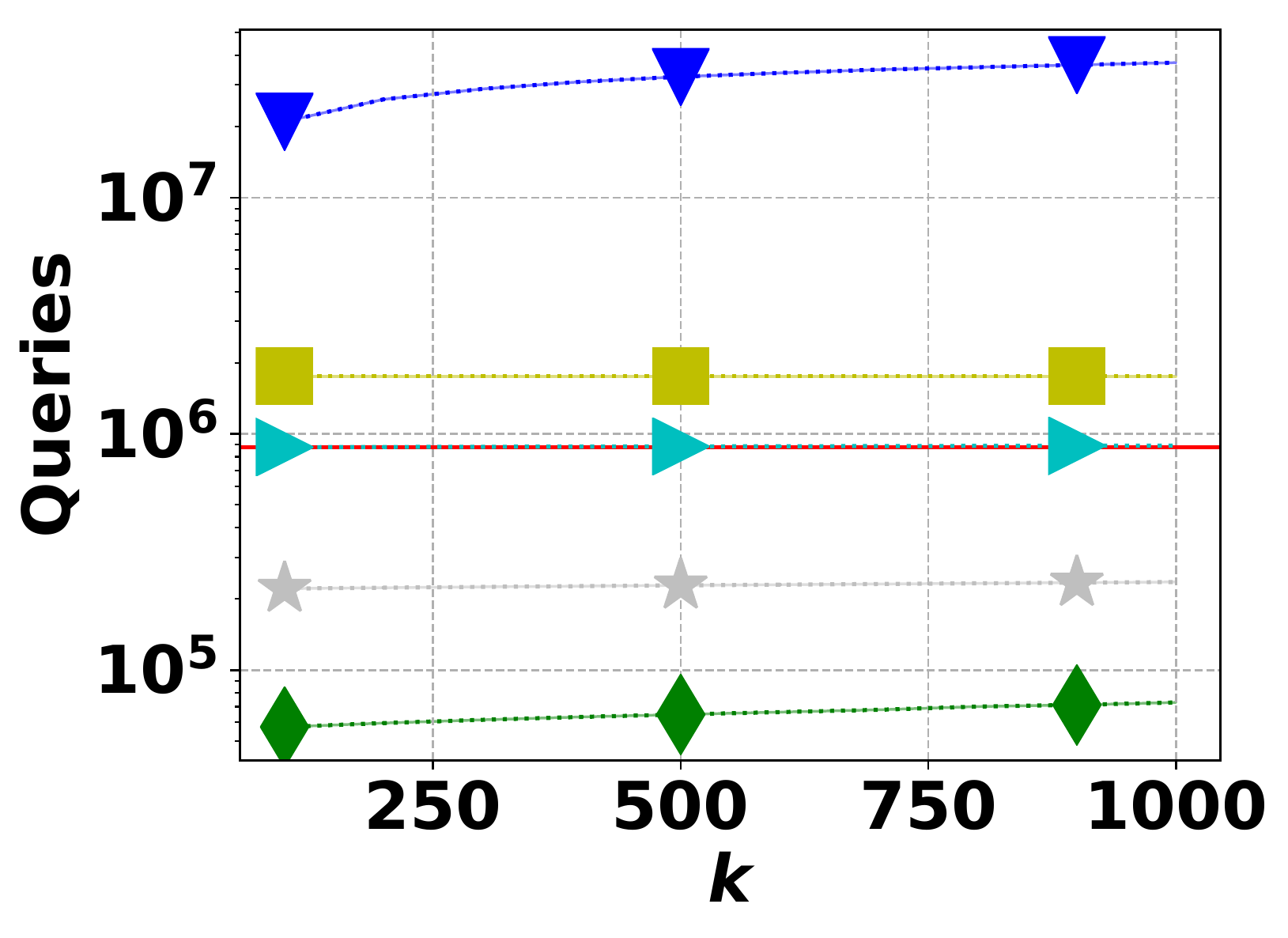}
  }
  \subfigure[Memory, small $k$]{ \label{fig-sp:m-Google-smallk}
    \includegraphics[width=0.30\textwidth,height=0.15\textheight]{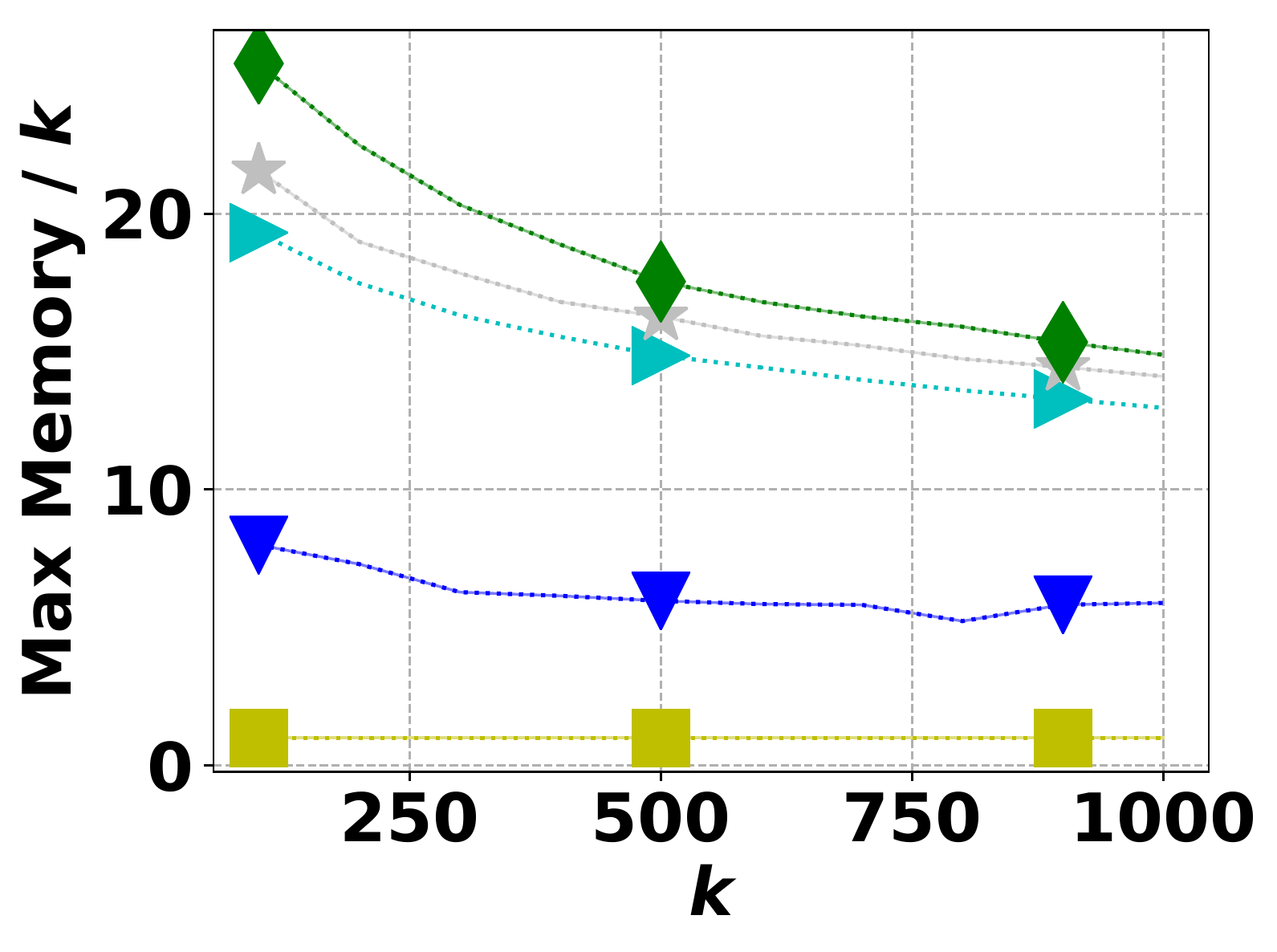}
  }

  \subfigure[Objective, large $k$]{ \label{fig-sp:v-Google-largek}
    \includegraphics[width=0.30\textwidth,height=0.15\textheight]{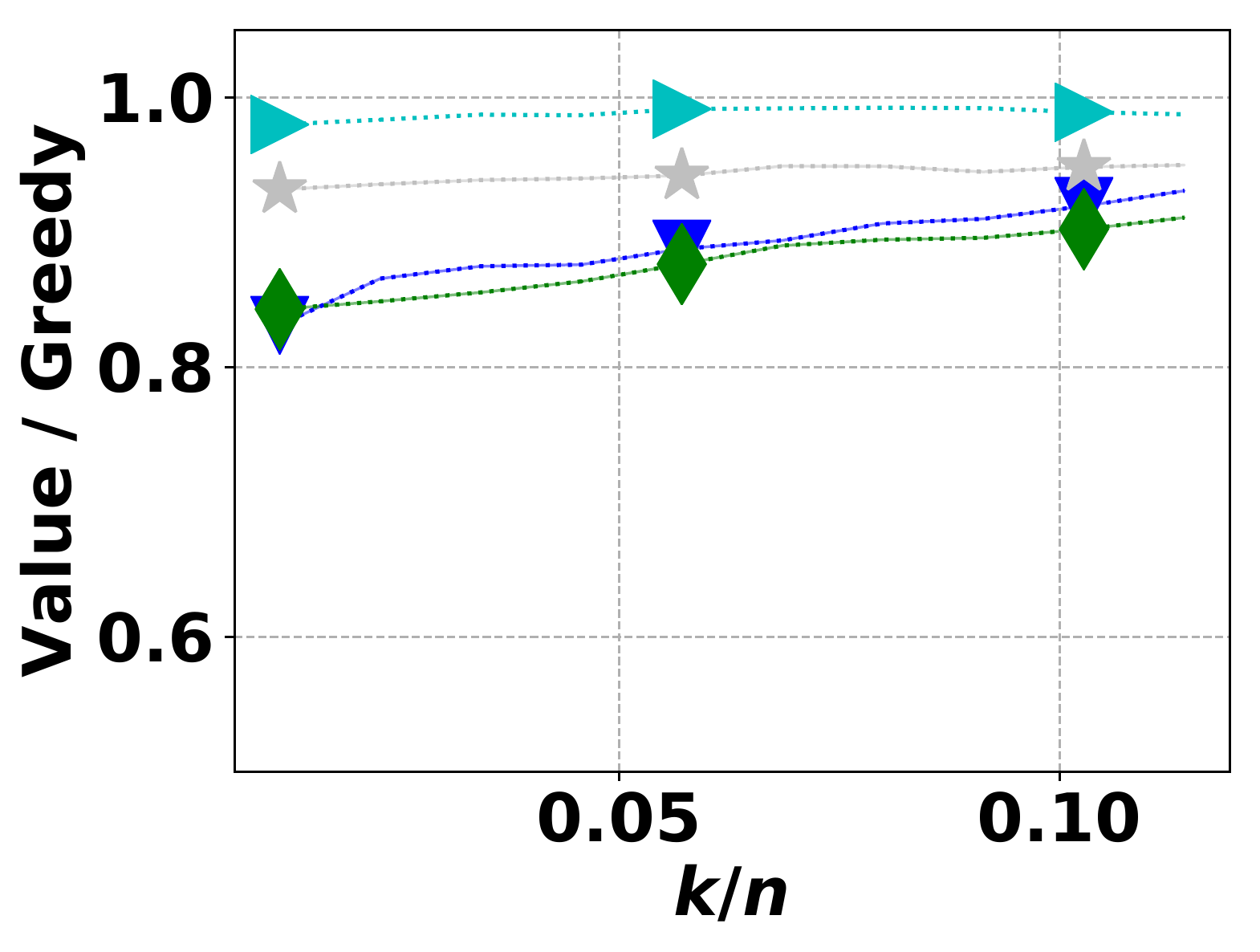}
  }
  \subfigure[Queries, large $k$]{ \label{fig-sp:q-Google-largek}
    \includegraphics[width=0.30\textwidth,height=0.15\textheight]{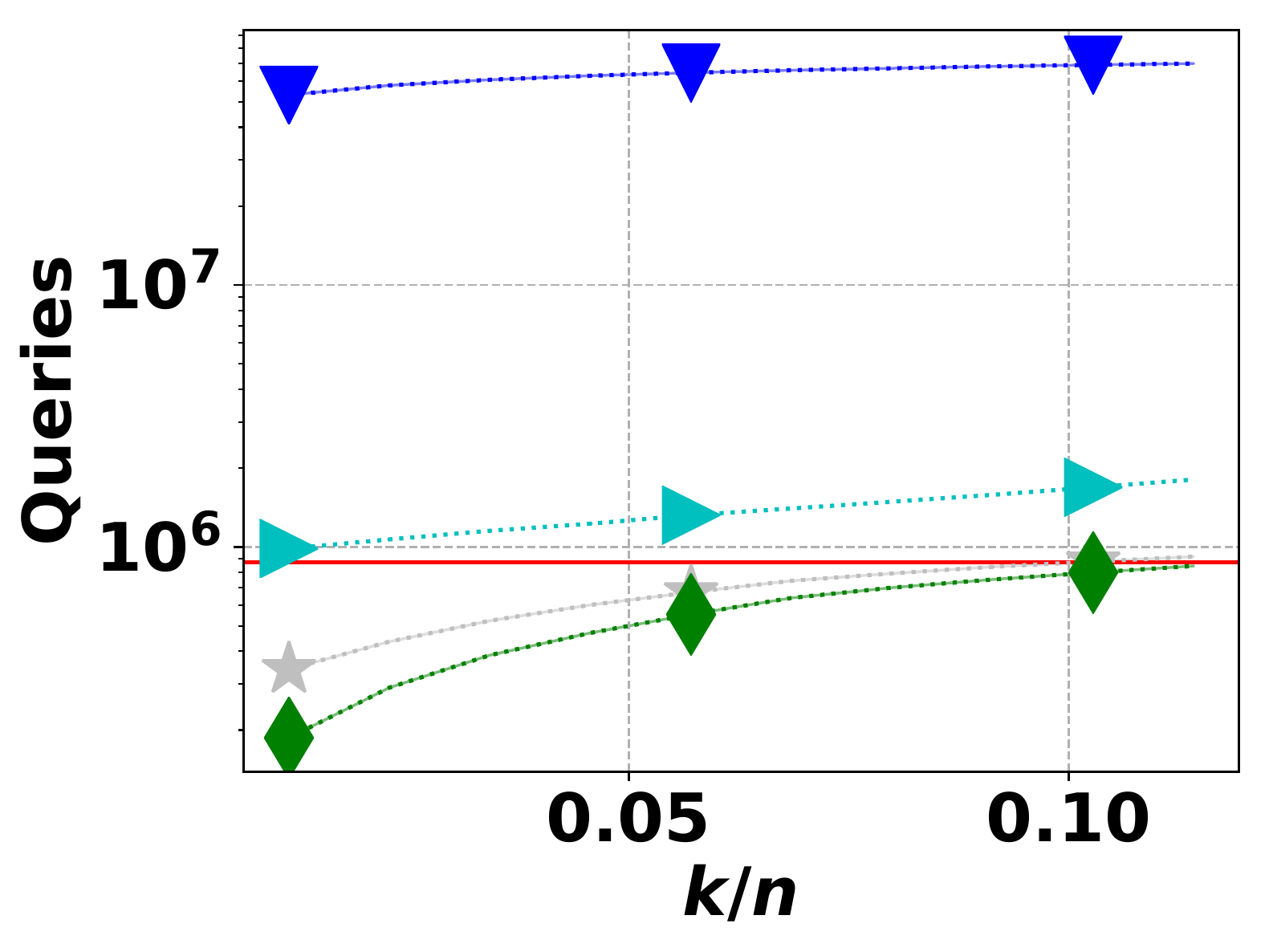}
  }
  \subfigure[Memory, large $k$]{ \label{fig-sp:m-Google-largek}
    \includegraphics[width=0.30\textwidth,height=0.15\textheight]{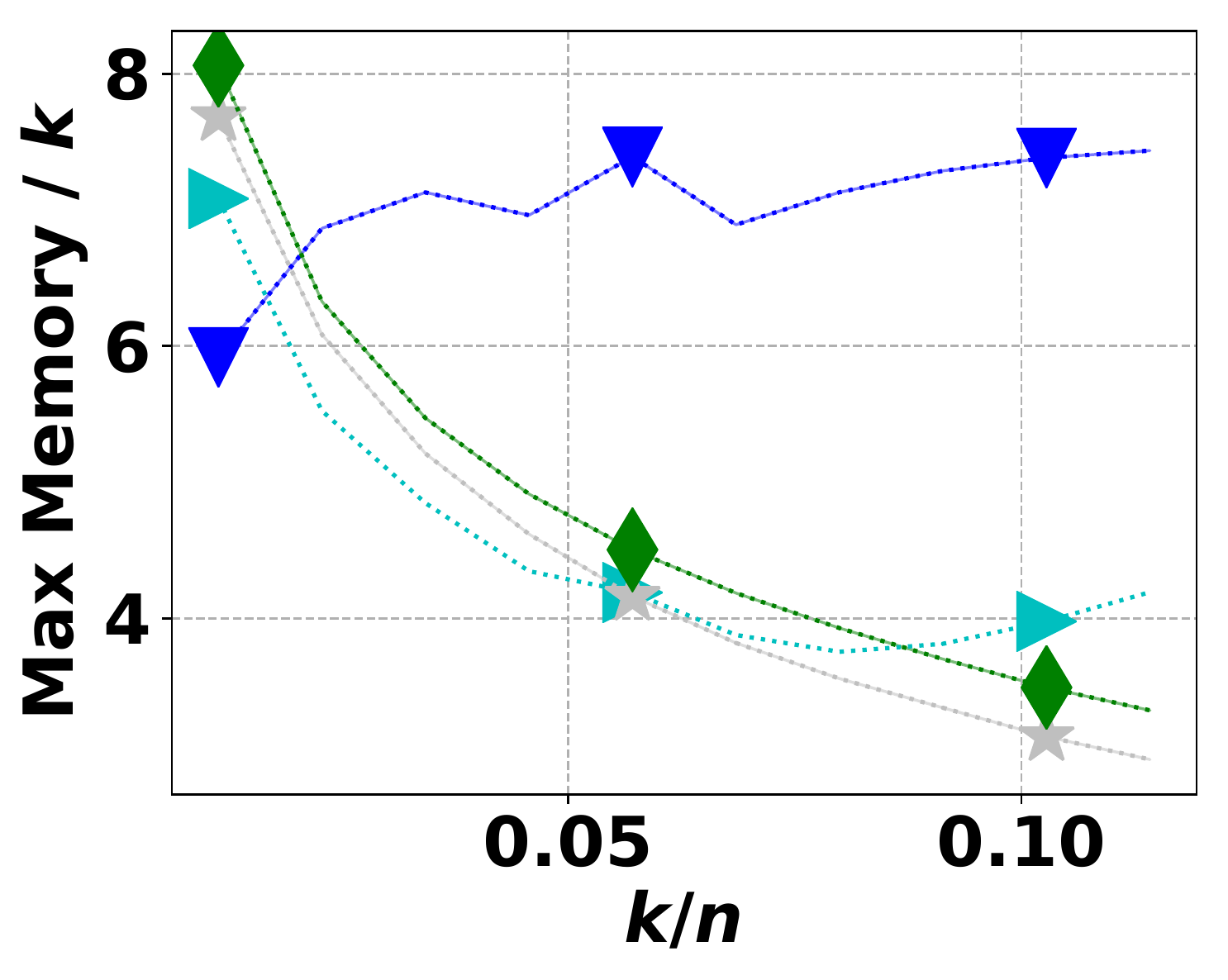}
  }
  \caption{Evaluation of single-pass streaming algorithms on web-Google $(n=875713)$, in terms of objective value normalized by the standard greedy value, total number of queries, and the maximum memory used by each algorithm normalized by $k$. The legend shown in \textbf{(a)} applies to all subfigures. }
  \label{fig:single-pass}
\end{figure*}
\begin{figure*}[t]
  \subfigure[]{ \label{fig:mp-v-google-smallk}
    \includegraphics[width=0.30\textwidth,height=0.15\textheight]{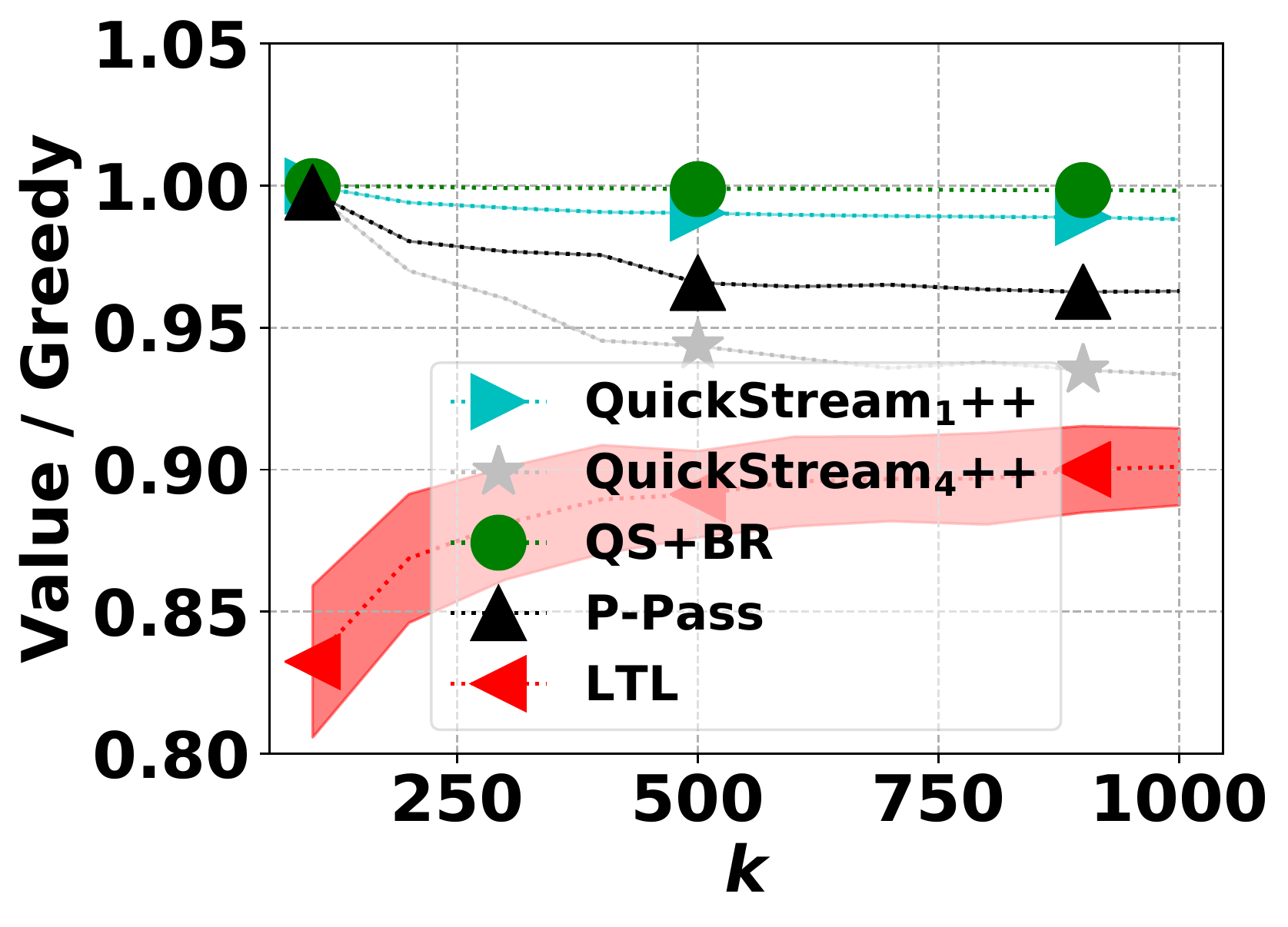}
  }
  \subfigure[]{ \label{fig:mp-q-google-smallk}
    \includegraphics[width=0.30\textwidth,height=0.15\textheight]{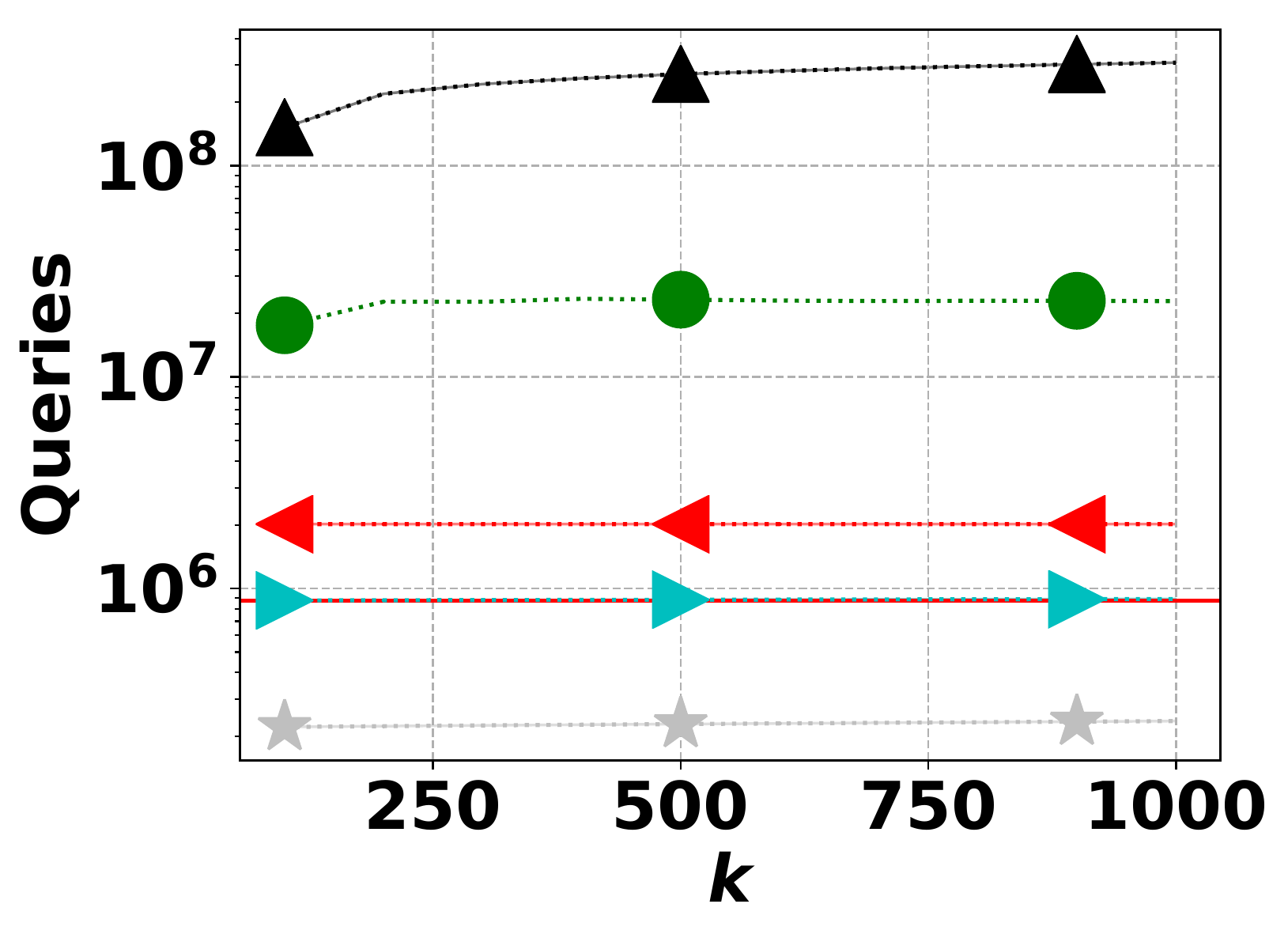}
  }
  \subfigure[]{ \label{fig:mp-v-google-largek}
    \includegraphics[width=0.30\textwidth,height=0.15\textheight]{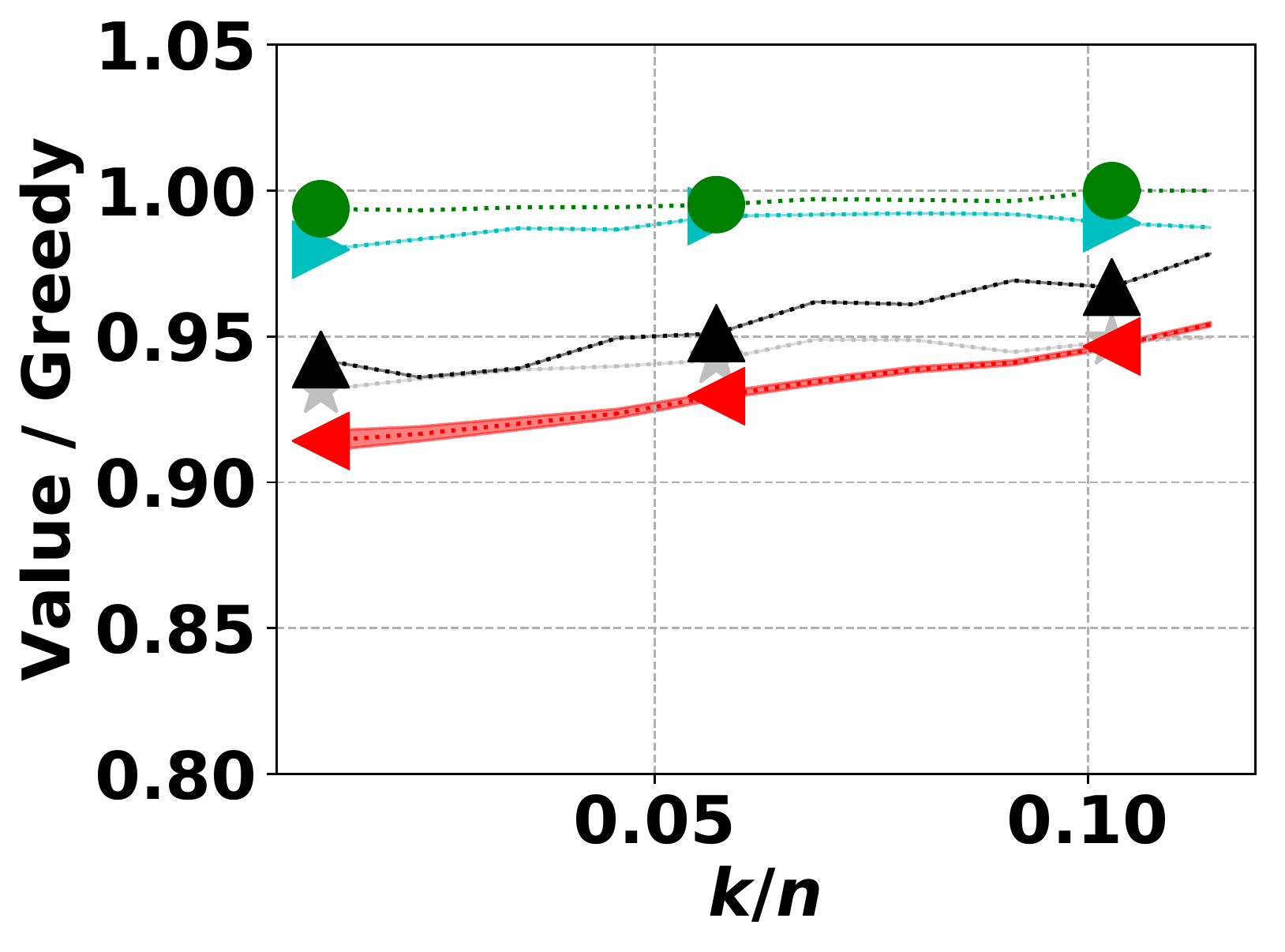}
  }
  \caption{Evaluation of our algorithms compared with the multi-pass \ppass and non-streaming algorithm \lazy. We compare the objective value (normalized by the standard \greedy objective value) and total queries on web-Google for the maximum cover application for both small and large $k$ values. The large $k$ values are given as a fraction of the number of nodes in the network. The legend shown in \textbf{(a)} applies to all subfigures. } \label{fig:main}
\end{figure*}

As input, the algorithm $\algTwo$
takes an instance $(f,k)$ of \mon, an approximate solution value $\Gamma$, and accuracy parameter $\epsi > 0$. On the instance $(f,k)$, it must hold that $\Gamma \le \opt \le \Gamma / \alpha$, where $\opt$ is the value of an optimal solution.
The algorithm works by making one pass (line \ref{line:pass})
through the ground set for each threshold value $\tau$, during which
any element with marginal gain at least $\tau$ to $A$ is added to $A$
(lines \ref{line:marge1} -- \ref{line:marge2}).
The maximum
and minimum values of $\tau$ are determined by $\Gamma, \alpha$, and $k$: initally $\tau = \Gamma / (\alpha k)$,
and the algorithm terminates if $\tau < (1 - \epsi) \Gamma / (4k)$; 
each iteration of the \textbf{while} loop,
$\tau$ is decreased by a factor of $(1 - \epsi )$.
The set $A$ is initially empty;
if $|A| = k$, the algorithm terminates and returns $A$; otherwise, at most
$O( \log (1 / \alpha ) / \epsi )$ passes are made until the minimum threshold value is reached.

Intuitively, the $1 - 1/e - \epsi$ ratio is achieved since the $\alpha$-approximate solution $\Gamma$ allows the algorithm to approximate the value for $\tau$ of $\opt / k$ in a constant number of guesses. Once this threshold has been reached, only $\log(1/4) / \epsi$ more values of $\tau$ are needed to achieve the desired ratio. While \algTwo may be used with
any $\alpha$-approximation, if it is used with $\qs_1$,
the resulting algorithm is the first linear-time, deterministic, $(1 - 1/e - \epsi)$-approximation
for SMCC, which is a multi-pass streaming algorithm.

\begin{proof}[Proof of Theorem \ref{thm:br}]
  Suppose $0 < \epsi < 1$.
  Let $(f,k)$ be an instance of \mon. The algorithm 
  is to first run $\mathcal A$,
  to obtain set $A'$. Next,
  \algTwo is called with parameters $(f, k, \alpha, f(A'), \epsi )$. 
  Observe that the inital value of
  the threshold
  $\tau$ in the \textbf{while} loop is at least $(1 - \epsi)\opt / k$, and the
  final value of $\tau$ is at most $\opt / (4k)$.
  
  Consider the case that at termination $|A| < k$. Then by the last iteration of the \textbf{while} loop,
  submodularity and monotonicity of $f$, 
  \begin{align*}
    f(O) - f(A) &\le f( O \cup A ) - f(A)  \\
    &\le \sum_{o \in O \setminus A} f( A \cup \{ o \} ) - f(A) \\
    &\le \sum_{o \in O \setminus A} \Gamma / (4k) \le \opt / 4,
  \end{align*}
  from which $f(A) \ge 3 \opt / 4 \ge ( 1 - e^{-1 + \epsi} ) \opt$.

  Next, consider the case that at termination $|A| = k$. Let $A_i = \{ a_1, a_2, \ldots, a_i \}$, ordered
  by the addition of elements to $A$, and let $A_0 = \emptyset$. 
  \begin{claim} \label{claim:basic} Let $i \in \{0, \ldots, k - 1 \}$. Then
    $$\ff{ A_{i + 1} } - \ff{ A_i } \ge \frac{(1 - \epsi)}{k}\left( \opt - \ff{ A_i } \right)$$
    \end{claim}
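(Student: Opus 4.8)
The plan is the standard threshold–greedy argument, adapted to the multi-pass structure of \algTwo. Fix $i \in \{0,\ldots,k-1\}$ and let $\tau'$ denote the value of the threshold $\tau$ during the pass in which $a_{i+1}$ is added to $A$ on line~\ref{line:marge1}. Because that addition is triggered only when the marginal gain meets the threshold, we have $f(A_{i+1}) - f(A_i) \ge \tau'$, so it suffices to show $\tau' \ge (1-\epsi)(\opt - f(A_i))/k$.

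First I would dispose of the case in which $a_{i+1}$ is added during the very first iteration of the \textbf{while} loop. By the pseudocode the threshold used on the first pass is $(1-\epsi)\Gamma/(\alpha k)$, and by hypothesis $\Gamma \ge \alpha\,\opt$, so $\tau' = (1-\epsi)\Gamma/(\alpha k) \ge (1-\epsi)\opt/k \ge (1-\epsi)(\opt - f(A_i))/k$, using $f(A_i)\ge 0$. Otherwise there is an immediately preceding pass whose threshold is $\tau'/(1-\epsi)$, and I would argue as follows. Since $A$ only grows over the course of the algorithm, for each $o \in O\setminus A_i$ the running set $A''$ at the moment $o$ was examined in that preceding pass satisfies $A'' \subseteq A_i$; moreover $o$ was rejected at that moment (it does not lie in $A_i$, and everything inserted up to and including $a_{i+1}$ lies in $A_i$), so $f(A''+o) - f(A'') < \tau'/(1-\epsi)$. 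Submodularity then gives $f(A_i + o) - f(A_i) \le f(A'' + o) - f(A'') < \tau'/(1-\epsi)$ for every $o \in O\setminus A_i$. Combining with monotonicity and submodularity in the usual telescoping way,
\[
\opt = f(O) \le f(O\cup A_i) \le f(A_i) + \sum_{o \in O\setminus A_i}\bigl(f(A_i+o)-f(A_i)\bigr) < f(A_i) + \frac{k\tau'}{1-\epsi},
\]
and rearranging yields $\tau' > (1-\epsi)(\opt - f(A_i))/k$, which is the claim.

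The one step I would be most careful about is the bookkeeping in the second case: verifying that at the instant $o$ is examined during the preceding pass the running set is contained in $A_i$, and that $o$ is genuinely rejected there — in particular the borderline possibility $o = a_{i+1}$ is harmless, since $a_{i+1}$ is inserted only later, at the strictly smaller threshold $\tau'$ of a subsequent pass. Everything else is routine. I would not grind through the remainder of the theorem here, but I note that once Claim~\ref{claim:basic} is available the case $|A|=k$ follows by the standard induction $\opt - f(A_i) \le (1 - (1-\epsi)/k)^i\,\opt$, giving $f(A_k) \ge (1 - e^{-(1-\epsi)})\opt \ge (1 - e^{-1+\epsi})\opt$, and the bounds on the number of passes and queries follow directly from the geometric schedule of $\tau$ between $(1-\epsi)\Gamma/(\alpha k)$ and $(1-\epsi)\Gamma/(4k)$.
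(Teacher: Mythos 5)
Your proposal is correct and follows essentially the same route as the paper's proof: a case split that isolates the first pass (where the initial threshold already exceeds $(1-\epsi)\opt/k$ because $\Gamma \ge \alpha\,\opt$) from later passes, where each $o \in O\setminus A_i$ was rejected at the preceding threshold $\tau'/(1-\epsi)$, so submodularity and the usual telescoping over $O\setminus A_i$ give the bound. The only cosmetic difference is that the paper splits on whether $\tau \ge (1-\epsi)\opt/k$ rather than on whether the pass is the first one; your bookkeeping about the running set being contained in $A_i$ and about accepted optimal elements already lying in $A_i$ is exactly the (implicit) justification the paper relies on.
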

    \begin{proof}
      Let $i \in \{0, \ldots, k - 1 \}$. First, suppose $a_{i + 1}$ is added to $A_i$ during
      an iteration with
      $\tau \ge (1 - \epsi )\opt / k$. In this case, 
      $f( A_{i + 1} ) - f(A_i) \ge \tau \ge (1 - \epsi)\opt / k \ge \frac{(1 - \epsi)}{k}( \opt - f( A_i ) )$. 

      Next, suppose $a_{i + 1}$ is added to $A_i$ during an iteration with $\tau < ( 1 - \epsi )\opt / k$. 
      Consider the set $O \setminus A_i$; in the previous iteration of the \textbf{while} loop, 
      no element of $O \setminus A_i$ is added to $A$; hence, by submodularity,
      for all $o \in O \setminus A_i$, $f( A_i + o ) - f(A_i) < \tau / (1 - \epsi)$. Therefore,
      \begin{align*}
        f( A_{i + 1} ) - f( A_i ) &\ge \tau \\
                                  &\ge \frac{(1 - \epsi)}{k} \sum_{o \in O \setminus A_i} f( A_i \cup \{ o \} ) - f( A_i) \\
        &\ge \frac{(1 - \epsi)}{k} ( f( O \cup A_i) - f(A_i) ) \\
        &\ge \frac{(1 - \epsi)}{k} ( \opt - f( A_i ) ).\qedhere
      \end{align*}
    \end{proof}
    From Claim \ref{claim:basic}, standard arguments show the $f(A_k) \ge \opt \left( 1 - e^{-1 + \epsi} \right) \ge \opt (1 - 1/e - \epsi)$.

    For the query complexity, observe that the \textbf{for} loop of \algTwo
    makes at most $n$ queries, and the \textbf{while}
    loop requires $\log ( \alpha / 4 ) / \log ( 1 - \epsi ) + 1 \le \log (4 /\alpha ) / \epsi + 1$ iterations. 
\end{proof}


\section{Empirical Evaluation} \label{sec:exp}
In this section, we demonstrate that the objective value achieved empirically by 
our algorithm $\qs_c$++ beats that of
the state-of-the-art algorithms \lazy, \sstream, and \ck,
while using the fewest queries and only a single pass.
Our multi-pass algorithm \qsbr ($\qs_1$ followed by $\algTwo$) achieved
mean objective value 
better than $0.99$ of the 
standard \greedy value across all instances tested.
\paragraph{Algorithms}
Our algorithms are compared to the following
methods:
\greedy, the standard greedy algorithm analyzed by
\citet{Nemhauser1978}, \lazy \citep{Mirzasoleiman2014},
\sstream \citep{Kazemi2019}, \ppass \citep{Jakub2018},
and \ck \citep{Chakrabarti2015},
as described in Section \ref{sec:intro}.
Randomized algorithms
were averaged over $10$ independent runs and the shaded regions in
plots correspond to one standard deviation.
Any algorithm
with an accuracy parameter $\epsi$ is run with $\epsi = 0.1$ unless otherwise
specified.

We evaluate our algorithm $\qs_c$++ for various values of $c$. The post-processing
procedure run on $A$ is taken to be our linear time \algTwo 
and we set parameter $\delta = c / 10$ (see Section
\ref{sec:qs-post} for the definition of $\delta$). We also evaluate our 
multi-pass algorithm \qsbr.

\paragraph{Applications}
We evaluate all of the algorithms on two applications of \mon: the first is 
maximum coverage on a graph: for each set of vertices $S$, the value of $f(S)$
is the number of vertices adjacent to the set $S$. The second application
is the revenue maximization problem on a social network \citep{Hartline2008},
 a variant of influence maximization. For detailed specification 
of these applications, see Appendix \ref{apx:exp}.
We evaluate on a variety of network technologies from
the Stanford Large Network Dataset Collection \citep{snapnets}, including 
ego-Facebook ($n = 4039$) and web-Google ($n=875713$),
among others listed in Appendix \ref{apx:exp}. Values of $k$ evaluated
include small values ($k \le 1000$) and large values $(k = \Omega(n))$.

\paragraph{Results: Single-Pass Algorithms}
In Fig. \ref{fig:single-pass}, representative
results are shown for the single-pass algorithms.
Results were qualitatively similar across applications and datasets;
additional results are shown in Appendix \ref{apx:exp}.

\paragraph{Objective Value} For small $k$ ($k \le 1000)$, the mean objective value (normalized by 
the standard \greedy value) obtained by
each single-pass algorithm across all instances is as follows: 
$\qs_1$++ 0.99; $\qs_4$++ 0.95; \ck 0.93; \sstream 0.87; $\qs_{16}$++ 0.84.
On the instances with large $k$ ($k \le 0.1n$), the means are: 
$\qs_1$++ 0.99; $\qs_4$++ 0.94; \sstream 0.89; $\qs_{16}$++ 0.88.

\paragraph{Queries} In terms of queries, $\qs_{c}$++ required roughly $n/c$ queries
for small $k$; the the next smallest was \ck, which required
$2n$ queries, followed by \sstream, which started at more than $10n$ queries
and increased logarithmically with $k$. 
For large $k$, the queries of $\qs_{c}$++ increased due to the
$O(n)$ post-processing step which depends on $k$, but always remained less than $2n$.

The algorithm \ck, while very efficient in terms of queries,
was unable to run in a reasonable timeframe on our larger instances. 
Most of the algorithms we evaluate (including both of our algorithms)
use a marginal gain query of sets that only increase in size,
which yields an optimized implementation for
the maximum cover application. However, \ck cannot be implemented with this optimization
and requires the full $O(n)$ oracle query; thus, on some instances we were
able to run the standard greedy algorithm but not \ck. This illustrates the
fact that the oracle query complexity only constitutes partial information about the
runtime of the algorithm. 

\paragraph{Memory} As shown in Figs. \ref{fig-sp:m-Google-smallk} and
\ref{fig-sp:m-Google-largek}, the memory usage of the algorithms
remained at most a constant times $k$; for $\qs_c$++, this constant
decreased as $k$ increased, and with large enough $k$, the algorithms used
less memory than \sstream. In terms of memory, \ck is optimal
both theoretically and in practice, as it stores only $k$ elements.

\paragraph{Results: Multi-Pass and Non-Streaming Algorithms}
In Fig. \ref{fig:main}, we show results of our algorithms
$\qs_c$++ and $\qsbr$, in comparison with the multi-pass
\ppass algorithm and the non-streaming \lazy algorithm
on web-Google. Surprisingly, our single-pass algorithm $\qs_1$++
beats the objective values of both \ppass and \lazy, as it obtained $0.99$ of the standard greedy
value on average across all instances (both
small and large $k$). The only algorithm
with better objective value than $\qs_1$++
is our multipass \qsbr. The algorithm $\qs_4$++
exceeded the objective value of \lazy despite
using $1/8$ of the queries. 

\section{Conclusions}
In this work, we have provided the first constant-factor
algorithms for SMCC that make a linear number of oracle queries
and arithmetic operations.  Supplemented with post-processing
heuristics, our single-pass
algorithm \qs achieves state-of-the-art empirical
objective value while
using fewer than $n$ queries of the objective function.
Our multi-pass algorithm \qsbr
nearly achieves the optimal worst-case ratio of $1-1/e$ and is the 
first deterministic algorithm to do so with linear query complexity.

\section{Acknowledgments}
The work of A. Kuhnle was partially supported by Florida State University. We thank Victoria G. Crawford and the anonymous reviewers
for helpful feedback on earlier versions of the manuscript.

\bibliographystyle{plainnatfixed}
\bibliography{mend}

\clearpage

\appendix
\onecolumn

\section{Additional Related Work} \label{apx:rw}

\paragraph{Online Algorithms} A more restrictive streaming model is the 
preemptive, online model proposed by \citet{Buchbinder2015b}.
In this setting, 
the algorithm
receives elements one by one in an arbitrary order and 
must maintain a competitive solution with respect to the optimal
solution on elements seen so far; the algorithm is allowed to discard elements that were
previously chosen into the solution and must maintain a feasible
solution (a set of size at most $k$).
\citet{Buchbinder2015b} described a deterministic $1/4$-competitive algorithm in
this model that requires $O(kn)$ queries.
\citet{Chan2017} improved the competitive ratio to $0.296$ for
a deterministic algorithm in $O(kn)$ queries; their ratio 
converges to $\approx 0.318$ as $k \to \infty$. 
They also show that the ratio of $0.318$ is optimal in this
online model.
Our algorithms 
are not online in this sense, since they maintain an
infeasible set of size $O(k \log k)$ rather than a feasible set
of size $k$ and if $c > 1$, $\qs_c$ requires additional processing
at termination of the stream. However, $\altqs_c$ requires no
processing at the end of the stream and does maintain a competitive ratio 
that converges to $\approx 0.316 / c$.

\section{Variants of \algOne} \label{apx:qs}
In this section, we describe algorithms that are similar in design to
$\qs_c$. 
In Section \ref{apx:qss}, we describe $\qss_c$, designed for
the case $k = 1$. Finally, in Section \ref{apx:altqs}, we
describe $\altqs_c$, designed to have an improved ratio for
$k \ge 8c/e$. 

Observe that Theorem \ref{thm:online} is a direct consequence of
Theorems \ref{thm:qs}, \ref{thm:qss}, and \ref{thm:altqs}.



\subsection{The $\qss_c$ Algorithm} \label{apx:qss}
In this section, we describe the algorithm $\qss_c$, 
a deterministic, single-pass algorithm that
has guarantees summarized in the following theorem.
Full pseudocode is given in Alg. \ref{alg:qss}.
After receipt of $c$ elements stored in buffer $C$,
the algorithm evaluates $f(C)$ and replaces $A$ 
with $C$ if $f(C) > f(A)$. At termination, the
maximum singleton in $A$ is returned. 

\begin{theorem} \label{thm:qss}
The algorithm $\qss_c$ is a deterministic, single-pass
algorithm with ratio $1/c$ if $k = 1$,
query complexity $\qsquery$,
and memory complexity $O(c)$.
\end{theorem}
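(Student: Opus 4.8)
The plan is to run a simplified version of the ``reduce a small infeasible set to a singleton'' argument used at the end of the proof of Theorem~\ref{thm:qs}; the simplification is that here $k=1$, so the incumbent set $A$ maintained by \qss$_c$ is itself a single block and never contains more than $c$ elements. First I would fix notation by splitting the stream into consecutive blocks $C_1,\dots,C_m$, each of size at most $c$, so that $m=\lceil n/c\rceil$, and record the one structural fact needed from the pseudocode (Alg.~\ref{alg:qss}): $\ff{A}$ is non-decreasing over the run, and immediately after the step processing block $C_i$ the incumbent satisfies $\ff{A}\ge\ff{C_i}$; hence at termination $\ff{A}\ge\ff{C_i}$ for every block $C_i$.

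The approximation ratio then follows in two short steps. \emph{Lower bound on $\ff{A}$:} since $k=1$ we have $\opt=\max_{e\in\mathcal U}\ff{\{e\}}$; if $e^\star$ attains this maximum and $C_j$ is the block containing $e^\star$, then by monotonicity $\ff{A}\ge\ff{C_j}\ge\ff{\{e^\star\}}=\opt$. \emph{From $A$ to its best singleton:} if $A=\emptyset$ at termination then $\ff{\emptyset}\ge\opt$ already and the returned set is trivially within the ratio, so assume $A\neq\emptyset$; since $f$ is nonnegative and submodular it is subadditive, so $\ff{A}\le\sum_{e\in A}\ff{\{e\}}\le|A|\cdot\max_{e\in A}\ff{\{e\}}\le c\,\max_{e\in A}\ff{\{e\}}$. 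As \qss$_c$ returns $S=\argmax_{e\in A}\ff{\{e\}}$, combining the two steps gives $\ff{S}\ge\ff{A}/c\ge\opt/c$, which is the claimed $1/c$ ratio.

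The resource bounds are immediate from the description: exactly one oracle query is issued per block to evaluate $\ff{C}$, for $\lceil n/c\rceil$ queries over the single pass, plus at most $c$ further queries at termination to evaluate the singletons of $A$, giving $\qsquery$ queries total; and the algorithm stores only the current buffer $C$ and the incumbent $A$, each of size at most $c$, so the memory is $O(c)$. I do not anticipate a genuine obstacle here; the two points deserving care are to invoke subadditivity of a nonnegative submodular function correctly so that precisely the factor $c$ appears (and to dispatch the trivial $A=\emptyset$ case), and to remember to count the $\le c$ singleton evaluations at stream termination in the query bound.
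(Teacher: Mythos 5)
Your proof is correct and follows essentially the same route as the paper's: the incumbent $A$ dominates every block (hence every singleton, by monotonicity), and subadditivity of the nonnegative submodular $f$ over the at most $c$ elements of $A$ yields $c\,\ff{S}\ge\ff{A}\ge\opt$. Your version just spells out the monotonicity step via the block containing the optimal singleton and handles the trivial $A=\emptyset$ case explicitly, which the paper leaves implicit.
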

\begin{proof}
Suppose $k = 1$. Observe that at termination of the algorithm
any singleton $u \in \mathcal U$ satisifes $f(u) \le f(A)$.
Further, at termination
of the stream, the element $a$ in $A$ maximizing $f$ is returned. Let $b$
be an optimal singleton; by submodularity and the fact $|A| \le c$, $cf(a) \ge f(A) \ge f(b)$. 

Memory complexity and query complexity are clear.
\end{proof}

\begin{algorithm}[t]
   \caption{For each $c \ge 1$, a single-pass algorithm with approximation ratio $1/c$ for SMCC if $k = 1$. The query complexity is  $\lceil n / c \rceil + c$, memory complexity is $O(c)$.} \label{alg:qss}
   \begin{algorithmic}[1]
     \Procedure{$\qss_c$}{$f, k$}
     \State \textbf{Input:} oracle $f$, cardinality constraint $k$
     \State $A \gets \emptyset$, $C \gets \emptyset$
     \For{ element $e$ received }
     \State $C \gets C + e$
     \If{$|C| = c$ or stream has ended}
     \If{$f(C) > f(A)$}
     \State $A \gets C$
     \EndIf
     \State $C \gets \emptyset$
     \EndIf
     \EndFor
     \State \textbf{return} $\argmax_{a \in A} f(a)$ \label{line:qssterm}
     \EndProcedure
\end{algorithmic}
\end{algorithm}
\subsection{The $\altqs_c$ Algorithm} \label{apx:altqs}
\begin{algorithm}[t]
   \caption{For each $c \ge 1$, a single-pass algorithm with approximation ratio $\altratio$ if $k \ge 8c/e$. The query complexity is  $\lceil n / c \rceil $.} \label{alg:altqs}
   \begin{algorithmic}[1]
     \Procedure{$\altqs_c$}{$f, k$}
     \State \textbf{Input:} oracle $f$, cardinality constraint $k$
     \State $A \gets \emptyset$, $A' \gets \emptyset$, $C \gets \emptyset$, $j \gets 0$
     \For{ element $e$ received }
     \State $C \gets C + e$
     \If{$|C| = c$ or stream has ended}
     \If{$f( A \cup C ) - f(A) \ge cf(A) / k$}\label{line:altadd}
     \State $A \gets A \cup C$
     \State $j \gets j + 1$
     \EndIf
     \If{$j > 6(k + 1) \log_2 (k)$}
     \State $A \gets \{ 3(k + 1)\log_2 (k) \text{ blocks most recently added to } A \}$\label{line:altdelete-A}
     \State $j \gets 3(k + 1) \log_2 (k)$
     \EndIf
     \State $C \gets \emptyset$
     \EndIf
     \State $A' \gets \{ k \text{ elements most recently added to } A \}$
     \EndFor
     \State \textbf{return} $A'$
     \EndProcedure
\end{algorithmic}
\end{algorithm}
In this section, we describe algorithms,
parameterized by $c$,  that require $\lceil n / c \rceil$
queries, have $\oh{ c k \log (k) }$ memory complexity, and have ratio
that converges to
$(1 - 1/e)/(1 + c)$ as $k \to \infty$. However, for small $k$, these algorithms
may not have any approximation ratio. We refer to these algorithms as $\altqs_c$.

Full pseudocode for $\altqs_c$ is given in Alg. \ref{alg:altqs}. The main differences
with $\qs_c$ are 1) a block $C$ is added to $A$ only if the gain exceeds $c f(A) / k$ rather
than $f(A) / k$ as in $\qs_c$; (2) $A'$ keeps only the last $k$ elements added,
rather than the last $k$ blocks; hence, there is no need to partition $A'$ 
at the end of the algorithm. Instead, the set $A'$ is simply returned. 
The rest of the section proves the following theorem.
\begin{theorem} \label{thm:altqs}
The algorithm $\altqs_c$ is a single-pass, deterministic streaming algorithm
with approximation ratio
$$\altratio ,$$
if $k \ge 8c/e$, query complexity $\lceil n / c \rceil $, and
memory complexity $O(ck \log( k ))$.
\end{theorem}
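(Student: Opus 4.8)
The plan is to mirror the analysis of $\qs_c$ (Theorem~\ref{thm:qs}), making the adjustments forced by the two structural changes in $\altqs_c$: the stricter addition threshold $cf(A)/k$ and the fact that $A'$ retains the last $k$ \emph{elements} (equivalently, the last $\lceil k/c\rceil$ blocks) rather than the last $ck$ elements. As in the proof of Theorem~\ref{thm:qs}, I would first reduce to a cleaner ``block'' formulation: set $\uni=\{C_1,\dots,C_m\}$ to be the sequence of blocks considered on Line~\ref{line:altadd}, and define $g(S)=f(\bigcup_{C\in S}C)$, a monotone submodular function on $2^\uni$ with $\max_{|S|\le k}g(S)\ge\opt$. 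The behavior of $\altqs_c$ on $(f,k)$ is then the behavior of a block-threshold algorithm on $(g,k)$, except the threshold is $cf(A)/k=cg(\cdot)/k$, i.e.\ the per-block gain must exceed a $c/k$ fraction rather than a $1/k$ fraction. The deletion bookkeeping is now in terms of the counter $j$ of blocks added, with $j$ capped at $6(k+1)\log_2 k$ and reset to $3(k+1)\log_2 k$ after a deletion, so the ``$\ell$'' of the $\qs_c$ proof is effectively $3$ and logarithms are base $2$.

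The core of the argument is the chain of four lemmas analogous to Lemmas~\ref{lemm:increasing}--\ref{lemm:Aprime}. First, an analogue of Lemma~\ref{lemm:increasing}: $f(A)$ does not decrease across iterations. The deletion at Line~\ref{line:altdelete-A} removes all but the last $3(k+1)\log_2 k$ blocks, so between the creation of the deleted set $B$ and the deletion event there have been at least $3(k+1)\log_2 k$ block-additions, each multiplying $f(A)$ by a factor $\ge 1+c/k\ge 1+1/k$; by Claim~\ref{claim:elementary} (with base-$2$ logs, using $\log_2 y\le \ln y\cdot\log_2 e<2\ln y$, or simply re-deriving the analogue) the survived portion dominates, and the same telescoping computation as in Lemma~\ref{lemm:increasing} shows $f(A_{i+1})\ge f(A_i)$ once $k$ is large enough. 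Second, an analogue of Lemma~\ref{lemm:aplus}: summing the geometric decay of the deleted sets $B^i$ gives $f(A^*)\le\bigl(1/(1-k^{-3})\bigr)f(A_{\mathrm{end}})=(1+1/(k^3-1))f(A_{\mathrm{end}})$, which is where the $1/(k^3-1)$ term in the stated ratio originates. Third, the analogue of Lemma~\ref{lemm:Agood} bounding $\opt$: for an optimal $O$, $f(O)-f(A^*)\le\sum_{o\in O\setminus A^*}\bigl(f(A^*+o)-f(A^*)\bigr)$, and since $o$ (or its block) failed the test at iteration $i(o)$ we get each term $\le c f(A_{i(o)})/k\le c f(A_{\mathrm{end}})/k$; summing over $|O|\le k$ gives $f(O)-f(A^*)\le c f(A_{\mathrm{end}})$, hence $\opt\le(1+c+1/(k^3-1))f(A_{\mathrm{end}})$, explaining the $1+c+1/(k^3-1)$ denominator.

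The fourth step is the one that changes most and is likely the main obstacle: bounding $f(A_{\mathrm{end}})$ by $f(A')$ where now $A'$ is only the last $k$ \emph{elements}, i.e.\ roughly the last $k/c$ blocks, not the last $k$ blocks. One cannot simply reuse Lemma~\ref{lemm:Aprime}'s factor-$2$ argument, because $|A'|$ may be as small as $\lceil k/c\rceil$ blocks and the marginal-gain condition only bounds gains by $c f(\cdot)/k$ per block; I would instead argue that the last $\lceil k/c\rceil$ added blocks each increased $f(A)$ multiplicatively by $\ge 1+c/k$, so a telescoping product over those blocks shows $f(A_{\mathrm{end}}\setminus A')$ is not much larger than $f(A')$ — specifically, re-running the Lemma~\ref{lemm:Aprime} computation element-by-element over the $k$ elements of $A'$, each contributing a gain $\ge c f(\text{current})/k$, yields $f(A_{\mathrm{end}}\setminus A')\le f(A')$ provided $k\ge 8c/e$ (this is exactly where that hypothesis enters, guaranteeing $\lceil k/c\rceil$ blocks worth of multiplicative growth dominates). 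Then $f(A_{\mathrm{end}})\le 2f(A')$ as before. Combining, $f(A')=g(S)\ge\max_{|S|\le k}g(S)/\bigl(2(1+c+1/(k^3-1))\bigr)\cdot(\text{growth correction})$; the $(1-1/e-2c/(ke)-c^2/(k^2e))$ factor comes from sharpening the Lemma~\ref{lemm:Agood}-style bound into the standard greedy-type recursion $f(A_{i+1})-f(A_i)\ge (c/k)(\opt-\text{stuff})$ applied over the retained blocks (rather than the crude ``$\le 1$'' bound), exactly as the $1-1/e$ term arises in Theorem~\ref{thm:br}; I would set this recursion up carefully, accounting for the $O(\log k)$ dropped blocks via the $A^*$-to-$A_{\mathrm{end}}$ comparison, and expand $(1-c/k)^{k/c}$ to first order to get the claimed lower-order corrections. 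The query complexity $\lceil n/c\rceil$ (no ``$+c$'' term, since nothing extra happens at stream end) and memory $O(ck\log k)$ are immediate from the cap on $|A|$.
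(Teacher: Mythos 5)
Your first three steps (the analogues of Lemmas~\ref{lemm:increasing}, \ref{lemm:aplus}, and \ref{lemm:Agood}) match the paper's proof: monotonicity of $f(A)$ across iterations, the geometric-series bound $\ff{A^*}\le(1+1/(k^3-1))\ff{A_{m+1}}$, and the bound $\opt\le(1+c+1/(k^3-1))\ff{A_{m+1}}$ obtained by summing the threshold $c\ff{A}/k$ over the $k$ elements of $O$. The gap is in your fourth step, which is exactly the step the paper flags as the one departing from Theorem~\ref{thm:qs}. Your claimed conclusion $\ff{A_{m+1}\setminus A'}\le\ff{A'}$, hence $\ff{A_{m+1}}\le 2\ff{A'}$, is neither what holds nor what is needed. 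The correct statement is $\ff{A_{m+1}}\ge(1+c/k)^{\lfloor k/c\rfloor}\ff{A_{m+1}\setminus A'}\ge \frac{e}{(1+c/k)^{2}}\ff{A_{m+1}\setminus A'}$ (the last $k$ elements comprise at least $\lfloor k/c\rfloor$ blocks, each multiplying $f(A)$ by at least $1+c/k$, and $(1+c/k)^{k/c}<e<(1+c/k)^{k/c+1}$), which yields $\ff{A_{m+1}}\le\frac{e}{e-(1+c/k)^{2}}\ff{A'}$. This factor is not $2$: it is about $e/(e-1)\approx 1.58$ for large $k$ but grows to about $2.9$ at the boundary $k=8c/e$, and the hypothesis $k\ge 8c/e$ exists precisely to keep $e-(1+c/k)^{2}>0$. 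Also note your per-element accounting (``each element contributing a gain $\ge c f/k$'') is wrong; only each \emph{block} is guaranteed that gain, which is why the exponent is $\lfloor k/c\rfloor$ and not $k$.

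More seriously, you misattribute the factor $1-1/e-2c/(ke)-c^2/(k^2e)$ to a greedy-type recursion $\ff{A_{i+1}}-\ff{A_i}\ge\frac{c}{k}(\opt-\ff{A_i})$ as in Theorem~\ref{thm:br}. No such recursion is available for $\altqs_c$: the acceptance test compares a block's gain to $c\ff{A}/k$, not to anything involving $\opt-\ff{A}$, and there is no descending threshold, so the standard $(1-1/k)^k\to 1/e$ argument has nothing to bite on. The $1-1/e$ term is simply the reciprocal of $\frac{e}{e-(1+c/k)^2}$ expanded: $\frac{e-(1+c/k)^2}{e}=1-1/e-\frac{2c}{ke}-\frac{c^2}{k^2e}$. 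Multiplying this by the $\frac{1}{1+c+1/(k^3-1)}$ from your third step gives the stated ratio directly; your proposed combination, which retains a factor $\frac12$ \emph{and} multiplies by a $(1-1/e)$-type correction, would produce roughly half the claimed bound.
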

\begin{proof}
In addition to Claim \ref{claim:elementary} above,
we need the following elementary fact about the number $e$:
\begin{claim} \label{claim:elementaryc}
For any real number $x > 0$, $(1 + 1/x)^x < e < (1 + 1/x)^{x + 1}$.
\end{claim}
We will actually show that \altqs maintains a competitive ratio
with respect to the optimal solution on the elements seen thus far;
suppose $m$ blocks have been received, 
let $C_i$ denote the $i$-th block of elements
processed on line \ref{line:altadd}.
Let $\opt_{\mathcal N}$ denote
the optimal
solution to SMCC with input $(f\restriction_{\mathcal N},k)$, 
where $\mathcal N = \bigcup_{i = 1}^m C_i \subseteq \mathcal U$.
 Let $A_i$ denote the value of set $A$ immediately
before processing the $i$-th block $C_i$, and let $A_{m + 1}$
denote the value of $A$ after processing all blocks. Finally, let $A^*$ denote
$\bigcup_{i + 1}^{m + 1} A_i$.

The following two lemmas have exactly analogous proofs to Lemmas \ref{lemm:increasing} and \ref{lemm:aplus} by replacing blocks for elements, $2$ for $\ell$, and noting that $(1 + c/k) \ge (1 + 1/k)$.
We provide the proofs for completeness.
\begin{lemma} \label{lemm:increasingc}
  Suppose $k > 1$; let $1 \le i \le m$. Then
  $f(A_i) \le f(A_{i + 1})$.
\end{lemma}
\begin{proof}
  If no deletion is made during the processing of
  block $C_i$,
  then the change
  in $f(A)$ is clearly nonnegative. So suppose 
  deletion of set $B$ from $A$ occurs on line \ref{line:altdelete-A} during
  this iteration. Observe that $A_{i + 1} = (A_i \setminus B) \cup C_i$,
  because the deletion is triggered by the addition of block $C_i$ to $A_i$.
  In addition, at some iteration $j < i$ 
  of the \textbf{for} loop, it holds that $A_j = B$.
  From the beginning of iteration $j$ to the beginning of iteration $i$
  there have been $3(k + 1) \log_2( k ) - 1 \ge 2(k + 1) \log_2(k)$ 
  additions of blocks and no deletions to $A$, which add
  precisely the elements 
  in $(A_i \setminus A_j)$.

  It holds that
  $$\ff{ A_i \setminus A_j } \overset{(a)}{\ge} \ff{ A_i } - \ff{ A_j } \overset{(b)}{\ge} \left( 1 + \frac{1}{k} \right)^{2(k + 1) \log k} \cdot f( A_j ) - f (A_j) \overset{(c)}{\ge} (k^{2} - 1 ) f( A_j ),
  $$
  where inequality (a) follows from submodularity and nonnegativity of $f$,
  inequality (b) follows from the fact that each addition from $A_j$ to $A_i$ increases
  the value of $f(A)$ by a factor of at least $(1 + 1/k)$, and inequality (c)
  follows from Claim
  \ref{claim:elementary}.
  Therefore
  \begin{equation} \label{ineq:1c}
    f(A_i) \le \ff{A_i \setminus A_j} + \ff{ A_j } \le \left( 1 + \frac{1}{k^2 - 1} \right) \ff{A_i \setminus A_j}.
  \end{equation}
Next,
  \begin{equation} \label{ineq:2c}
    \ff{ (A_i \setminus A_j) \cup C_i} - \ff{ A_i \setminus A_j } \overset{(d)}{\ge} \ff{ A_i \cup C_i } - \ff{ A_i } \overset{(e)}{\ge} \ff{A_i} / k  \ge \ff{A_i \setminus A_j } / k, \end{equation}
  where inequality (d) follows from submodularity, and inequality (e) is by
  the condition to add $C_i$ to $A_i$ on line \ref{line:altadd}.
  Finally, using Inequalities (\ref{ineq:1c}) and (\ref{ineq:2c}) as indicated below, we have
  \begin{align*}
    \ff{ A_{i + 1}} =\ff{ A_i \setminus A_j \cup C_i } \overset{\text{By (\ref{ineq:2c})}}{\ge} \left( 1 + \frac{1}{k} \right) \ff{ A_i \setminus A_j } \overset{\text{By (\ref{ineq:1c})}}{\ge} \frac{1 + \frac{1}{k}}{1 + \frac{1}{k^{2} - 1}} \cdot f(A_i) \ge f( A_i ), 
  \end{align*}
  where the last inequality follows since $k \ge 2$.
\end{proof}
\begin{lemma} \label{lemm:aplusc}
  $$\ff{A^*} \le \left( 1 + \frac{1}{k^3 - 1} \right)\ff{ A_{m + 1} }. $$
\end{lemma}
\begin{proof}
  Observe that $A^* \setminus A_{m + 1}$ may be written as the
  union of pairwise disjoint sets, each of which is size
  $3c(k+1) \log_2(k) + 1$ and was deleted on line \ref{line:altdelete-A} of Alg. \ref{alg:altqs}.
  Suppose there were $l$ sets deleted from $A$; write 
  $A^* \setminus A_{m + 1} = \{ B^i : 1 \le i \le l \}$, where
  each $B^i$ is deleted on line \ref{line:delete-A}, ordered such that $i < j$
  implies $B^i$ was deleted after $B^j$ (the reverse order in which they were deleted); 
  finally, let $B^0 = A_{m + 1}$.

  \begin{claim} \label{claim:deletionc}
    Let $0 \le i \le l$.
    Then $\ff{B^i} \ge k^{3} \ff{B^{i+1}}$.
  \end{claim}
  \begin{proof}
    Let $B^i$, $B^{i + 1} \in \mathcal B$. 
    There are at least $3(k + 1) \log k + 1$ blocks added to $A$ and exactly one deletion
    event during the period between starting when $A = B^{i + 1}$ until $A = B^i$. 
    Moreover, each addition except possibly one (corresponding to the deletion event)
    increases $f(A)$ by a factor
    of at least $1 + 1/k$. Hence, by Lemma \ref{lemm:increasingc} and Claim \ref{claim:elementary},
  $\ff{B^i} \ge k^{3} \ff{B^{i+1}}$.
  \end{proof}
  By Claim \ref{claim:deletionc}, for any $0 \le i \le l$
  $\ff{ A_{m + 1} } \ge k^{3 i} \ff{ B^i }$.
  Thus, by submodularity and nonnegativity of $f$ and the 
  sum of a geometric series,
  \begin{align*}
    \ff{ A^* } \le \ff{ A^* \setminus A_{m + 1} } + \ff{ A_{m + 1} } &\le \sum_{i = 0}^m \ff{ B^i } \\
    &\le \ff{ A_{m + 1} } \sum_{i = 0}^\infty k^{-3 i}  \\
    &=  \ff{ A_{m + 1} } \left( \frac{1}{1 - k^{-3 }} \right). \qedhere
  \end{align*}
\end{proof}
The next lemma shows that $f(A_{m + 1} )$ has a significant fraction of the
optimal value.  
\begin{lemma} \label{lemm:Agoodc}
 $\left(1 + c + \frac{1}{k^3 - 1} \right)\ff{ A_{m + 1} } \ge \opt_\uni$.
\end{lemma}
\begin{proof}
  Let $O \subseteq \uni$ be an optimal solution to of size $k$
  to \mon. Let $C_o$ denote the block containing $o \in O$ that is 
  considered for addition into $A$.
  Then by monotonicity and submodularity of $f$, the fact that if block $C_i$
  is not added to $A$, $\ff{A \cup C_i} - \ff{A_i} < c\ff{A_i} / k$, and by Lemma
  \ref{lemm:increasingc}, we have
  \begin{align*}
    f(O) - \ff{ A^* } &\le \ff{ O \cup A^* } - \ff{ A^* } \\
                      &\le \sum_{o \in O \setminus A^*}\ff{ A^* \cup \{ o \} } - \ff{ A^* } \\
                      &\le \sum_{o \in O \setminus A^*}\ff{ A_o \cup \{ o \} } - \ff{ A_o } \\
                      &\le \sum_{o \in O \setminus A^*}\ff{ A_o \cup C_o } - \ff{ A_o } \\
                      &\le \sum_{o \in O \setminus A^*} c \ff{ A_{o} } / k \\
                      &\le \sum_{o \in O \setminus A^*} c\ff{ A_{m + 1} } / k \le c\ff{ A_{m + 1} }.
  \end{align*}
  From here, the result follows from Lemma \ref{lemm:aplusc}.
\end{proof}
Recall that $\altqs_c$ returns the set $A'$, the last $k$ elements added to $A$. 
The last portion of the proof shows that $f( A' )$ is a large
fraction of the value of $f(A_{m + 1})$; this part of the proof 
departs from the proof of Theorem \ref{thm:qs} above.
\begin{lemma} \label{lemm:Aprimec}
  Let $A'$ have its value after processing block $C_m$.
  Then
  $$\ff{ A_{m + 1} } \le \left( \frac{e}{e - (1 + c/k)^2} \right) \ff{A'}.$$.
\end{lemma}
\begin{proof}
  If $|A_{m + 1}| \le k$, $A' = A_{m + 1}$, and the lemma holds. Suppose $|A_{m + 1}| > k$. 
  Let $A' = \{a'_1, \ldots, a'_k \}$, in the order these elements
  were added to $A_{m + 1}$. Let $A'_i = \{ a_1', \ldots, a'_i \}$,
  $A'_0 = \emptyset$. Observe that by the condition on the marginal gain
  the addition of each block to $A$,
  $$f(A_{m + 1}) \ge (1 + c/k)^{\lfloor k / c \rfloor} \ff{ A_{m + 1} \setminus A' } \ge \frac{e}{( 1 + c/k )^{2}} \ff{ A_{m + 1} \setminus A' },$$
  by Claim \ref{claim:elementaryc}.
  Hence, by submodularity and nonnegativity of $f$, 
  \begin{equation} \label{ineq:3c}
    f( A' ) \ge f(A_{m + 1}) - \ff{ A_{m + 1} \setminus A' } \ge \left( \frac{e}{(1 + c/k)^2} - 1 \right) \ff{A_{m + 1} \setminus A' }.
  \end{equation}
  From (\ref{ineq:3c}), we have
  \begin{align*}
    \ff{A_{m + 1}} \le \ff{A_{m + 1} \setminus A'} + \ff{A'} &\le \left( \left( \frac{e}{(1 + c/k)^2} - 1 \right)^{-1} + 1 \right) \ff{A'} \\
    &= \left( \frac{e}{e - (1 + c/k)^2} \right) \ff{A' }. \qedhere
  \end{align*}
Since $k \ge 8c/e$,
Lemmas \ref{lemm:Agoodc} and \ref{lemm:Aprimec} show that the set $A'$ of
$\altqs_c$ maintains $f(A') \ge \altratio \opt_\uni$.
\end{proof}
\end{proof}

\section{Additional Empirical Evaluation} \label{apx:exp}

\subsection{Applications and Datasets}
The maximum cover objective is defined as follows. Suppose
$G = (V,E)$ is a graph. For any set $S \subseteq V$, 
let $S^I$ be the set of all vertices incident with any edge
incident with a vertex in $S$.
Then, define $$f(S) = \left| S^I \right| .$$
This objective is monotone and submodular.

The revenue maximization application uses the concave
graph model 
introduced in \citet{Hartline2008}. Given a social
network $G= (V,E)$ with nonnegative edge weights,
each user $u \in V$ is associated with a 
non-negative, concave function $f_u: \reals \to \reals$. 
In \citet{Hartline2008}, optimal marketing strategies
are defined, for which each user $u \in V$ has an associated
revenue function $R_u(S)$, which depends on the set $S$ of
players who have bought the item.
Thus, the total revenue from set $S$ is
$$f(S) = \sum_{u \in V} R_u( S ).$$
For this evaluation, we choose $R_u(S) = \left( \sum_{v \in S} w_{uv} \right)^{\alpha_u}$
where $\alpha_u$ is chosen independently for each $u$ uniformly in $(0,1)$.
The revenue maximization objective $f$ is monotone and submodular.

Network topologies are used from Stanford Large Network Dataset Collection \citep{snapnets}:
ca-Astro ($n = 18772$), a collaboration network
of Arxiv Astro Physics; ego-Facebook ($n=4039$); and as-Skitter ($n=1696415$).

\subsection{Additional Results}
\begin{figure}[t]
  \subfigure[]{ \label{fig:val-Facebookadd}
    \includegraphics[width=0.30\textwidth,height=0.15\textheight]{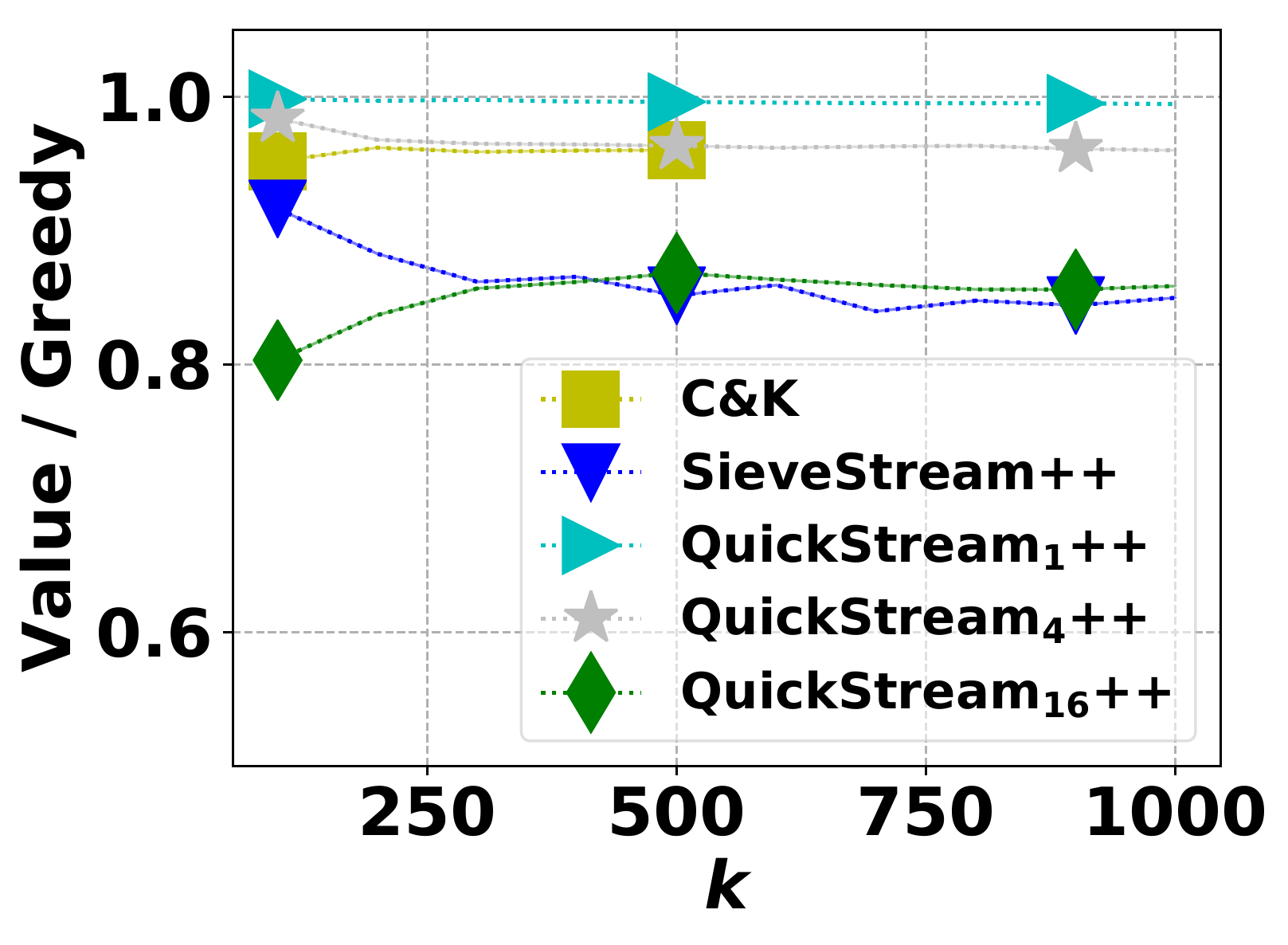}
   }
   \subfigure[]{ \label{fig:val-Facebookadd}
    \includegraphics[width=0.30\textwidth,height=0.15\textheight]{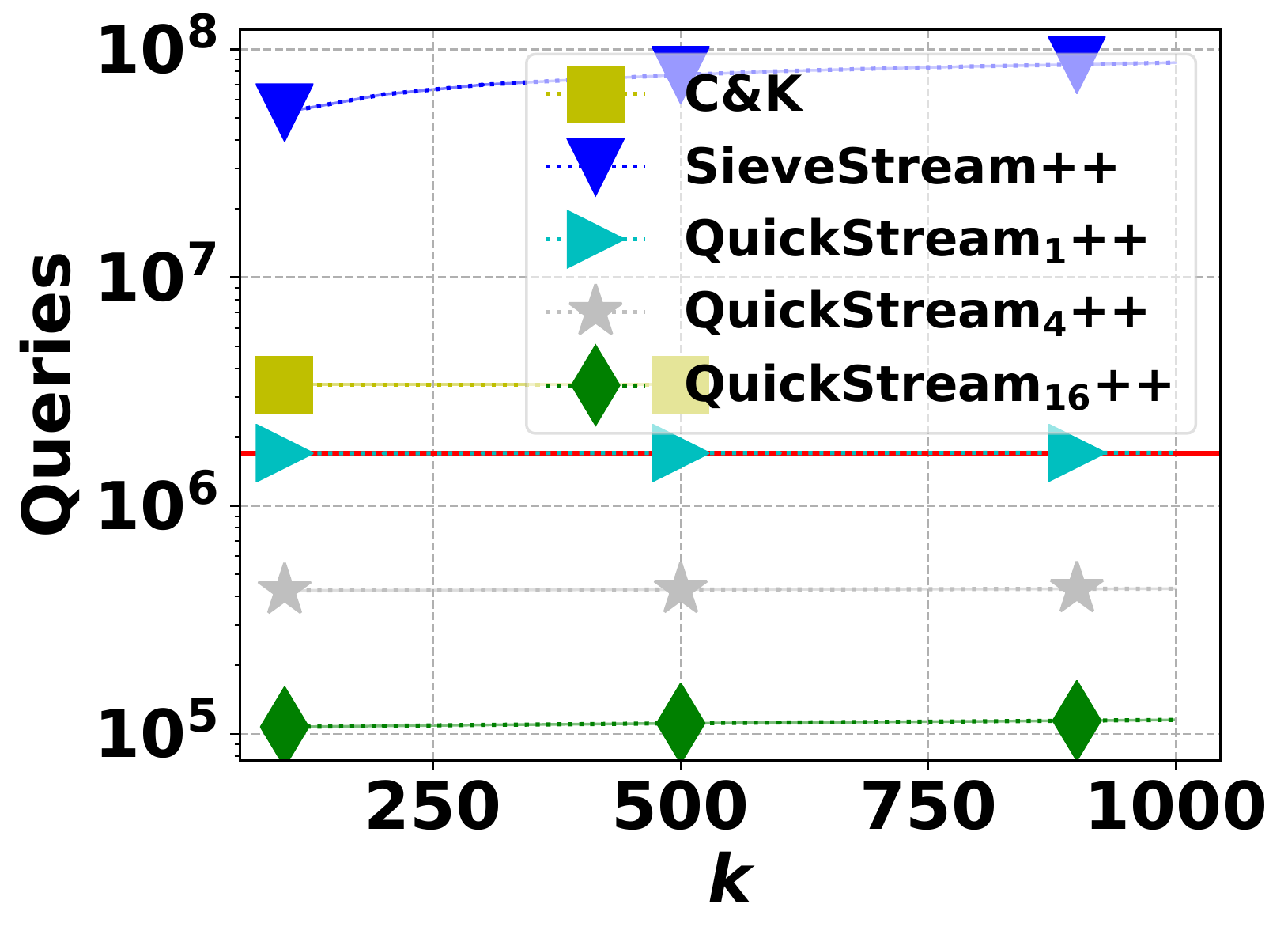}
   }
    \subfigure[]{ \label{fig:val-Facebookadd}
    \includegraphics[width=0.30\textwidth,height=0.15\textheight]{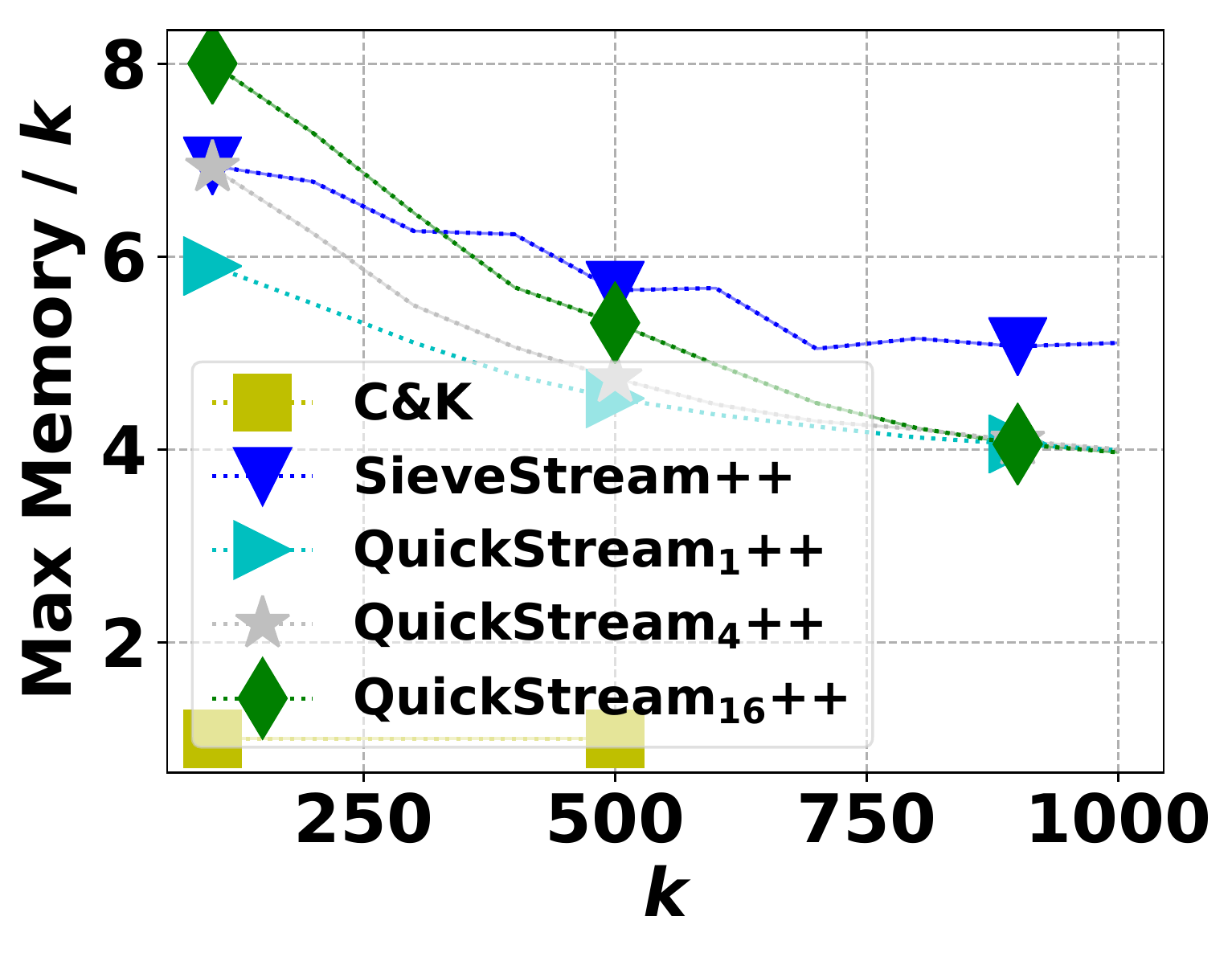}
   }

   \subfigure[]{ \label{fig:val-Facebookadd}
    \includegraphics[width=0.46\textwidth,height=0.20\textheight]{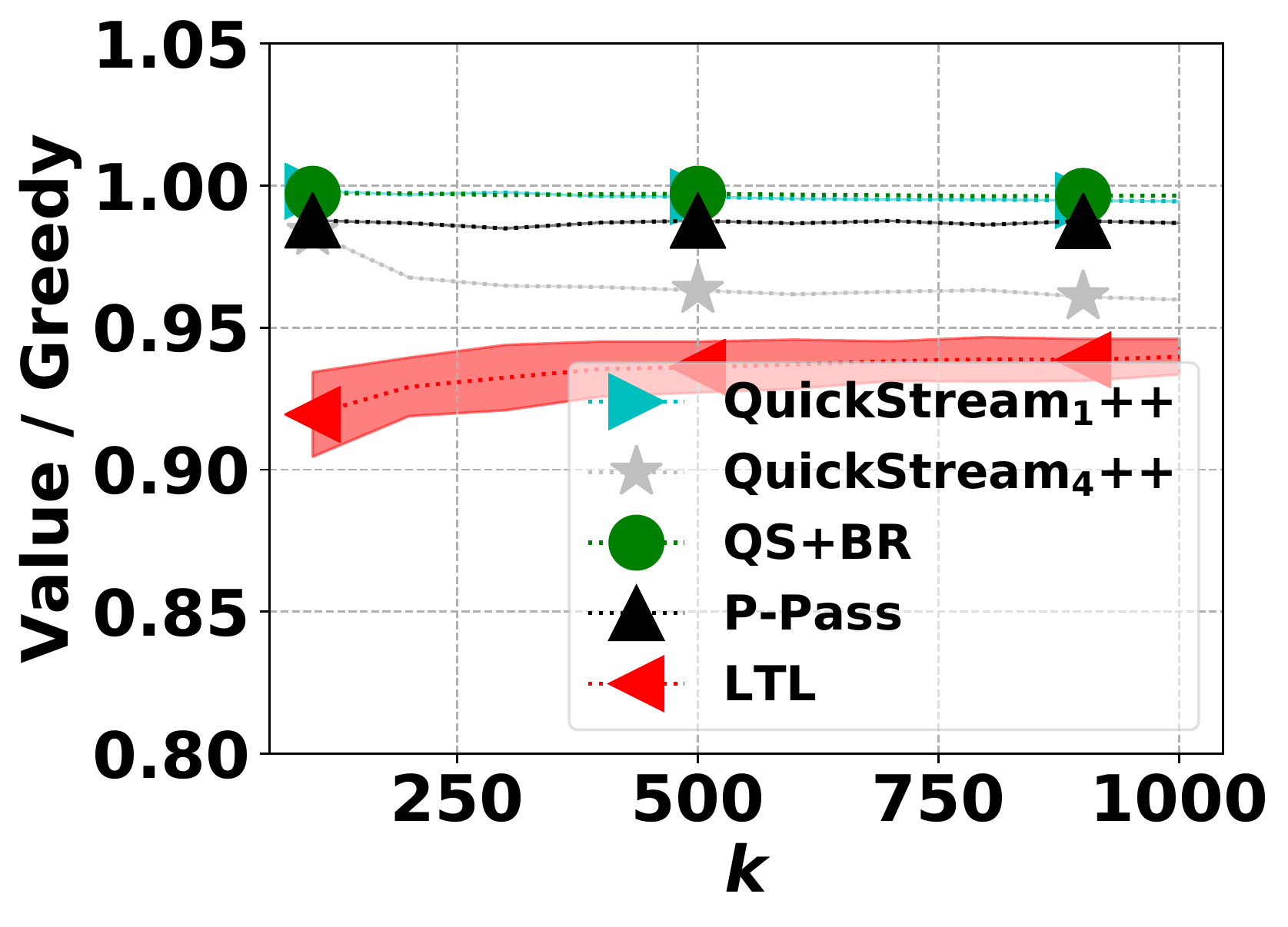}
   }
   \subfigure[]{ \label{fig:val-Facebookadd}
    \includegraphics[width=0.46\textwidth,height=0.20\textheight]{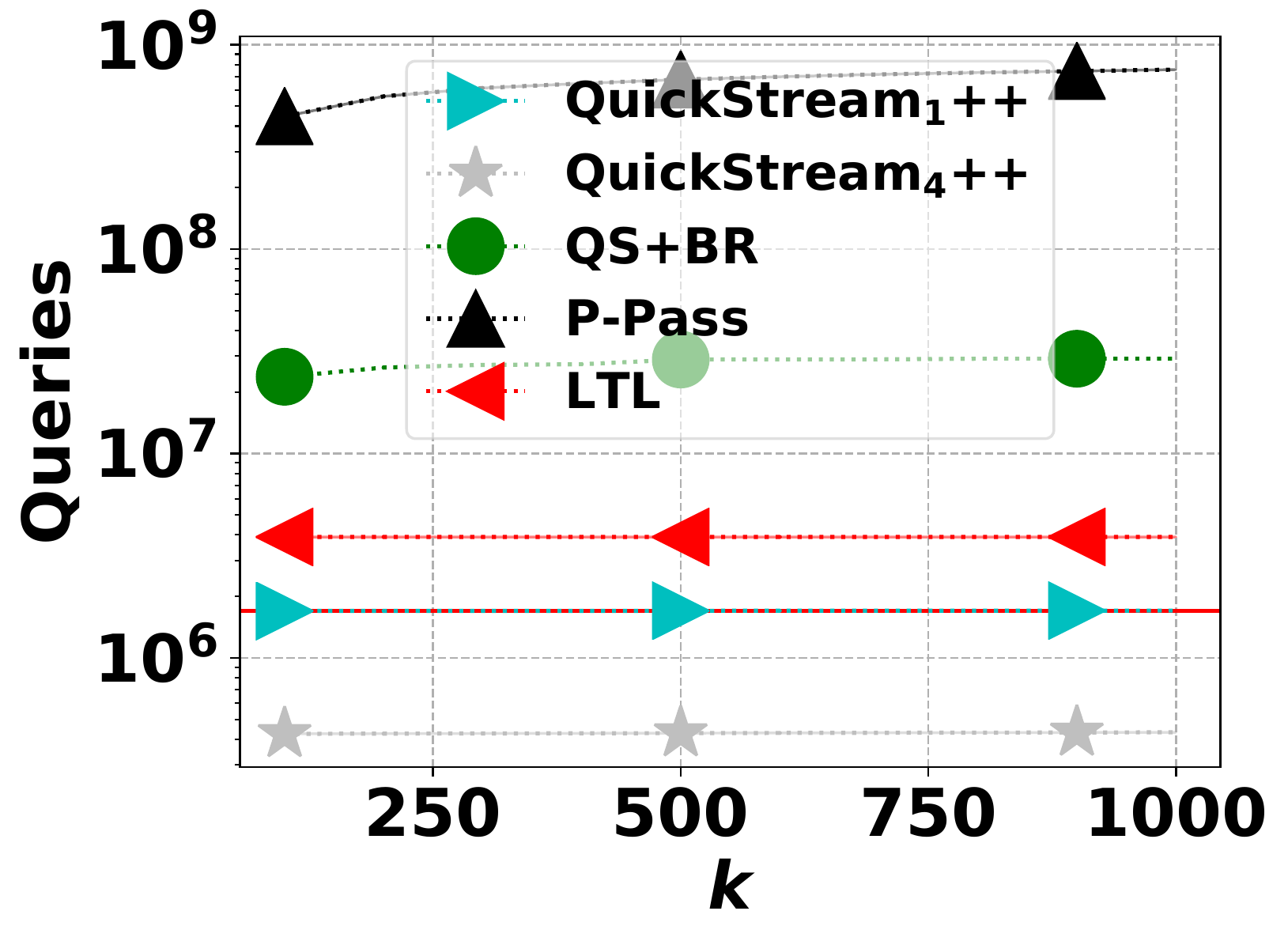}
   }


  \caption{Additional empirical results on the maxcover application on as-Skitter.} \label{fig:maxcov2}
\end{figure}
\begin{figure}[t]
  \subfigure[]{ \label{fig:val-Facebookadd}
    \includegraphics[width=0.30\textwidth,height=0.15\textheight]{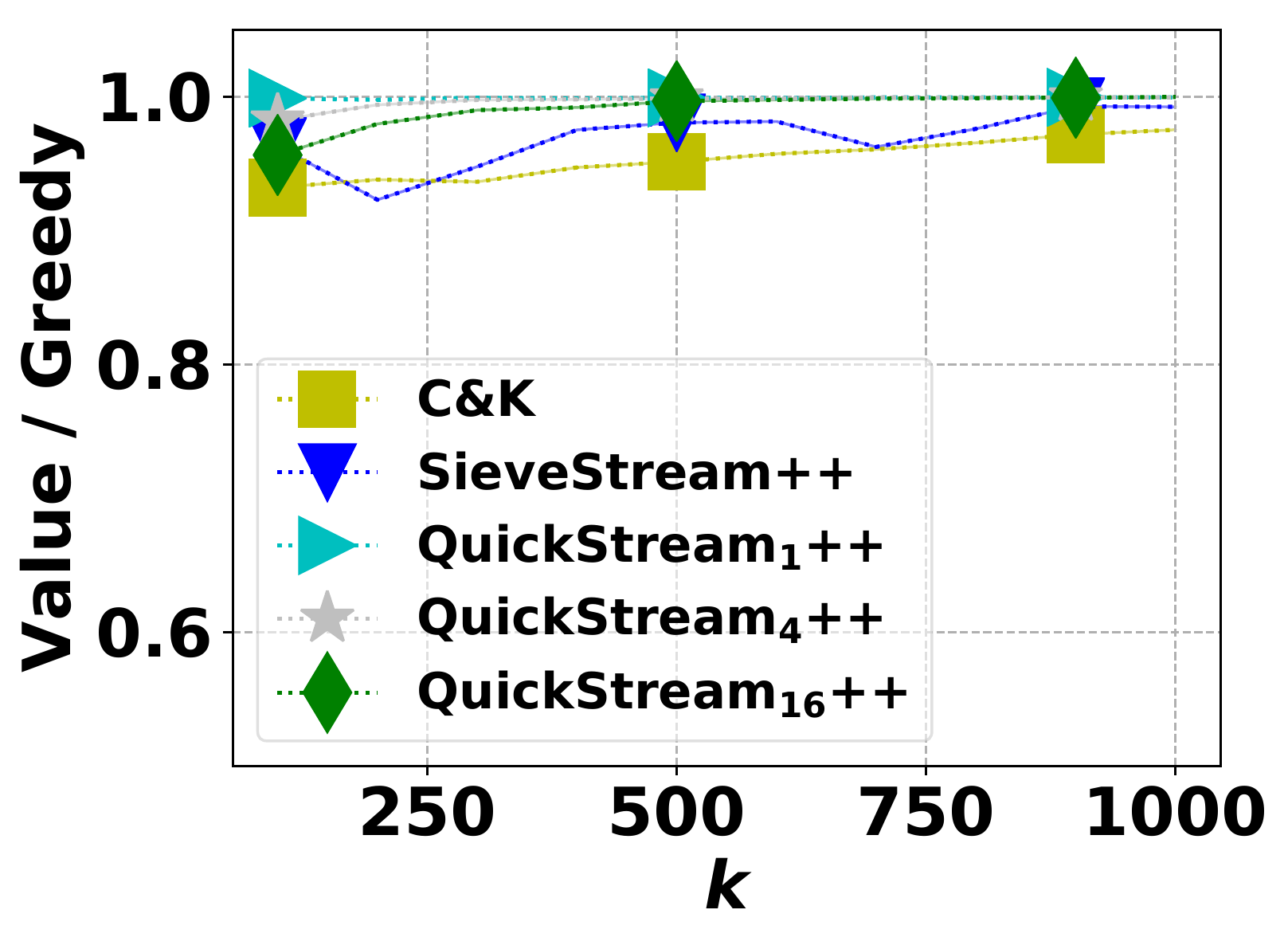}
   }
   \subfigure[]{ \label{fig:val-Facebookadd}
    \includegraphics[width=0.30\textwidth,height=0.15\textheight]{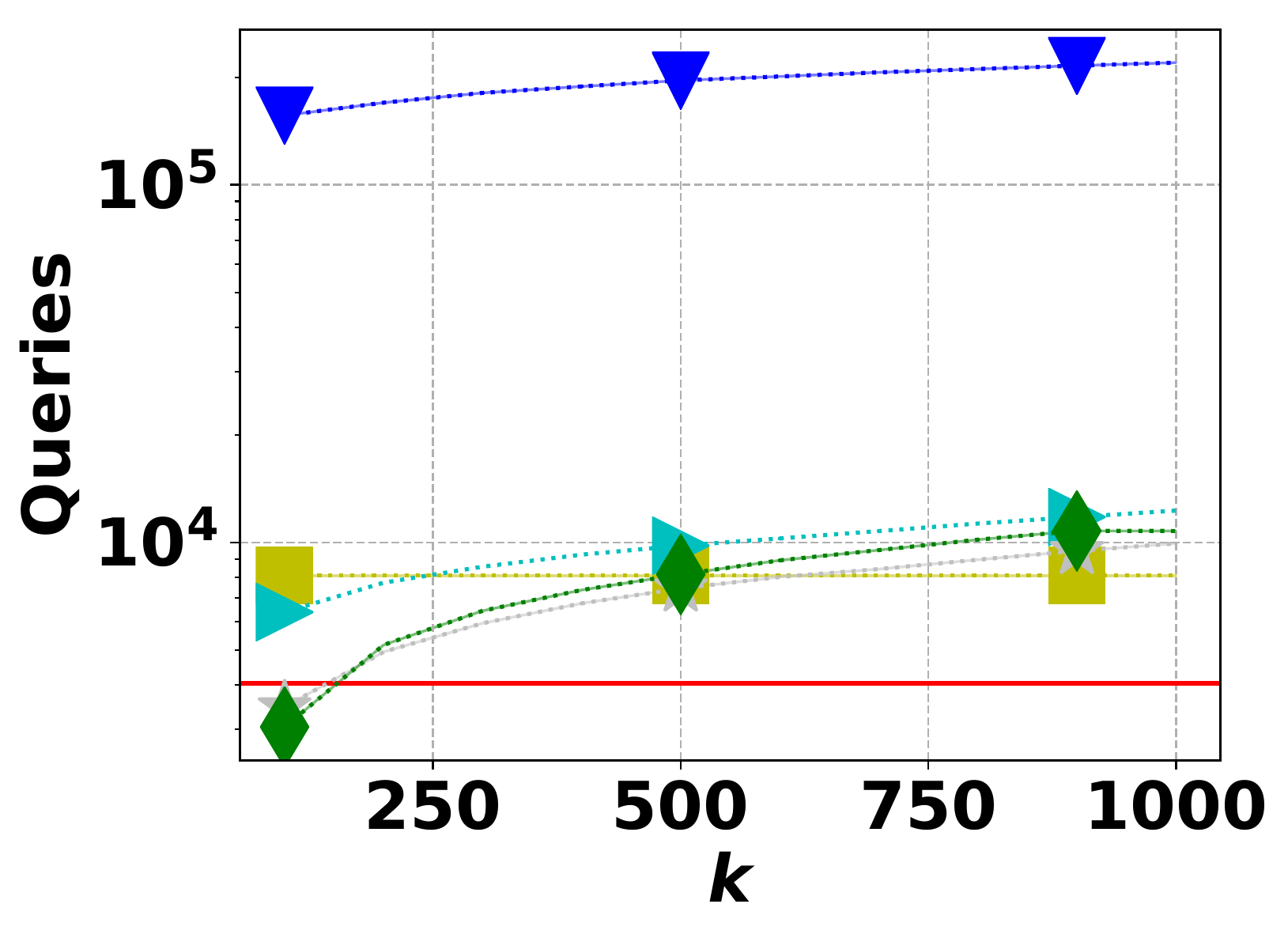}
   }
    \subfigure[]{ \label{fig:val-Facebookadd}
    \includegraphics[width=0.30\textwidth,height=0.15\textheight]{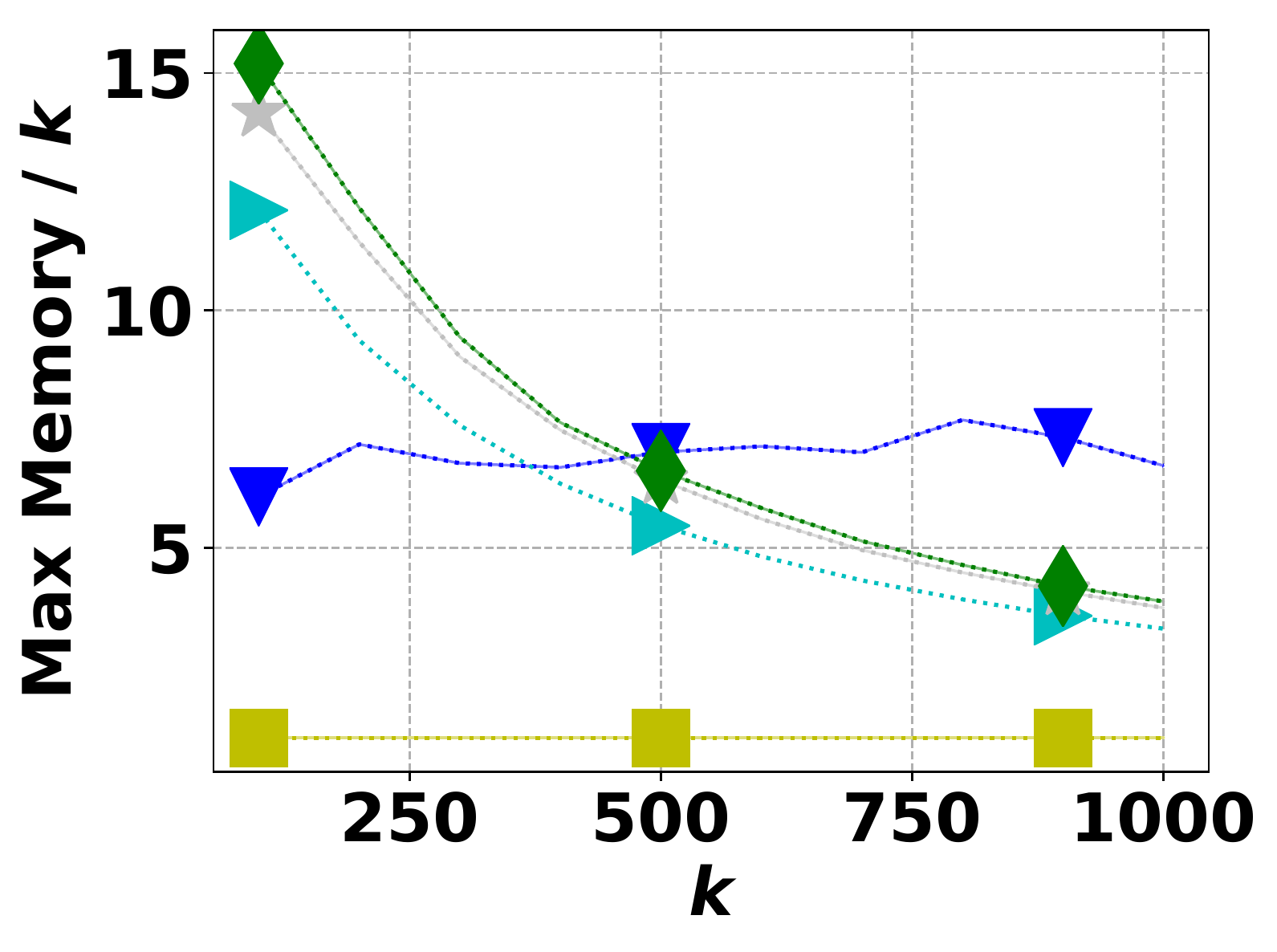}
   }

   \subfigure[]{ \label{fig:val-Facebookadd}
    \includegraphics[width=0.46\textwidth,height=0.20\textheight]{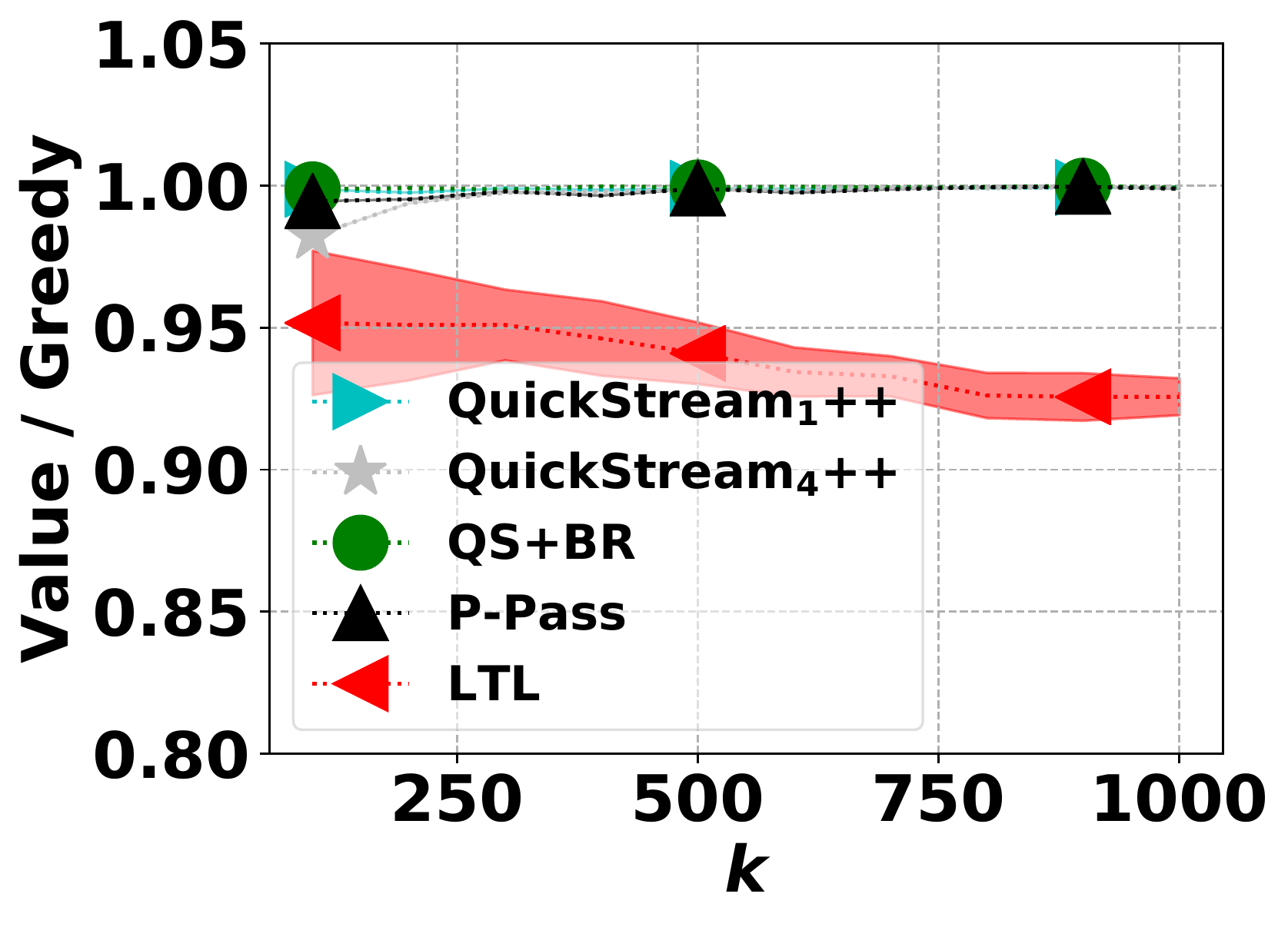}
   }
   \subfigure[]{ \label{fig:val-Facebookadd}
    \includegraphics[width=0.46\textwidth,height=0.20\textheight]{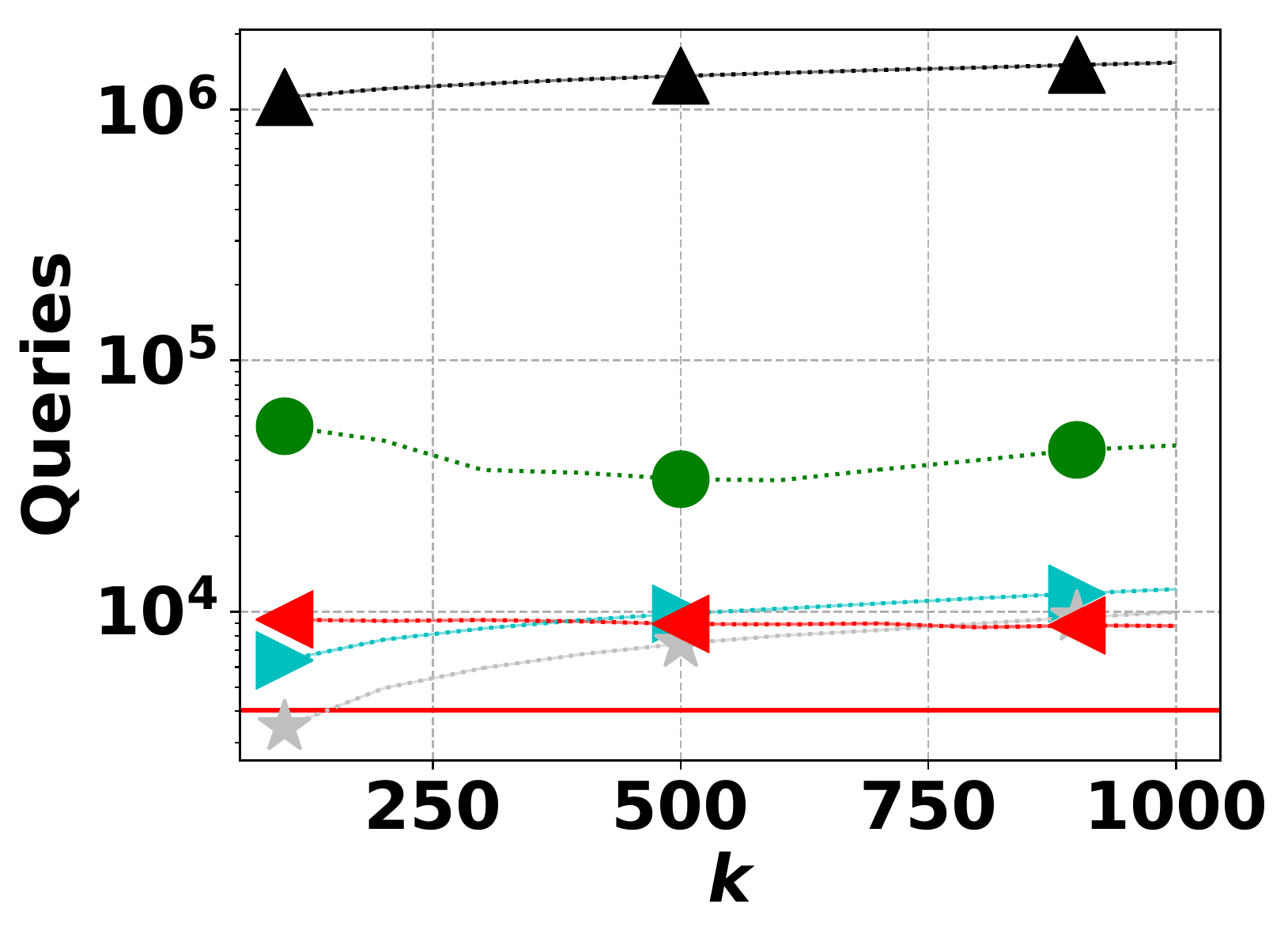}
   }


  \caption{Additional empirical results for the revenue maximization application on soc-Facebook.} \label{fig:revmax}
\end{figure}
\begin{figure}[t]
  \subfigure[]{ \label{fig:val-Astroadd}
    \includegraphics[width=0.30\textwidth,height=0.15\textheight]{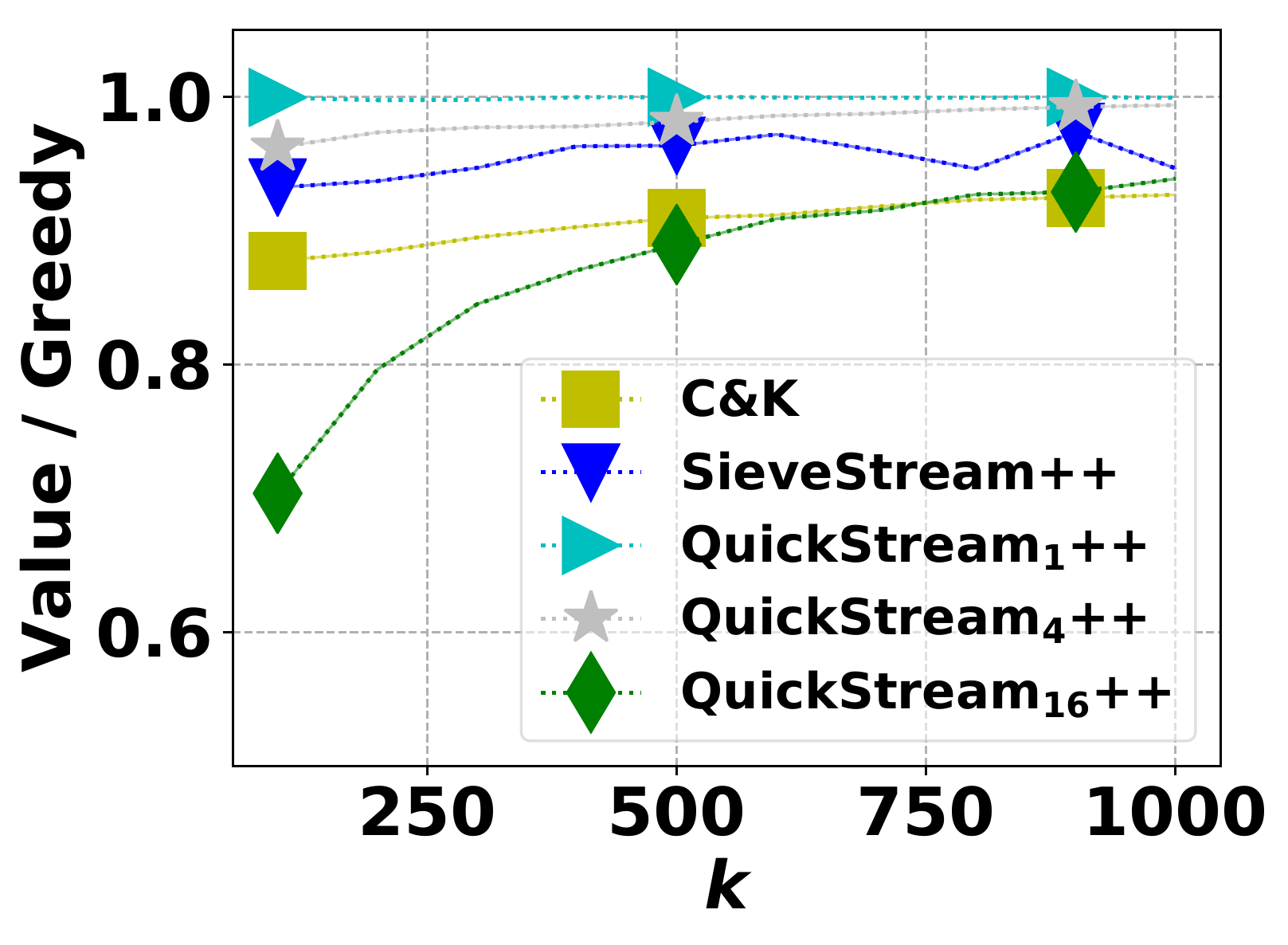}
   }
   \subfigure[]{ \label{fig:val-Astroadd}
    \includegraphics[width=0.30\textwidth,height=0.15\textheight]{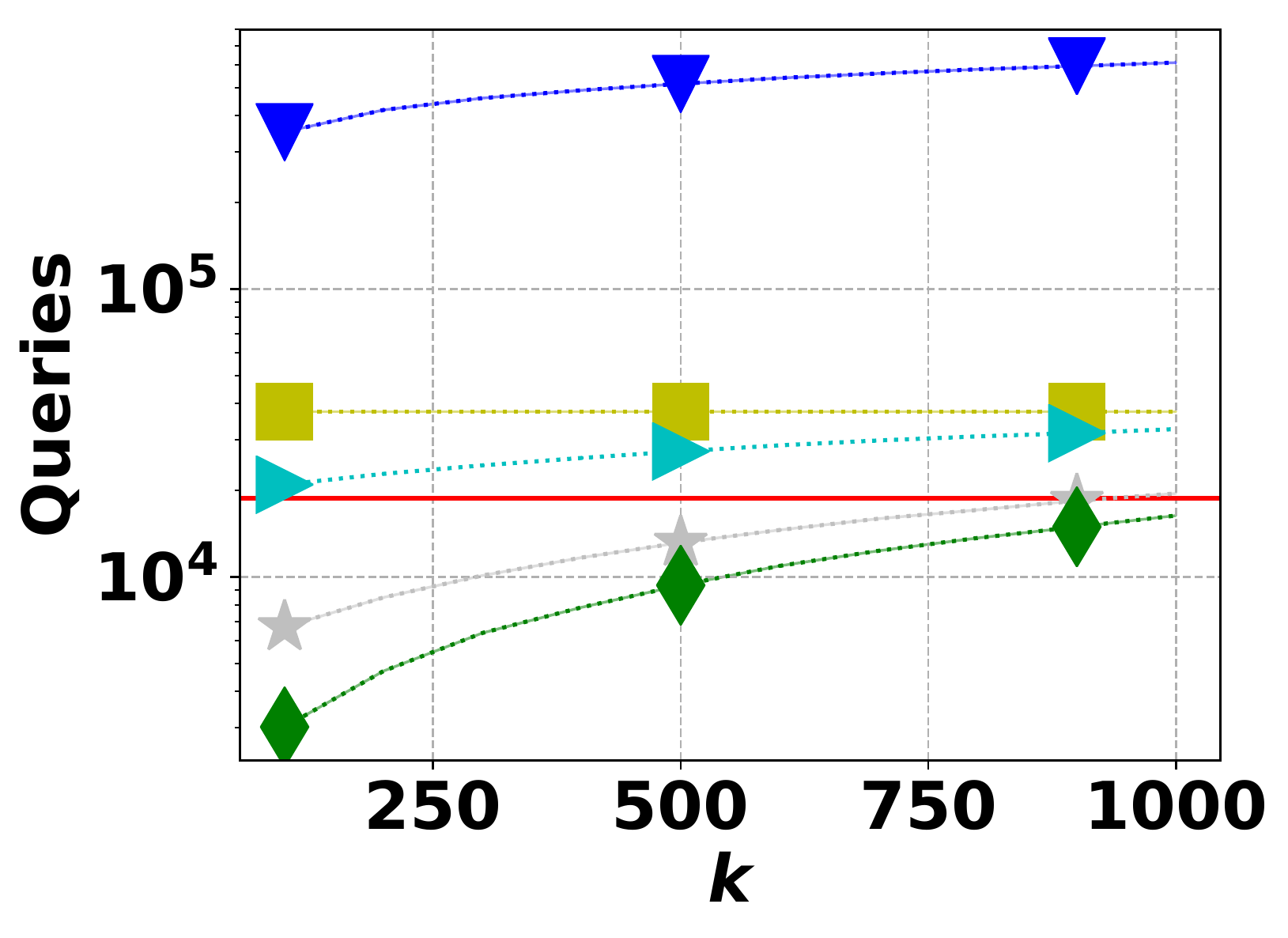}
   }
    \subfigure[]{ \label{fig:val-Astroadd}
    \includegraphics[width=0.30\textwidth,height=0.15\textheight]{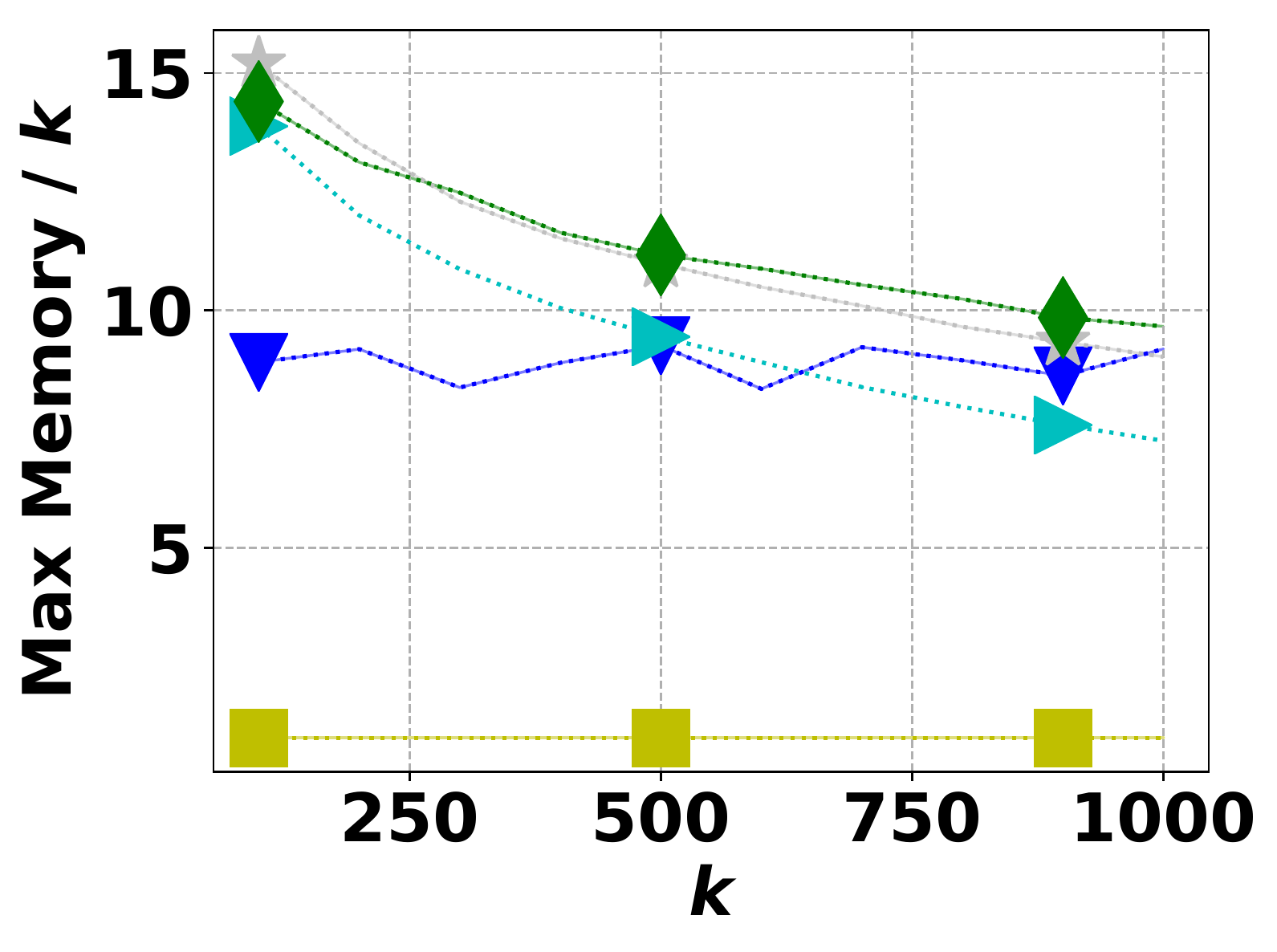}
   }

   \subfigure[]{ \label{fig:val-Astroadd}
    \includegraphics[width=0.46\textwidth,height=0.20\textheight]{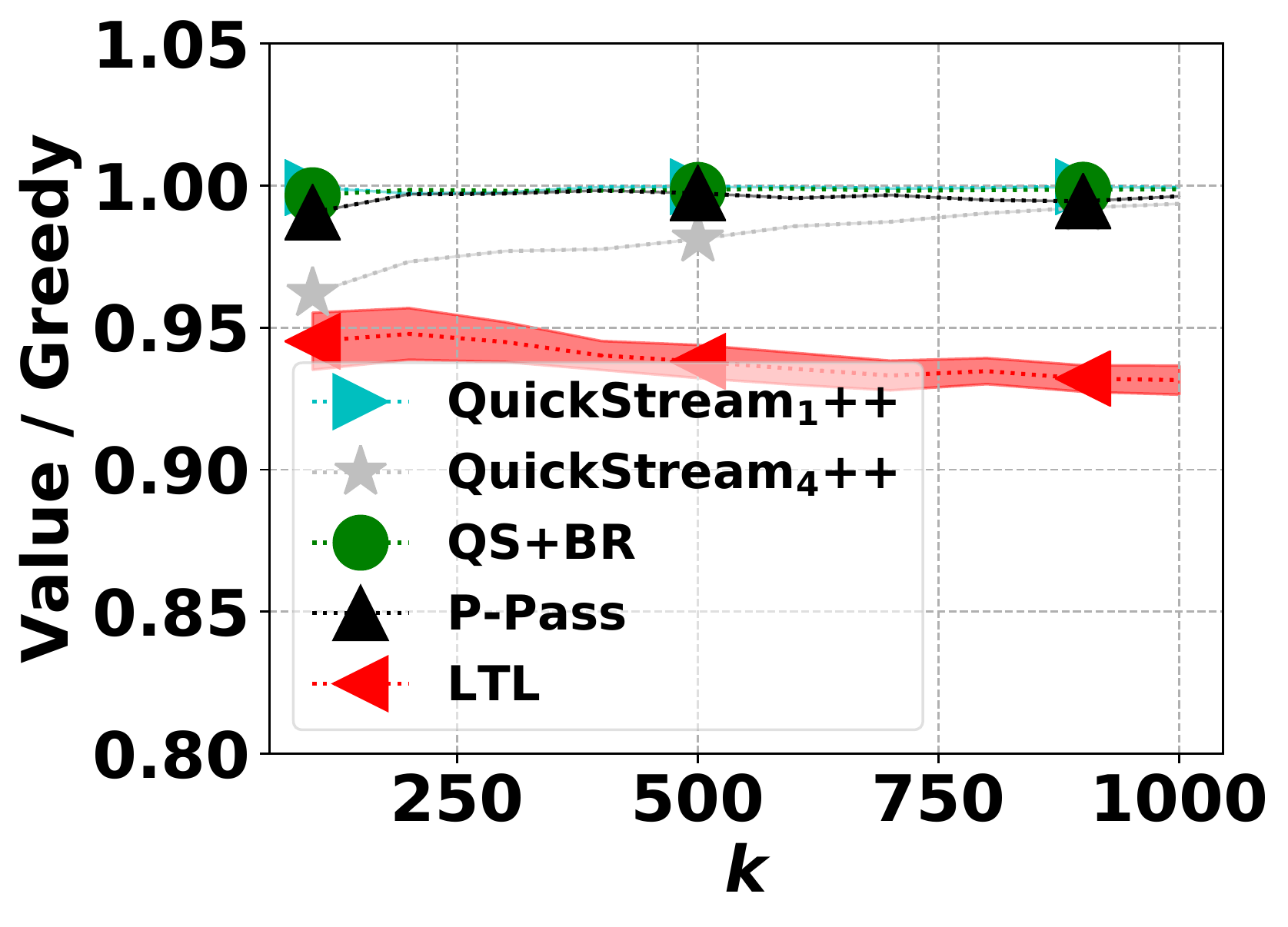}
   }
   \subfigure[]{ \label{fig:val-Astroadd}
    \includegraphics[width=0.46\textwidth,height=0.20\textheight]{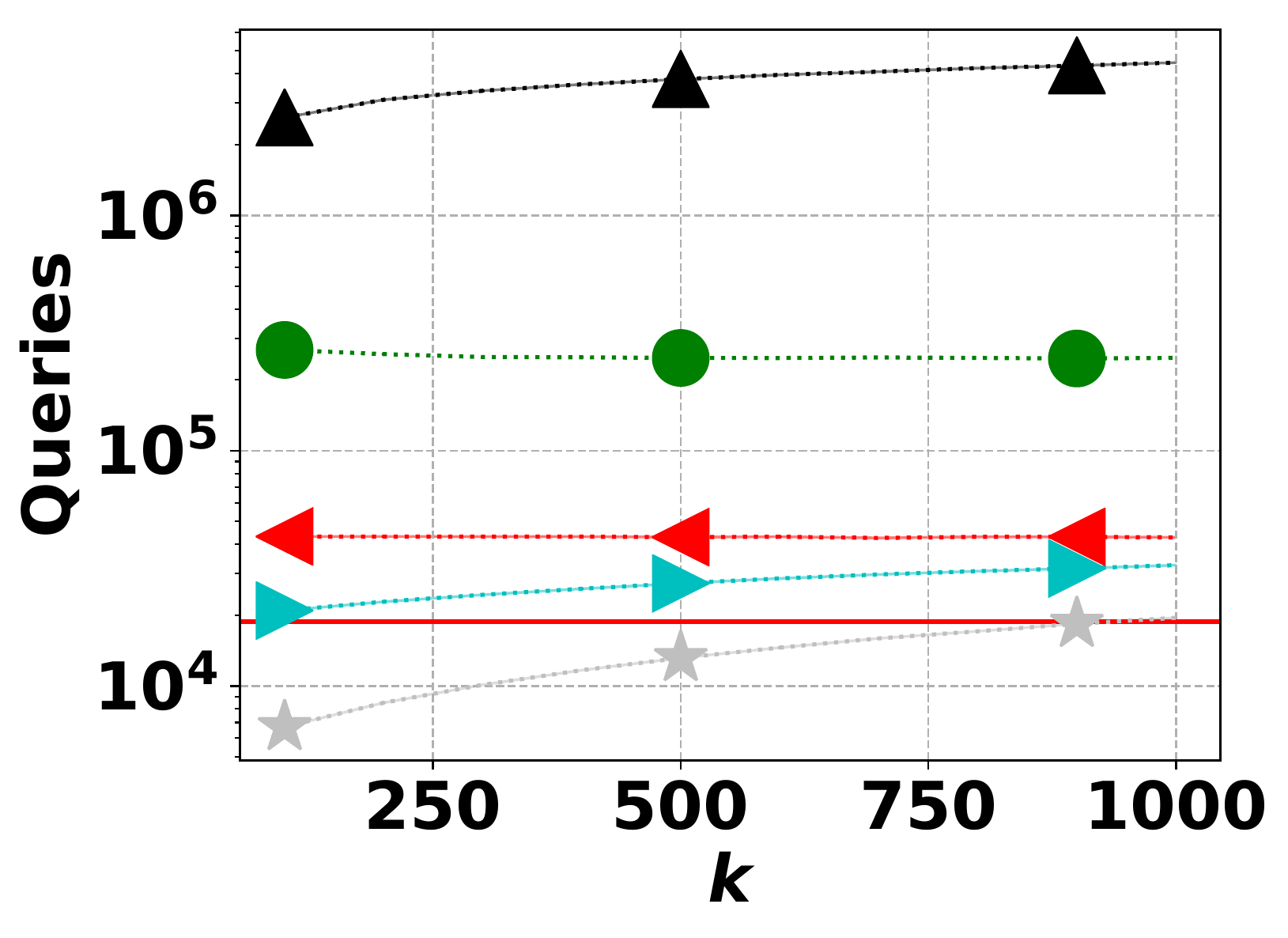}
   }


  \caption{Additional empirical results for the revenue maximization application on ca-AstroPh.} \label{fig:revmax2}
\end{figure}
Additional results from the maxcover application are shown in
Fig. \ref{fig:maxcov2}.
Results from the revenue maximization application are shown
in Figs. \ref{fig:revmax} and \ref{fig:revmax2}. These results are qualitatively similar
to the results from maximum coverage discussed in Section \ref{sec:exp}.

\end{document}